\def\ba{\begin{array}}
\def\ea{\end{array}}
\def\0{{\bf 0}}
\def\a{{\bf a}}
\def\b{{\bf b}}
\def\u{{\bf u}}
\def\U{{\mathbb U}}
\newcommand{\be}{\begin{equation}}
\newcommand{\ee}{\end{equation}}
\newcommand{\bea}{\begin{eqnarray}}
\newcommand{\eea}{\end{eqnarray}}
\newcommand{\bes}{\begin{equation*}}
\newcommand{\ees}{\end{equation*}}
\newcommand{\beas}{\begin{eqnarray*}}
\newcommand{\eeas}{\end{eqnarray*}}
\newtheorem*{rep@theorem}{\rep@title}
\newcommand{\newreptheorem}[2]{%
\newenvironment{rep#1}[1]{%
 \def\rep@title{#2 \ref{##1} (restated)}%
 \begin{rep@theorem}}%
 {\end{rep@theorem}}}
\newtheorem{thm}{Theorem}
\newtheorem*{thm*}{Theorem}
\newtheorem{con}[thm]{Conjecture}
\newtheorem{lem}[thm]{Lemma}
\newtheorem*{lem*}{Lemma}
\newtheorem{prop}[thm]{Proposition}
\newtheorem{rem}[thm]{Remark}
\title{ Joint moments of higher order derivatives of CUE characteristic polynomials I: asymptotic formulae }
\author[1]{Jonathan P. Keating\thanks{keating@maths.ox.ac.uk}}
\author[2]{Fei Wei\thanks{weif@mail.tsinghua.edu.cn
}}
\affil[1]{Mathematical Institute,
University of Oxford, Oxford OX2 6GG, UK}
\affil[2]{Yau Mathematical Sciences Center, Tsinghua University}
\date{\today}
\begin{document}

\maketitle

\begin{abstract}

We derive explicit asymptotic formulae for the joint moments of the $n_1$-th and $n_2$-th derivatives of the characteristic polynomials of CUE random matrices for any non-negative integers $n_1, n_2$. These formulae are expressed in terms of determinants whose entries involve modified Bessel functions of the first kind. We also express them in terms of two types of combinatorial sums. Similar results are obtained for the analogue of Hardy's $Z$-function. We use these formulae to formulate general conjectures for the joint moments of the $n_1$-th and $n_2$-th derivatives of the Riemann zeta-function and of Hardy's $Z$-function. Our conjectures are supported by comparison with results obtained previously in the number theory literature. 

\end{abstract}

{\bf Key words:} Joint moments, higher order derivatives, characteristic polynomials, random unitary matrices, Bessel functions, Riemann zeta function, Hardy's $Z$-function.

\section{Introduction}

Deep and remarkable connections exist between random matrix theory and number theory.  These originated in the work of Montgomery \cite{montgomery1973pair}, who conjectured that suitably normalised pairs of zeros of the Riemann zeta-function have a statistical distribution that coincides with Dyson's  \cite{dyson1962statistical} description of spacings between pairs of eigenvalues of random unitary matrices. This connection was strengthened by the work of Keating and Snaith \cite{keating2000random1,keating2000random2}, which used characteristic polynomials of random unitary matrices to model the Riemann zeta-function, leading to general conjectures, supported by results from number theory and by numerical simulations, concerning the value distribution and moments of the zeta function and other $L$-functions. These connections and conjectures have been widely studied in the past few years, e.g., see the survey papers \cite{keating2017random, akemann2011oxford, snaith2010riemann}.

Let $A\in \U(N)$ be taken from the Circular Unitary Ensemble (CUE) of random matrices, and let $\Lambda_{A}(s)$ be the characteristic polynomial of $A$, defined by
\[
\Lambda_A(s) = \prod_{n=1}^N (1-se^{-\i\theta_n}),
\]
where we set $\i^2=-1$ to avoid confusion with the index $i$, and  $e^{\i\theta_1}, \ldots,e^{\i\theta_{N}}$ are the eigenvalues of $A$. Define
\[
Z_{A}(s) := e^{-\pi \i N/2}e^{\i \sum_{n=1}^{N}\theta_{n}/2}
s^{- N/2}\Lambda_{A}(s),
\]
where for $s^{-N/2}$, when $N$ is an odd integer, we use the branch of the square root that is positive for $s$ real and positive.
This is the analogue of Hardy's $Z$-function. 
It satisfies the functional equation 
$Z_A(s) = (-1)^N Z_{A^*}(1/s)$, where $A^*$ is the conjugate transpose of $A$. This implies that $Z_A(e^{\i \theta})$ is real when $\theta$ is real. 

In this paper, we study the joint moments of  the $n_1$-th and $n_2$-th derivatives of the characteristic polynomial
\be
\label{intro:eq1}
\int_{\U(N)} |\Lambda_A^{(n_1)}(1)|^{2M} |\Lambda_A^{(n_2)}(1)|^{2k-2M}  dA_N
\ee
and
\be
\label{intro:eq2}
\int_{\U(N)} |Z_A^{(n_1)}(1)|^{2M} |Z_A^{(n_2)}(1)|^{2k-2M} dA_N, 
\ee
where $n_1,n_2 \in \mathbb{N}$ are non-negative integers denoting derivatives with respect to the variables in question, and $dA_N$ is the Haar measure on $\U(N)$.

When $n_1=1, n_2=0$, the joint moments have been studied extensively. We here list some results that are most relevant to ours in that case.
In \cite{hughes2001characteristic}, Hughes showed that the limits as $N\to\infty$ of $(\ref{intro:eq1})/N^{k^2+2M}$ and
$(\ref{intro:eq2})/N^{k^2+2M}$ 
exist when $k$ and $M$ are integers, and provided expressions for them that are analytic in $k$ for ${\rm Re}(k) > M - 1/2$.
In \cite{conrey2006moments}, Conrey, Rubinstein, and Snaith showed that when $M=k$ (both integers)
\[
(\ref{intro:eq1}) \sim (-1)^{\frac{k(k+1)}{2}} N^{k^2+2k}
\sum_{h=0}^k \binom{k}{h} 
\left( \frac{d}{dx} \right)^{k+h} \left( e^{-x} x^{-{k^2}/{2}} \det_{k\times k} \Big(I_{i+j-1}(2\sqrt{x}) \Big) \right)\Bigg|_{x=0},
\]
where $I_\alpha(x)$ is the modified Bessel function of the first kind, and the indices $i, j$ in the above determinant are from 1 to $k$ (we will assume this for all $k\times k$ determinants in this paper). They also gave a similar asymptotic formula for (\ref{intro:eq2}) when $M=k$.
In \cite{bailey2019mixed}, Bailey, et. al. provided asymptotic ($N\to\infty$) formulae for (\ref{intro:eq1}), (\ref{intro:eq2}) when $k\geq M \geq 0$ (both integers). In \cite{forrester2006boundary}, Forrester and Witte provided alternate asymptotic expressions for (\ref{intro:eq1}), 
(\ref{intro:eq2}) in terms of solutions of $\sigma$-Painlev\'{e} III$'$ equation. 
In \cite{winn2012derivative}, Winn gave an explicit expression for (\ref{intro:eq2}) in terms of certain sums over partitions and also gave an explicit expression for non-integer $M$ of the form $(2m-1)/2$. In \cite{basor2019representation}, Basor, et. al. established an exact representation of (\ref{intro:eq2}) in terms of a solution of $\sigma$-Painlev\'{e} V equation and showed how this could be used to derive the large-$N$ asymptotics.  In \cite{AssiotisKeatingWarren}, Assiotis, Keating and Warren succeeded in proving that the limits as $N\to\infty$ of $(\ref{intro:eq1})/N^{k^2+2M}$ and
$(\ref{intro:eq2})/N^{k^2+2M}$ exist for all real $k$ and $M$ in the appropriate ranges.  We refer to \cite{dehaye2008joint,assiotis2021distinguished,forrester2022joint,assiotis2022convergence} for other related results.

For one of $n_1, n_2$ greater than 1, to the best of our knowledge, there are no previous results for (\ref{intro:eq2}).
In \cite[section 10]{dehaye2008joint} Dehaye mentioned that his methods might be applied to (\ref{intro:eq1}) for some $n_1, n_2$, but he did not give details.
In \cite[Theorem 4.31]{barhoumi2020new}, Barhoumi-Andr\'{e}ani obtained an asymptotic formula for (\ref{intro:eq1}) and for the more general moments (\ref{general case of 1}). He obtains the leading coefficient in the form of a $(k-1)$-fold real integral of a certain function. In comparison, we provide two types of explicit formulae in the form of combinatorial sums, see Theorems \ref{intro:prop1} and \ref{intro:thm3}, which presumably  correspond to evaluation of his integral.
Our approach is quite different from Dehaye's and Barhoumi-Andr\'{e}ani's. Our starting point is the Conrey-Farmer-Keating-Rubinstein-Snaith formula (see Lemma \ref{11101}),
whereas the method mentioned in \cite{dehaye2008joint} is based on expressing (\ref{intro:eq1}) in the basis of Schur functions, and that set out in \cite{barhoumi2020new} is based on replacing Lemma \ref{11101} by certain Fourier representations (see (54), (55) and Section 1.5 in \cite{barhoumi2020new}) obtained by analysing the reproducing kernel on the space of all symmetric functions.
Barhoumi-Andr\'{e}ani argues that his equations (54) and (55) can be represented in the form of Hankel determinants in Section 5.1.3 in \cite{barhoumi2020new}. So it is likely that Barhoumi-Andr\'{e}ani's result has a link to certain Hankel determinants. In Theorem \ref{intro:thm1}, using our approach mentioned above, we give an explicit formula for the leading coefficient in terms of determinants of Hankel matrices whose columns are shifted by integers, and so which are not  Hankel determinants anymore. Moreover, this shift is important in building the connection between moments of higher-order derivatives of CUE characteristic polynomials and a solution of the $\sigma$-Painlev\'{e} III$'$ equation, which we establish in \cite{keating-fei}.

We remark that when defining $\widetilde{\Lambda}_A(\theta):=\Lambda_A(e^{\i \theta})$ and $\widetilde{Z}_A(\theta):=Z_A(e^{\i \theta})$, one may consider (\ref{intro:eq1}), (\ref{intro:eq2}) with $\Lambda^{(n)}_A(1)$ replaced by $\widetilde{\Lambda}^{(n)}_A(0)$ and $Z^{(n)}_A(1)$ replaced by $\widetilde{Z}^{(n)}_A(0)$ respectively. However, using Lemma \ref{11/10lemma1} (by taking appropriate orders of derivatives with respect to the shifts and letting the shifts be zero), it is not hard to prove that (\ref{intro:eq1}), (\ref{intro:eq2})  are indeed the main terms in the asymptotics of the corresponding moments of $\widetilde{\Lambda}^{(n)}_A(0)$ and $\widetilde{Z}^{(n)}_A(0)$. Intuitively, $\widetilde{\Lambda}^{(n)}_A(0)$ is a linear combination of $\Lambda^{(i)}_A(1)$ for $i=1,\ldots,n$, and $\Lambda^{(i)}_A(1)$ contributes $N^i$. So the main term in $|\widetilde{\Lambda}^{(n)}_A(0)|^{2h}$ indeed comes from $|\Lambda^{(n)}_A(1)|^{2h}$. Because of this, it suffices to focus on $\Lambda^{(n)}_A(1)$ and $Z^{(n)}_A(1)$ in (\ref{intro:eq1}), (\ref{intro:eq2}).

\subsection{Main results}

The main contribution of this paper is to give asymptotic formulae for (\ref{intro:eq1}), (\ref{intro:eq2}) for general integers $n_1, n_2$ and general integers $k\geq M\geq 0$. More precisely, we show that
\beas
(\ref{intro:eq1}) &=& a_{k,M}(n_1,n_2) N^{k^2+2Mn_1+2(k-M)n_2} + O(N^{k^2+2Mn_1+2(k-M)n_2-1}), \\
(\ref{intro:eq2}) &=& b_{k,M}(n_1,n_2) N^{k^2+2Mn_1+2(k-M)n_2} + O(N^{k^2+2Mn_1+2(k-M)n_2-1})
\eeas
for some $a_{k,M}(n_1,n_2), b_{k,M}(n_1,n_2)$ depending on $k,M,n_1,n_2$ with explicit formulae.
We then use these asymptotic formulae to conjecture the joint moments of higher order derivatives of the Riemann zeta-function on the critical line and of Hardy's $Z$-function. The conjectures are supported by results in the number theory literature.

The general results for (\ref{intro:eq1}), (\ref{intro:eq2}) are given explicitly in Theorems 
\ref {518mixmomentofLamda}-\ref{general 1128mixed moment}. The expressions are long, so here we illustrate them by presenting some non-trivial special cases. Theorems \ref{intro:thm1} and \ref{into:thm2} are asymptotic formulae for (\ref{intro:eq1}), (\ref{intro:eq2}) when $N\to\infty$ for $n_1=2,n_2=0$.  They involve derivatives of determinants of modified Bessel functions of the first kind shifted by integers. Theorems \ref{intro:prop1} and \ref{into:prop2}  respectively provide an explicit description of the coefficients appearing in the main terms in Theorems \ref{intro:thm1} and \ref{into:thm2}, express them in terms of partitions.

\begin{thm}
\label{intro:thm1}
Let $k\geq 1$, $0\leq M\leq k$ be integers. Then we have
\be
\int_{\U(N)} |\Lambda_A''(1)|^{2M} |\Lambda_A(1)|^{2k-2M} dA_N
= a_{k,M}(2,0) N^{k^2+4M} + O(N^{k^2+4M-1}),
\ee
where
\bea
a_{k,M}(2,0)
&=& (-1)^{\frac{k(k-1)}{2}} 
\sum_{l_1=0}^{M} \sum_{l_2=0}^{M} \sum_{t=0}^{2l_1}
\sum_{ \substack{ h_{1}+ \cdots+ h_{k} \\ =\, 2M-l_1-l_2  }} \binom{M}{l_1} \binom{M}{l_2}  \binom{2l_1}{t} \binom{2M-l_1-l_2}{h_{1}, \ldots, h_{k}} \nonumber \\
&&\times \, 
\left( \frac{d}{dx} \right)^{2l_2+t} \left(e^{-x}x^{-\frac{k^2}{2} - 2M+ l_1+l_2}
\det_{k\times k} \Big( I_{2h_{i}+i+j-1 } (2\sqrt{x}) \Big) \right) \Bigg|_{x=0} .
\label{intro:thm1-eq}
\eea
\end{thm}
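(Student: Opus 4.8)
The plan is to start from the Conrey–Farmer–Keating–Rubinstein–Snaith (CFKRS) formula recalled in Lemma~\ref{11101}, which represents the product $\Lambda_A(s)\overline{\Lambda_A(s)}$-type averages — more precisely, the shifted moment $\int_{\U(N)} \prod \Lambda_A(e^{-\alpha_i})\prod \overline{\Lambda_A(e^{-\beta_j})}\,dA_N$ — as a sum over swaps of a ratio of zeta-like factors, times $N$-dependent exponentials. The point is that $\Lambda_A''(1)$ can be recovered from $\Lambda_A(e^{-\alpha})$ by differentiating twice in $\alpha$ and setting $\alpha=0$ (with the usual chain-rule bookkeeping, since $\frac{d}{d\alpha}$ and $\frac{d}{ds}$ differ by factors of $s$ and lower-order derivatives; this is exactly where the inner sums over $l_1,l_2$ and the binomials $\binom{M}{l_1}\binom{M}{l_2}$ and $\binom{2l_1}{t}$ will come from). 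So first I would set up $2M$ of the $\alpha$-shifts and $2(k-M)$ of the $\beta$-shifts, apply $\partial_{\alpha_i}^2$ to $M$ of them and $\partial_{\beta_j}^2$ to $M$ of them (the $Z$ versus $\Lambda$ distinction, and the conjugates, pair the shifts into $M$ ``second-derivative'' pairs on each side), then specialise all shifts to $0$.

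Next I would carry out the standard CFKRS asymptotic analysis: expand the combinatorial sum over swaps, extract the leading power of $N$, and recognise the surviving contribution as a contour integral / residue expression. For the pure case $n_1=n_2=0$, $M=k$ this is exactly the Conrey–Rubinstein–Snaith determinantal formula quoted in the introduction; the modified Bessel functions $I_{i+j-1}(2\sqrt x)$ arise from expanding the Vandermonde-type product in the swap sum and resumming. Introducing the two extra derivatives in the $\alpha$-variables translates, after this resummation, into (i) an extra overall power of $x$ (the shift from $x^{-k^2/2}$ to $x^{-k^2/2-2M+l_1+l_2}$, tracking how many derivatives were ``absorbed'' versus ``left over''), (ii) extra $\frac{d}{dx}$ derivatives applied outside (the $(\frac{d}{dx})^{2l_2+t}$), and (iii) a shift of the Bessel indices $i+j-1 \mapsto 2h_i+i+j-1$, where the $h_i$ record how the differentiations distribute across the $k$ ``rows'' — this is the multinomial $\binom{2M-l_1-l_2}{h_1,\dots,h_k}$ with $\sum h_i = 2M-l_1-l_2$. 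So the heart of the argument is a careful bookkeeping lemma: differentiating the integrand $k$ times in a symmetric-function variable, before resumming the swap sum, is equivalent to inserting $\sum_i x\,\partial_{s}$-type operators, which on the Bessel side shift one index by $2$ per derivative landing in row $i$.

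The cleanest way to organise step~2 is probably to first prove the result at the level of the generating identity — i.e., establish an exact finite-$N$ (or formal) determinantal identity for the $\alpha,\beta$-shifted moment with the Bessel-index shifts as free parameters, valid before taking $N\to\infty$ — and only then apply $\partial_\alpha^2,\partial_\beta^2$ and let $N\to\infty$, using the already-stated Theorems~\ref{518mixmomentofLamda}--\ref{general 1128mixed moment} (which I am allowed to assume) as the black box that supplies the general shifted asymptotics. In fact, since Theorem~\ref{intro:thm1} is advertised as a special case of those general theorems, the real content of \emph{this} proof is the specialisation: take the general formula at $n_1=2$, $n_2=0$, and massage the combinatorial sum into the stated shape. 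That massaging is: identify the $l_1,l_2$ sums with the two factors of $\binom{M}{\cdot}$ coming from the two clusters of paired $\alpha$-shifts, the $\binom{2l_1}{t}$ and the split $2l_2+t$ of the external derivative order with the chain-rule conversion between $\partial_\alpha$ and $\partial_s$ at $s=1$, and the multinomial/Bessel-shift with the distribution of the remaining $2M-l_1-l_2$ raising operators among the $k$ rows.

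The main obstacle I expect is the chain-rule bookkeeping in the specialisation $s=1$, $\alpha\to 0$: one must convert derivatives with respect to the exponential shift variables $\alpha_i$ (in which the CFKRS formula is naturally smooth) into the genuine $s$-derivatives $\Lambda_A''(1)$ appearing in the statement, and then verify that the resulting tangle of Faà di Bruno / Leibniz coefficients collapses to precisely $\binom{M}{l_1}\binom{M}{l_2}\binom{2l_1}{t}$ with the external derivative order $2l_2+t$ and the $x$-power exponent $-k^2/2-2M+l_1+l_2$ — and that no lower-order derivative terms survive into the leading $N^{k^2+4M}$ coefficient (they should contribute only to the $O(N^{k^2+4M-1})$ error, but this needs to be checked against the power-counting in the general theorems). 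A secondary technical point is justifying the interchange of the $N\to\infty$ limit with the $\alpha,\beta$-differentiation, which should follow from uniformity of the error terms in the shifted asymptotics (again available from the general theorems, or from Lemma~\ref{11/10lemma1} as hinted in the introduction), but must be stated.
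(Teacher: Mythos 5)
Your overall architecture is exactly the paper's: Theorem \ref{intro:thm1} is obtained there precisely as you propose, by specialising the general Theorem \ref{518mixmomentofLamda} to $n_1=2$, $n_2=0$ (and switching the roles of $M$ and $k-M$), and the general theorem is itself proved along the lines you sketch --- the contour form of the CFKRS formula (Lemma \ref{11/10lemma1}), Fa\`a di Bruno for the derivatives in the shifts (Lemma \ref{lem1}), and a Bessel--determinant evaluation of the resulting $k$-fold contour integral (Proposition \ref{prop18}, which the paper proves by introducing auxiliary variables $t_1,\dots,t_n$ rather than via the Laplace transform of Conrey--Rubinstein--Snaith). Your picture of the index shift ($h_i$ ``raising operators'' landing in row $i$ shifting the Bessel index by $2h_i$) and of the lower-order $s$-derivatives being absorbed into the $O(N^{k^2+4M-1})$ error are both correct.

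One attribution in your plan is wrong and would mislead you if executed literally: the binomial $\binom{2l_1}{t}$ and the split $2l_2+t$ of the external derivative order do \emph{not} come from the chain-rule conversion between $\partial_\alpha$ and $\partial_s$ at $s=1$. That conversion contributes, at leading order, only an overall sign $(-1)^{Mn_1+(k-M)n_2}$ (trivial here since $n_1=2$, $n_2=0$), with everything else pushed into the error term. The factor $\binom{2l_1}{t}$ instead reflects an asymmetry between the two clusters of shifts: differentiating $\alpha_1,\dots,\alpha_k$ also hits $e^{-N\sum_{j\le k}\alpha_j}$ and produces powers of $\bigl(\sum_l w_l^{-1}-N\bigr)$, whereas differentiating $\alpha_{k+1},\dots,\alpha_{2k}$ produces bare powers of $\sum_l w_l^{-1}$. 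The latter must be re-expanded as $\bigl(\bigl(\sum_l w_l^{-1}-N\bigr)+N\bigr)^{2l_1}=\sum_t\binom{2l_1}{t}N^t\bigl(\sum_l w_l^{-1}-N\bigr)^{2l_1-t}$ before the Bessel--determinant evaluation (Proposition \ref{prop18}) can be applied; this re-centering is the sole source of $\binom{2l_1}{t}$, of the extra power $N^t$, and hence of the derivative order $2l_1+2l_2-(2l_1-t)=2l_2+t$ and the exponent $-k^2/2-2M+l_1+l_2$. With that correction, the specialisation collapses to the stated coefficient exactly as you describe.
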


\begin{thm}
\label{intro:prop1}
\beas
a_{k,M}(2,0)
&=& (-1)^{\frac{k(k-1)}{2}}
\sum_{l_1=0}^{M} \sum_{l_2=0}^{M} \sum_{t=0}^{2l_1} \sum_{ \substack{h_{10}+h_{1,1}+\cdots+h_{1k} = 2l_2+t \\ h_{21}+\cdots+h_{2k} = 2M-(l_1+l_2) } }  
\binom{M}{l_1} \binom{M}{l_2}  \binom{2l_1}{t} \binom{2l_2+t}{h_{10}, \ldots, h_{1k}}  \nonumber \\
&& 
\hspace{-2cm}
\times \, \binom{2M-(l_1+l_2)}{h_{21}, \ldots, h_{2k}}(-1)^{h_{10}}\left(\prod_{i=1}^k \frac{1}{(2k+\sum_{s=1}^{2} sh_{si}-i)!} \right)   \prod_{1\leq i<j\leq k} \left(\sum_{s=1}^{2} sh_{sj} - \sum_{s=1}^{2} sh_{si}-j+i \right) . 
\eeas
\end{thm}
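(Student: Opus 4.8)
The plan is to start from the formula for $a_{k,M}(2,0)$ already provided in Theorem \ref{intro:thm1} and to carry out the derivative-at-zero completely explicitly by expanding every analytic ingredient into power series in $x$. Concretely, I would substitute the Taylor series $e^{-x}=\sum_{h_{10}\geq 0}\frac{(-1)^{h_{10}}}{h_{10}!}x^{h_{10}}$, the power series of the modified Bessel function $I_\alpha(2\sqrt{x})=\sum_{m\geq 0}\frac{x^{m+\alpha/2}}{m!\,(m+\alpha)!}$ (valid for integer $\alpha\geq 0$; here $\alpha=2h_i+i+j-1$), and expand the $k\times k$ determinant $\det(I_{2h_i+i+j-1}(2\sqrt{x}))$ as a signed sum over permutations $\sigma\in S_k$. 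Writing $h_{1i}$ for the Bessel summation index in the $i$-th factor of the permutation term and $h_{2i}$ for the old shift $h_i$, each monomial of the resulting multiple series carries a power of $x$ equal to $h_{10}+\sum_{i=1}^{k}h_{1i}+\sum_{i=1}^{k}h_{2i}+\tfrac12\sum_{i=1}^{k}(i+\sigma(i)-1)-\tfrac{k^2}{2}-2M+l_1+l_2$. Using the identity $\sum_{i=1}^{k}(i+\sigma(i)-1)=k^2$, which holds for every permutation $\sigma$, together with the constraint $\sum_i h_{2i}=2M-l_1-l_2$, this exponent collapses to $h_{10}+\sum_{i=1}^{k}h_{1i}$, independently of $\sigma$.

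Since $\left(\tfrac{d}{dx}\right)^{2l_2+t}(\,\cdot\,)\big|_{x=0}$ returns $(2l_2+t)!$ times the coefficient of $x^{2l_2+t}$, only terms with $h_{10}+\sum_{i\geq 1}h_{1i}=2l_2+t$ survive; this produces exactly the constraint appearing in the statement, and the factor $(2l_2+t)!$ combines with $\frac{1}{h_{10}!\,\prod_i h_{1i}!}$ into the multinomial coefficient $\binom{2l_2+t}{h_{10},h_{11},\ldots,h_{1k}}$, while the $(-1)^{h_{10}}$ comes from the $e^{-x}$ expansion. For each fixed tuple $(h_{10},h_{1\bullet},h_{2\bullet})$, the factors $1/h_{1i}!$ are independent of $\sigma$ and factor out, leaving the signed sum over $\sigma$ equal to $\det_{k\times k}\!\left(\frac{1}{(c_i+i+j-1)!}\right)$ with $c_i:=h_{1i}+2h_{2i}=\sum_{s=1}^{2}s\,h_{si}$.

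It then remains to evaluate this determinant in closed form. Setting $a_i:=c_i+i$ and factoring $\frac{1}{(a_i+k-1)!}$ out of the $i$-th row, the $(i,j)$ entry becomes the monic polynomial $\prod_{\ell=j}^{k-1}(a_i+\ell)$ of degree $k-j$ in $a_i$, so column operations reduce the determinant to the Vandermonde $\det(a_i^{k-j})_{i,j=1}^{k}=\prod_{1\leq i<j\leq k}(a_i-a_j)$. This yields
\[
\det_{k\times k}\!\left(\frac{1}{(c_i+i+j-1)!}\right)=\left(\prod_{i=1}^{k}\frac{1}{(c_i+i+k-1)!}\right)\prod_{1\leq i<j\leq k}(c_i-c_j+i-j).
\]
Relabelling the summation indices $h_{si}\mapsto h_{s,k+1-i}$ (which leaves all constraints and all multinomial coefficients invariant) rewrites $\prod_i\frac{1}{(c_i+i+k-1)!}$ as $\prod_i\frac{1}{(2k+\sum_{s}s\,h_{si}-i)!}$ and $\prod_{i<j}(c_i-c_j+i-j)$ as $\prod_{i<j}\big(\sum_s s\,h_{sj}-\sum_s s\,h_{si}-j+i\big)$, which is precisely the claimed expression; the prefactors $(-1)^{k(k-1)/2}$, $\binom{M}{l_1}\binom{M}{l_2}\binom{2l_1}{t}$ and $\binom{2M-(l_1+l_2)}{h_{21},\ldots,h_{2k}}$ are carried over unchanged from Theorem \ref{intro:thm1}.

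The main obstacle is purely bookkeeping: keeping the several families of indices $(l_1,l_2,t,h_{10},h_{1i},h_{2i},\sigma)$ straight, verifying that the power of $x$ is genuinely independent of $\sigma$ so that the permutation sum assembles into a single determinant uniformly in the $h$'s, and getting the reindexing and the Vandermonde sign exactly right so that the resulting expression matches the stated one with no extra overall sign (consistent with the common prefactor $(-1)^{k(k-1)/2}$). There is no analytic difficulty, since for each fixed power of $x$ only finitely many terms of each series contribute, so all the interchanges of summation are legitimate.
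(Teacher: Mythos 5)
Your proposal is correct and follows essentially the same route as the paper: the paper obtains Theorem \ref{intro:prop1} by the identical computation (distributing the $x$-derivatives over $e^{-x}$ and the rows of the determinant via the Leibniz rule, which is the same as your Taylor-coefficient extraction, using that $\sum_i(i+\sigma(i)-1)=k^2$ makes the power of $x$ independent of $\sigma$, and then reindexing $i\mapsto k+1-i$, $j\mapsto k+1-j$), carried out in Proposition \ref{prop18} and specialized through Theorem \ref{518mixmomentofLamda}. The only difference is that where you rederive the evaluation $\det_{k\times k}\bigl(1/(c_i+i+j-1)!\bigr)$ by column reduction to a Vandermonde determinant, the paper simply invokes Lemma \ref{determinant}, quoted from \cite{conrey2006moments}; your derivation of it is correct.
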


\begin{thm}
\label{into:thm2}
Let $k\geq 1$, $0\leq M\leq k$ be integers. Then we have
\be
\int_{\U(N)} |Z_A''(1)|^{2M} |Z_A(1)|^{2k-2M} dA_N
=  b_{k,M}(2,0) N^{k^2+4M}  + O(N^{k^2+4M-1}),
\ee
where
\[
b_{k,M}(2,0) = 
(-1)^{\frac{k(k-1)}{2}} 
\hspace{-.8cm} 
\sum_{\substack{h+l_1+\cdots+l_k=2M \\ h\geq 0, l_1\geq 0, \ldots,l_k\geq 0}}
\binom{2M}{h,l_1,\ldots,l_k} \left( \frac{d}{dx} \right)^{2h}  \left( e^{-\frac{x}{2}} x^{h-2M-\frac{k^2}{2} } \det_{k\times k}\left(I_{2l_i+i+j-1}(2\sqrt{x}) \right) \right) \Bigg|_{x= 0}.
\]
\end{thm}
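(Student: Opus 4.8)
The plan is to follow the same line of attack as for Theorem~\ref{intro:thm1}, but carrying the extra unimodular prefactor in the definition of $Z_A$ through the computation. The starting point is the Conrey--Farmer--Keating--Rubinstein--Snaith-type integral formula (Lemma~\ref{11101}), which represents the shifted moments $\int_{\U(N)} \prod_j \Lambda_A(e^{-\alpha_j}) \prod_\ell \Lambda_{A^*}(e^{-\beta_\ell}) \, dA_N$ as a sum over ways of swapping $\alpha$'s with $\beta$'s of a product of exponential factors times a rational ``arithmetic factor''. Because $Z_A$ differs from $\Lambda_A$ only by the factor $e^{-\pi\i N/2} e^{\i \sum \theta_n/2} s^{-N/2}$, and $\sum \theta_n$ is a spectral quantity (it equals $\i \log \det A$ up to constants, hence contributes a shift-independent unimodular scalar that cancels in the modulus), the effect of passing from $\Lambda$ to $Z$ on the shifted moment is to multiply by $s^{-N\cdot(\text{number of factors})/2}$ evaluated appropriately --- i.e.\ to symmetrize the exponent of $s$. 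Concretely, I would first write down the $Z$-analogue of the shifted-moment formula, observing that the $s^{-N/2}$ factors convert the asymmetric exponential $e^{N\sum \alpha}$-type weight into the symmetric $e^{-\frac{N}{2}(\sum \alpha + \sum \beta)}$ weight; this is precisely what produces the $e^{-x/2}$ (rather than $e^{-x}$) and the altered power of $x$ in the stated formula.

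Next, I would extract the second derivative at $s=1$. Writing $\Lambda_A^{(n)}(1)$ in terms of $\partial_{\alpha}$ of $\Lambda_A(e^{-\alpha})$ at $\alpha = 0$ (the standard change from derivatives in $s$ to derivatives in the shift variable, which introduces only lower-order-in-$N$ corrections and fixed combinatorial constants --- this is where the multinomial $\binom{2M}{h,l_1,\dots,l_k}$ will come from), the $2M$-th power $|Z_A''(1)|^{2M}$ becomes a differential operator of total order $4M$ applied to the shifted moment and evaluated at zero shifts. Half of these derivative slots act on $\alpha$-variables and half on $\beta$-variables; splitting the $2M$ into an ``$h$'' part and the $l_i$ parts records how the $2M$ second-derivative operators distribute. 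Then I would take the large-$N$ limit: after the rescaling $\alpha_j \sim \xi_j/N$, the sum over swaps in the CFKRS formula converges to a contour-integral / determinantal expression, and the known identity expressing that limit as $\det_{k\times k}(I_{i+j-1}(2\sqrt x))$ (the Bessel determinant appearing already in the Conrey--Rubinstein--Snaith formula quoted in the introduction) gets modified, by the $l_i$ derivatives landing on individual columns, to $\det_{k\times k}(I_{2l_i+i+j-1}(2\sqrt x))$ --- the shift $2l_i$ in the Bessel index is exactly the signature of derivatives hitting the $i$-th column. The remaining $2h$ derivatives stay outside as $(d/dx)^{2h}$.

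The main obstacle, as in the $\Lambda$ case, is controlling the interchange of the large-$N$ limit with the derivative operators and justifying that the error term is genuinely $O(N^{k^2+4M-1})$: one must show the subleading terms in the asymptotic expansion of the shifted moment, after being hit by a differential operator of order $4M$ and specialized, do not conspire to contribute at the leading order, and that the combinatorial sum over swaps behaves uniformly. This is handled by the same machinery already developed for Theorems~\ref{518mixmomentofLamda}--\ref{general 1128mixed moment} (of which this is the case $n_1=2$, $n_2=0$), so here I would simply specialize those general results and then do the bookkeeping: identify which of the general sums collapse when $n_2=0$, verify the power of $x$ and the exponential factor $e^{-x/2}$ by matching the count of $s^{-N/2}$ factors ($2k$ of them, contributing $-k$ to the exponent after the scaling, together with the $-k^2/2$ from the Bessel/Vandermonde asymptotics and $-2M+h$ from the derivative specialization), and confirm the multinomial coefficient by counting the distributions of the $2M$ derivative operators. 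A useful consistency check is that setting $M=0$ must recover the Keating--Snaith formula for $\int_{\U(N)}|Z_A(1)|^{2k}dA_N$, i.e.\ the determinant $\det_{k\times k}(I_{i+j-1}(2\sqrt x))$ with prefactor $e^{-x/2}x^{-k^2/2}$ evaluated at $x=0$, and that $n_1=1$ (handled by differentiating once in the shift) should reproduce the Conrey--Rubinstein--Snaith / Bailey et al.\ formula for the first derivative.
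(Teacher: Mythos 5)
Your proposal follows essentially the same route as the paper: starting from the CFKRS shifted-moment formula (Lemma \ref{11101} and its contour-integral form in Lemma \ref{11/10lemma1}, where the $s^{-N/2}$ factors indeed produce the symmetric weight $e^{-\frac{N}{2}\sum_j\alpha_j}$ and hence $e^{-x/2}$), converting $|Z_A''(1)|^{2M}$ into shift-derivatives, and identifying the limit with the shifted Bessel determinant — and the paper does obtain Theorem \ref{into:thm2} exactly as you suggest, by specializing the general Theorem \ref{generalization of mixed moment} to $n_1=2$, $n_2=0$. The only imprecision is attributional: the multinomial $\binom{2M}{h,l_1,\dots,l_k}$ and the index shifts $2l_i$ arise from the Fa\`a di Bruno expansion of the shift-derivatives (Lemma \ref{lem1}) together with the multinomial expansion of $\left(\sum_l w_l^{-2}\right)^{2M-h}$ over the integration variables (Propositions \ref{prop1} and Lemma \ref{key-lem3}), rather than from the change of variable between $d/ds$ and $d/d\alpha$, but this does not affect the correctness of the outline.
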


\begin{thm}
\label{into:prop2}
\beas
b_{k,M}(2,0) 
&=& (-1)^{\frac{k(k-1)}{2}} \sum_{\substack{l_0+\cdots+l_k=2M \\ l_0\geq 0,\ldots,l_k\geq 0}}
\,\,\,
\sum_{\substack{m_0+\cdots+m_k=2l_0 \\ m_0\geq 0, \ldots,m_k\geq 0}}  (-2)^{-m_0} \binom{2M}{l_0,\ldots,l_k}
\binom{2l_0}{m_0,\ldots,m_k}
  \\
&& \times \, \Bigg(\prod_{i=1}^k \frac{1}{(2k-i+2l_i+m_i)!} \Bigg) \Bigg(\prod_{1\leq i<j\leq k} (2l_j-2l_i+m_j-m_i-j+i)\Bigg).
\eeas
\end{thm}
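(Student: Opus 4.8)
The plan is to reduce Theorem~\ref{into:prop2} to Theorem~\ref{into:thm2} by expanding everything into power series in $x$. Write the modified Bessel function as $I_\alpha(2\sqrt{x}) = x^{\alpha/2} g_\alpha(x)$, where $g_\alpha(x) := \sum_{m\geq 0} x^m/(m!\,(m+\alpha)!)$ is entire for each integer $\alpha\geq 0$. First I would pull $x^{l_i + (i-1)/2}$ out of the $i$-th row and $x^{j/2}$ out of the $j$-th column of $\det_{k\times k}\big(I_{2l_i+i+j-1}(2\sqrt{x})\big)$. Since $\sum_{i=1}^k l_i = 2M-h$ and $\sum_{i=1}^k (i-1)/2 + \sum_{j=1}^k j/2 = k^2/2$, the total power of $x$ extracted is exactly $x^{2M-h+k^2/2}$, which cancels the prefactor $x^{h-2M-k^2/2}$ appearing in Theorem~\ref{into:thm2}. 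Hence that theorem becomes
\[
b_{k,M}(2,0) = (-1)^{\frac{k(k-1)}{2}} \sum_{\substack{h+l_1+\cdots+l_k=2M \\ h,l_1,\ldots,l_k \geq 0}} \binom{2M}{h,l_1,\ldots,l_k} \left(\frac{d}{dx}\right)^{2h}\Big(e^{-x/2}\det_{k\times k}\big(g_{2l_i+i+j-1}(x)\big)\Big)\Big|_{x=0},
\]
where the function in parentheses is now a genuine power series in $x$.

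Next I would replace $\left(\frac{d}{dx}\right)^{2h}(\cdot)\big|_{x=0}$ by $(2h)!$ times the coefficient of $x^{2h}$. Expanding $e^{-x/2} = \sum_{m_0\geq 0}(-2)^{-m_0}x^{m_0}/m_0!$ and using multilinearity of the determinant in its rows together with $g_\alpha(x) = \sum_{m\geq 0}x^m/(m!\,(m+\alpha)!)$ yields $\det_{k\times k}\big(g_{2l_i+i+j-1}(x)\big) = \sum_{m_1,\ldots,m_k\geq 0}\big(\prod_{i=1}^k x^{m_i}/m_i!\big)\det_{k\times k}\big(1/(m_i+2l_i+i+j-1)!\big)$. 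Taking the coefficient of $x^{2h}$ forces $m_0+m_1+\cdots+m_k=2h$ and assembles the multinomial $\binom{2h}{m_0,\ldots,m_k}$, so on setting $l_0:=h$ the whole computation is reduced to evaluating the shifted Hankel-type determinant $\det_{1\leq i,j\leq k}\big(1/(c_i+j-1)!\big)$ with $c_i := m_i+2l_i+i$.

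For that evaluation I would establish the identity
\[
\det_{1\leq i,j\leq k}\Big(\frac{1}{(c_i+j-1)!}\Big) = (-1)^{\frac{k(k-1)}{2}}\Big(\prod_{i=1}^k \frac{1}{(c_i+k-1)!}\Big)\prod_{1\leq i<j\leq k}(c_j-c_i)
\]
by factoring $1/(c_i+k-1)!$ out of row $i$ --- which leaves in column $j$ the monic degree-$(k-j)$ polynomial $(c_i+j)(c_i+j+1)\cdots(c_i+k-1)$ --- then clearing lower-degree columns to reach $\det(c_i^{k-j})_{i,j}$ and reversing the column order (sign $(-1)^{k(k-1)/2}$) to obtain the Vandermonde determinant. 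Substituting this back, the sign $(-1)^{k(k-1)/2}$ from Theorem~\ref{into:thm2} and the one from this identity combine, and it then remains only to match the precise indexing in Theorem~\ref{into:prop2}. I would do this by reindexing $i\mapsto k+1-i$ in the summation variables $l_1,\ldots,l_k$ and $m_1,\ldots,m_k$ --- a bijection under which every multinomial coefficient is invariant --- which converts $c_i+k-1 = m_i+2l_i+i+k-1$ into $2k-i+2l_i+m_i$ and, after a further column-style reversal, converts $\prod_{i<j}(c_j-c_i)$ into $\prod_{i<j}(2l_j-2l_i+m_j-m_i-j+i)$ at the cost of one more factor $(-1)^{k(k-1)/2}$; the three parity factors collapse to the single overall $(-1)^{k(k-1)/2}$ of the statement. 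The one genuinely delicate point is this last step --- the Vandermonde reduction of the shifted Hankel determinant combined with careful bookkeeping of the three separate sign factors (the last of which is easy to overlook); everything else is routine power-series manipulation.
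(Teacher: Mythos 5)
Your argument is correct and follows essentially the same route as the paper: the paper obtains Theorem \ref{into:prop2} from the ``more explicit'' part of Proposition \ref{prop1} (specialised via Theorem \ref{generalization of mixed moment}), which likewise expands $e^{-x/2}$ and the rows of the Bessel determinant as power series, evaluates the resulting factorial determinant, and performs the reindexing $i\mapsto k+1-i$. The only cosmetic difference is that you prove the shifted-factorial determinant identity directly by a Vandermonde reduction, whereas the paper cites the equivalent Lemma \ref{determinant} from \cite{conrey2006moments} (where the sign $(-1)^{k(k-1)/2}$ is absorbed by writing the entries as $1/(2k+1+m_i-i-j)!$); your bookkeeping of the three sign factors and the cancellation of the $x$-powers checks out.
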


The following two theorems respectively provide alternative forms of the asymptotic formulae for (\ref{intro:eq1}), (\ref{intro:eq2}) when $n_1=n, n_2=0$. Compared to Theorems \ref{518mixmomentofLamda} and \ref{generalization of mixed moment} with $n_{1}=n$, $n_{2}=0$, they have a significant computational advantage when $k$ is small and $n$ is large. For more about the differences in the general case, we refer to Remark \ref{remark: the difference}.

\begin{thm}
\label{intro:thm3}
For $k\geq 1$, $0\leq M\leq k$ and $n\geq 0$ be integers, we have
\be
\int_{\U(N)} |\Lambda_A^{(n)} (1)|^{2M} |\Lambda_A(1)|^{2k-2M} dA_N=
a_{k,M}(n,0) N^{k^2+2Mn}  +O(N^{k^2+2Mn-1}),
\ee
where
\beas
a_{k,M}(n,0) &=&(-1)^{\frac{k(k-1)}{2}}
(n!)^{2M}
\sum_{ \substack{ \sum_{l=1}^k s_{l,q} \leq n \\ \sum_{l=1}^k h_{l,q} = n \\
q=1,\ldots,M} } 
(-1)^{\sum_{q=1}^{M}\sum_{l=1}^k s_{l,q}} 
\left(\prod_{q=1}^{M}\frac{1}{(n-\sum_{l=1}^{k}s_{l,q})!} \right) \\
&&
\hspace{-2cm}
\times \,
\left(\prod_{i=1}^k \frac{1}{(2k-i+\sum_{q=1}^{M} (s_{i,q}+h_{i,q}))!} \right) 
\prod_{1\leq i<j\leq k} \left(\sum_{q=1}^{M} (s_{j,q}+h_{j,q})-\sum_{q=1}^{M} (s_{i,q}+h_{i,q})-j+i \right).
\eeas
\end{thm}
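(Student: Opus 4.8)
Since $\Lambda_A$ is a polynomial in $s$ whose coefficients are symmetric functions of the eigenvalues of $A$, one has $\overline{\Lambda_A^{(j)}(1)}=\Lambda_{A^*}^{(j)}(1)$ for every $j\ge 0$, so that
\[
|\Lambda_A^{(n)}(1)|^{2M}\,|\Lambda_A(1)|^{2k-2M}
=\prod_{q=1}^{M}\Lambda_A^{(n)}(1)\,\prod_{q=1}^{M}\Lambda_{A^*}^{(n)}(1)\,\prod_{q=M+1}^{k}\Lambda_A(1)\,\prod_{q=M+1}^{k}\Lambda_{A^*}(1).
\]
Writing each derivative factor as a Taylor coefficient, $\Lambda_A^{(n)}(1)=n!\,[u^n]\,\Lambda_A(1+u)$ (and similarly for $\Lambda_{A^*}$), the integral in the theorem equals $(n!)^{2M}$ times a coefficient, in $2M$ auxiliary variables, of the shifted average $\int_{\U(N)}\prod_{q=1}^{k}\Lambda_A(x_q)\prod_{q=1}^{k}\Lambda_{A^*}(y_q)\,dA_N$ with $x_1,\dots,x_M,y_1,\dots,y_M$ close to $1$ and the remaining variables equal to $1$; this is the source of the prefactor $(n!)^{2M}$. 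Equivalently, the set-up is the $n_1=n$, $n_2=0$ specialisation of Theorem \ref{518mixmomentofLamda}, which follows by combining the Conrey--Farmer--Keating--Rubinstein--Snaith formula (Lemma \ref{11101}) with Lemma \ref{11/10lemma1}.

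The plan is to extract the leading term in $N$ of this quantity and to massage it into the stated combinatorial sum. I would insert the CFKRS formula for the shifted average, which after the substitution $x_q=e^{-\alpha_q}$, $y_q=e^{-\beta_q}$ is a finite sum over swaps of a subset of the $\alpha$'s with a subset of the $\beta$'s, each summand being $e^{-N(\cdots)}$ times a product of elementary factors. Extracting the coefficients in the auxiliary variables distributes, via the Leibniz rule, the $n$ differentiations in each derivative slot $q$ between the exponential $e^{-N\alpha_q}$ and the remaining elementary factors in all possible ways: the part interacting with the exponential should produce the vectors $(s_{1,q},\dots,s_{k,q})$ subject to $\sum_{\ell}s_{\ell,q}\le n$, the $n-\sum_{\ell}s_{\ell,q}$ leftover differentiations being responsible for the reciprocal factorial $1/(n-\sum_{\ell}s_{\ell,q})!$ and the sign $(-1)^{\sum_{\ell}s_{\ell,q}}$, while the part acting on the elementary factors, once these are expanded in power series, should produce the vectors $(h_{1,q},\dots,h_{k,q})$ with $\sum_{\ell}h_{\ell,q}=n$. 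In the limit the swap variables, rescaled by $N^{-1}$, assemble into modified Bessel functions $I_{\bullet}(2\sqrt{x})$ exactly as in Theorem \ref{518mixmomentofLamda}; only the no-swap term together with the top-order pieces of the Leibniz expansions survive at order $N^{k^2+2Mn}$, and bounding all other swaps and all lower-order derivative contributions by $O(N^{k^2+2Mn-1})$ is a power-counting exercise.

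Each surviving configuration then contributes a $k\times k$ determinant of modified Bessel functions differentiated at $x=0$; expanding these Bessel functions and the factor $e^{-x}$ in power series and using that the total $x$-degree of each permutation term is independent of the permutation, the derivative at $x=0$ collapses the determinant to one whose $(i,j)$-entry is the reciprocal factorial $1/(z_i+j-1)!$, where $z_i=c_i+i$ and $c_i:=\sum_{q=1}^{M}(s_{i,q}+h_{i,q})$ collects the shifts from all $M$ derivative slots. I would evaluate this by the dual Vandermonde identity
\[
\det_{k\times k}\Bigl(\frac{1}{(z_i+j-1)!}\Bigr)=\frac{\prod_{1\le i<j\le k}(z_i-z_j)}{\prod_{i=1}^{k}(z_i+k-1)!}.
\]
Reversing the order of the rows converts $z_i+k-1$ into $2k-i+c_i$, turns the Vandermonde product into $\prod_{i<j}(c_j-c_i-j+i)$, and produces the global sign $(-1)^{k(k-1)/2}$ --- precisely the last two products and the prefactor in the statement. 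Summing, over all configurations of the two families $(s_{\cdot,q})_{q}$ and $(h_{\cdot,q})_{q}$, the multinomial weights generated along the way then yields the claimed formula.

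The hard part is the bookkeeping in the middle step: one must verify that, after Leibniz-expanding the $2M$ derivative factors and power-series expanding the elementary factors, the CFKRS sum reorganises precisely into the compact double-family sum above --- with the exact constraints $\sum_{\ell}s_{\ell,q}\le n$ and $\sum_{\ell}h_{\ell,q}=n$, the correct signs, and no spurious extra summation variables --- and that the matrix that appears really is of dual-Vandermonde type, so that the displayed evaluation and the attendant row reversal (with its sign $(-1)^{k(k-1)/2}$) apply verbatim. The structural fact that makes this go through is that the total $x$-degree contributed by each permutation in the Leibniz expansion of a determinant $\det_{k\times k}(I_{2\ell_i+i+j-1}(2\sqrt{x}))$ is independent of the permutation, since $\sum_i\sigma(i)=\sum_i i$; this decouples the permutation sum from the Bessel power-series indices. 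Establishing this decoupling and tracking the sign it carries is the technical core of the computation.
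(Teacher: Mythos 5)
Your skeleton --- CFKRS with shifts, derivative bookkeeping in each slot, then the factorial determinant identity $\det_{k\times k}\bigl(1/(z_i+j-1)!\bigr)$ with a row reversal producing $(-1)^{k(k-1)/2}$ (this is Lemma \ref{determinant} here) --- matches the paper's, and the endgame is right. But the middle step, which you yourself flag as ``the hard part,'' is both unexecuted and misdescribed. The origin of the two families is wrong: you attribute $(s_{\cdot,q})$ and $(h_{\cdot,q})$ to a Leibniz split of the $n$ differentiations \emph{within a single slot} $q$ between the exponential and the elementary factors. As described, that would produce a single composition of $n$ per slot (roughly $j_0+\sum_{\ell}s_{\ell,q}+\sum_{\ell}h_{\ell,q}=n$), not two independent families with $\sum_{\ell}s_{\ell,q}\le n$ and $\sum_{\ell}h_{\ell,q}=n$. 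In fact each power of $|\Lambda_A^{(n)}(1)|^2$ gives \emph{two} derivative slots, one in $\alpha_q$ and one in $\alpha_{k+q}$, and in the representation of Lemma \ref{11/10lemma1} only $\alpha_1,\dots,\alpha_k$ appear in the exponential $e^{-N\sum_{j\le k}\alpha_j}$. The slot $\alpha_q$ yields, after Leibniz, $\sum_{\sum_\ell s_{\ell,q}\le n}\frac{n!}{(n-\sum_\ell s_{\ell,q})!}(-N)^{n-\sum_\ell s_{\ell,q}}\prod_\ell w_\ell^{-s_{\ell,q}}$ (the deficit counting derivatives that land on the exponential), while the conjugate slot $\alpha_{k+q}$ sees no exponential and forces $\sum_\ell h_{\ell,q}=n$ with weight $n!$. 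This is exactly the content of Proposition \ref{prelimiary:prop1}, the paper's actual combinatorial engine (reached there via Lemma \ref{lem1} and a generating-function identity); without it, or the corrected per-slot computation, the stated double-family sum does not emerge.

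The second gap is the error analysis. You propose to work with the sum over swaps (Lemma \ref{11101}) and to discard all non-identity swap terms by power counting, while simultaneously saying the swap variables assemble into Bessel functions; these cannot both hold, and the first claim is false: individual swap terms are singular as the shifts coalesce (the $z$-factors blow up), and the leading asymptotics arises only from cancellations among them, so the no-swap term cannot be isolated. The paper circumvents this by passing to the $k$-fold contour integral of Lemma \ref{11/10lemma1}, where the only approximations are $z(x)=x^{-1}+O(1)$ and the expansion of $e^{-N\alpha_j}$ for $|\alpha_j|\le 1/N$, giving the clean $O(N^{k^2+2Mn-1})$ error; the remaining $w$-integral is then evaluated exactly by Proposition \ref{lemmaonthe integral} and Lemma \ref{determinant}. (In this route no Bessel functions appear in Theorem \ref{intro:thm3} at all; they belong to the companion formula, Theorem \ref{518mixmomentofLamda}.)
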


\begin{thm}
\label{intro:thm4}
Let $k\geq1$ and $n\geq 0$ be integers.
We have
\be
\int_{\U(N)} |Z_A^{(n)}(1)|^{2M} |Z_A (1)|^{2k-2M}  dA_N = 
b_{k,M}(n,0)
N^{k^2+2Mn} +O(N^{k^2+2Mn-1}),
\ee
where 
\beas
b_{k,M}(n,0) & = &
(-1)^{nM+\frac{k(k-1)}{2}}
N^{k^2+2Mn} (n!)^{2M}
\sum_{ \substack{s_{l,q}\geq 0\\ \sum_{l=1}^k s_{l,q} \leq n \\
q=1,\ldots,2M}
} (-1/2)^{2Mn-\sum_{q=1}^{2M}\sum_{l=1}^k s_{l,q}} \\
&& 
\hspace{-1.4cm}
\times \, \left( \prod_{q=1}^{2M}\frac{1}{(n-\sum_{l=1}^{k}s_{l,q})!} \right) \left(\prod_{i=1}^k \frac{1}{(2k-i+\sum_{q=1}^{2M} s_{i,q})!} \right) \prod_{1\leq i<j\leq k} \left( \sum_{q=1}^{2M} s_{j,q}- \sum_{q=1}^{2M}s_{i,q}-j+i\right).
\eeas
\end{thm}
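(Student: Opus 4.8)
The plan is to derive Theorem~\ref{intro:thm4} as a specialization of the general $Z$-function formula (Theorem~\ref{generalization of mixed moment} with $n_1 = n$, $n_2 = 0$), or equivalently to repeat the argument there in the simplified setting where the second derivative order vanishes. The starting point, as announced in the introduction, is the Conrey--Farmer--Keating--Rubinstein--Snaith formula (Lemma~\ref{11101}) together with the shift-differentiation lemmas (Lemma~\ref{11/10lemma1}, Lemma~\ref{11/10lemma2}): one writes the joint moment $\int_{\U(N)} |Z_A^{(n)}(1)|^{2M}|Z_A(1)|^{2k-2M}\,dA_N$ as an appropriate mixed partial derivative, in $2M$ of the $2k$ shift variables, of the CFKRS shifted-moment integral, then sets all shifts to zero. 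The $Z$-function normalization contributes the exponential factor $e^{-x/2}$ and the half-integer shift $-1/2$ bookkeeping that ultimately produces the $(-1/2)^{2Mn - \sum s_{l,q}}$ and $(-1)^{nM}$ signs in the statement.

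The key steps, in order, are as follows. First, I would set up $Z_A^{(n)}(1)$ in terms of the derivatives of $\Lambda_A$ and the prefactor $e^{-\pi\i N/2}e^{\i\sum\theta_n/2}s^{-N/2}$, expand $|Z_A^{(n)}(1)|^{2M} = (Z_A^{(n)}(1))^M \overline{(Z_A^{(n)}(1))}^M$, and observe that $Z_A$ being real on the unit circle lets us treat all $2M$ copies symmetrically --- this is where the $2M$-fold sum over $q = 1,\ldots,2M$ originates rather than an $M$-fold one. Second, I would invoke the large-$N$ asymptotics of the CFKRS formula, extracting the leading determinantal term; this is where the determinant $\det_{k\times k}(I_{2l_i+i+j-1}(2\sqrt{x}))$ with integer-shifted rows appears, exactly as in Theorem~\ref{into:thm2} but with the shift pattern dictated by $n$ rather than $2$. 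Third, I would apply the Leibniz rule to the $2M$ derivatives $\left(\frac{d}{dx}\right)$-type operations acting on the product $e^{-x/2} x^{h - 2M - k^2/2} \det(\cdots)$, distributing derivatives among the exponential, the power of $x$, and the rows of the determinant; evaluating at $x = 0$ forces the multinomial constraints $\sum_{l=1}^k s_{l,q} \le n$ and picks out the Taylor coefficients of the Bessel functions, which are the reciprocal factorials $\frac{1}{(2k - i + \sum_q s_{i,q})!}$. Fourth, the Vandermonde-type product $\prod_{1\le i<j\le k}(\sum_q s_{j,q} - \sum_q s_{i,q} - j + i)$ emerges from evaluating the determinant of the matrix of leading Taylor coefficients --- a determinant of the form $\det(1/(a_i + j - 1)!)$ which is a classical Cauchy/Vandermonde-type evaluation (this is presumably packaged in an auxiliary lemma earlier in the paper, analogous to how Theorem~\ref{into:prop2} is extracted from Theorem~\ref{into:thm2}).

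The main obstacle I anticipate is the careful bookkeeping of the $Z$-function prefactor through the differentiation and the large-$N$ limit: one must track how the factor $e^{\i\sum\theta_n/2} s^{-N/2}$ interacts with taking $n$ derivatives in $s$ (producing lower-order-in-$N$ terms that must be shown negligible, consistent with the claim in the introductory remark that $\Lambda_A^{(n)}(1)$ contributes the dominant $N^n$) and how, after passing to the CFKRS side, the shift by $1/2$ converts into the $e^{-x/2}$ and the signs $(-1/2)^{2Mn - \sum s_{l,q}}$. A secondary technical point is justifying that the error term is genuinely $O(N^{k^2 + 2Mn - 1})$ uniformly, which requires controlling the subleading terms in the CFKRS asymptotic expansion after the multiple differentiation; since each $\frac{d}{dx}$ and each shift-derivative is a bounded operation and the number of them is fixed (depending only on $k, M, n$, not $N$), this should follow from the corresponding uniformity in Lemma~\ref{11101}, but it needs to be stated.

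Finally, I would reconcile the two displayed expressions: the ``$N^{k^2+2Mn}$'' appearing inside the formula for $b_{k,M}(n,0)$ in the statement is evidently a typo (the power of $N$ is already displayed outside), and in the proof the coefficient $b_{k,M}(n,0)$ is the $N$-independent quantity obtained after dividing by $N^{k^2+2Mn}$ and taking $N\to\infty$; I would make this explicit so that the final combinatorial identity reads cleanly as the stated sum over the $s_{l,q}$.
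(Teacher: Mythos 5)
Your starting point (Lemma \ref{11101} and Lemma \ref{11/10lemma1}, differentiation in $2M$ of the $2k$ shifts, the role of the $-N/2$ shift in producing the powers of $-1/2$, and the final determinant evaluation of $\det(1/(a_i+j-1)!)$ via Lemma \ref{determinant}) matches the paper, but the route you describe does not lead to the stated formula, and the one idea that does is missing. What you outline --- extracting the determinant $\det_{k\times k}(I_{2l_i+i+j-1}(2\sqrt{x}))$ and applying the Leibniz rule to $\left(\frac{d}{dx}\right)^{(\cdots)}$ acting on $e^{-x/2}x^{(\cdots)}\det(\cdots)$ at $x=0$ --- is the derivation of Theorem \ref{generalization of mixed moment} and its explicit form (the route that yields Theorems \ref{into:thm2} and \ref{into:prop2}). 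Specialized to $n_1=n$, $n_2=0$, that route produces a coefficient indexed by \emph{partitions} of $n$ (the tuples $\a_i$ with $\sum_j j a_{ij}=n$, summed over $l_1+\cdots+l_P=2M$ and then over the $h_{si}$), not the sum over $s_{l,q}$ with $q=1,\ldots,2M$ and $\sum_{l}s_{l,q}\le n$ that appears in Theorem \ref{intro:thm4}. The paper obtains Theorem \ref{intro:thm4} from the \emph{other} general theorem (Theorem \ref{general 1128mixed moment}), whose proof bypasses the Bessel determinants entirely.

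The missing mechanism is Proposition \ref{prelimiary:prop1}. After Lemma \ref{lem1} (Fa\`a di Bruno) one is left with the $2M$-th power of the sum $\sum_{m_1+2m_2+\cdots+nm_n=n}\frac{n!}{m_1!\cdots m_n!}(-\frac{N}{2}+\sum_l w_l^{-1})^{m_1}\prod_{j\ge2}(\frac1j\sum_l w_l^{-j})^{m_j}$; your appeal to ``treating all $2M$ copies symmetrically'' does not by itself convert this $2M$-th power into the product over $q$ of the factors $\frac{1}{(n-\sum_l s_{l,q})!}$ in the statement. The paper's proof recognizes the inner sum as $\frac{d^n}{dt^n}\bigl(e^{-Nt/2}\prod_{l=1}^k\sum_{i=0}^n t^i w_l^{-i}\bigr)\big|_{t=0}$ via a generating-function identity, writes the $2M$-th power using $2M$ independent variables $t_1,\ldots,t_{2M}$, and thereby obtains identity (\ref{eq1128formula1}); the resulting monomials $\prod_l w_l^{-\sum_q s_{l,q}}$ are then integrated directly by Proposition \ref{lemmaonthe integral} (a residue computation giving $\sum_{\sigma}\det(N^{\cdots}/(\cdots)!)$), followed by a symmetrization-in-$\sigma$ argument and Lemma \ref{determinant}. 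Without Proposition \ref{prelimiary:prop1} you would end up with a correct but differently indexed asymptotic formula, and you would still owe a nontrivial combinatorial identity to match it with the stated $b_{k,M}(n,0)$. (Minor points: the cited ``Lemma \ref{11/10lemma1}'' companion you call ``11/10lemma2'' does not exist in the paper, and your observation that the factor $N^{k^2+2Mn}$ inside the displayed formula for $b_{k,M}(n,0)$ is a typo is correct.)
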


Following the philosophy of random matrix theory \cite{keating2000random1, gonek2007hybrid,conrey2005integral}, we make the following conjecture for the moments of higher order derivatives of the Riemann zeta function $\zeta(s)$, on the critical line ($\rm{Re}(s)=1/2$), and of Hardy's $Z$-function defined as
\be
\label{Hardy Z-function}
Z(t) = \frac{\pi^{-\i t/2} \Gamma(1/4 + \i t/2)}{|\Gamma(1/4+\i t/2)|} \zeta(1/2+\i t).
\ee

\begin{con}\label{conjecture for higher derivatives}
For any integers $n_1,n_2 \geq 0$ and integers $k\geq1, 0 \leq M \leq k$,
\be
\frac{1}{T} \int_0^T |\zeta^{(n_1)}(1/2+\i t)|^{2M} |\zeta^{(n_2)}(1/2+\i t)|^{2k-2M}  dt \sim a_{k,M}(n_1,n_2) c_k (\log T)^{k^2+2Mn_1+2(k-M)n_2},
\label{conjecture1}
\ee
and
\be
\frac{1}{T} \int_0^T |Z^{(n_1)}(t)|^{2M} |Z^{(n_2)}(t)|^{2k-2M} dt \sim b_{k,M}(n_1,n_2) c_k (\log T)^{k^2+2Mn_1+2(k-M)n_2},
\label{conjecture2}
\ee
where $c_k$ is the arithmetic factor
\[
c_k = \prod_{p} \left(1 - \frac{1}{p} \right)^{k^2} \sum_{m=0}^\infty \left( \frac{\Gamma(m+k)}{m! \cdot \Gamma(k)} \right)^2  p^{-m}.
\]
\end{con}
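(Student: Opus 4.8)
\emph{Heuristic derivation.} Since Conjecture~\ref{conjecture for higher derivatives} is not a theorem, the argument I have in mind is a heuristic in the spirit of Keating--Snaith and of the Conrey--Farmer--Keating--Rubinstein--Snaith (CFKRS) recipe, run in exact parallel with the rigorous CUE computation behind Theorems~\ref{intro:thm1}--\ref{intro:thm4} and their generalisations. The plan rests on two ingredients: (i) the conjectural asymptotics for the \emph{shifted} moments of $\zeta$ (resp.\ of $Z$), and (ii) the observation that the differential operator in the shift parameters which produces $|\zeta^{(n_1)}|^{2M}|\zeta^{(n_2)}|^{2k-2M}$ from a shifted product is precisely the operator whose CUE analogue, applied to the CFKRS/characteristic-polynomial identity of Lemma~\ref{11101}, yields our theorems.

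First I would invoke the CFKRS recipe for
\[
\frac{1}{T}\int_0^T \prod_{i=1}^{k}\zeta(\tfrac12+\alpha_i+\i t)\,\prod_{j=1}^{k}\zeta(\tfrac12+\beta_j-\i t)\,dt ,
\]
which is conjectured to equal, to leading order in $T$, a finite sum of ``swap'' terms, each the product of an absolutely convergent arithmetic Euler product in the shifts and a ratio of zeta/gamma factors encoding the functional equation, the arithmetic factor specialising to $c_k$ when all shifts vanish. The structural point is that this conjectural expression has, after the identification $N \leftrightarrow \log(T/2\pi)$, exactly the same shape as the exact CUE shifted moment $\int_{\U(N)}\prod_i\Lambda_A(e^{-\alpha_i})\,\prod_j\overline{\Lambda_A(e^{-\beta_j})}\,dA_N$ of Lemma~\ref{11101}, with the single extra multiplicative factor $c_k$.

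Next I would specialise. To extract the joint derivative moment one sets $M$ of the $\alpha_i$ and $M$ of the $\beta_j$ equal to a common variable and applies $\partial^{n_1}$ in it, sets the remaining $k-M$ of each equal to a second variable and applies $\partial^{n_2}$, and then lets both variables tend to $0$; dividing by $(n_1!)^{2M}(n_2!)^{2(k-M)}$ (or absorbing these constants into the coefficients, as in Theorems~\ref{intro:thm3}--\ref{intro:thm4}) recovers $|\zeta^{(n_1)}(\tfrac12+\i t)|^{2M}|\zeta^{(n_2)}(\tfrac12+\i t)|^{2(k-M)}$. This is exactly the manipulation that, performed on Lemma~\ref{11101}, produces the CUE asymptotics of the stated theorems; hence the number-theory side inherits the identical combinatorial leading coefficient, now multiplied by $c_k$ (which is inert under the shift-differentiations, being analytic with value $c_k$ at the origin), and with $N^{k^2+2Mn_1+2(k-M)n_2}$ replaced by $(\log(T/2\pi))^{k^2+2Mn_1+2(k-M)n_2}\sim(\log T)^{k^2+2Mn_1+2(k-M)n_2}$. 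For Hardy's $Z$-function one runs the same argument from the shifted moment of $Z(t)$ (equivalently, from the CUE identity for $Z_A$ used to prove Theorems~\ref{into:thm2} and~\ref{intro:thm4}), producing $b_{k,M}(n_1,n_2)$ in place of $a_{k,M}(n_1,n_2)$. I would close with the consistency checks that constitute the ``support'' from the literature: $(n_1,n_2)=(0,0)$, $M=k$ recovers Keating--Snaith; $(n_1,n_2)=(1,0)$ recovers the known conjectures for moments of $\zeta'$; and $k=1$ recovers the (proven) Ingham-type second-moment asymptotics for $\zeta^{(n)}$ and $Z^{(n)}$.

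The main obstacle is that every step is conditional, so this is a derivation of a conjecture, not a proof: the CFKRS recipe is itself unproven for $k\ge 3$ (the main term for the sixth moment being open), and one must moreover justify interchanging $T\to\infty$ with the shift-differentiations and show that the lower-order swap terms contribute only $O\big((\log T)^{k^2+2Mn_1+2(k-M)n_2-1}\big)$. In the hybrid-product language of Gonek--Hughes--Keating this amounts to showing $\zeta^{(n)}(\tfrac12+\i t)\approx P_X(\tfrac12+\i t)\,Z_X^{(n)}(\tfrac12+\i t)$ to leading order in the $2k$-th moment --- i.e.\ that derivatives effectively fall on the zero-counting factor $Z_X$, each costing a factor $\sim\log T$, while derivatives on the Euler-product factor $P_X$ cost only $O(\log\log T)$ --- together with an averaged decoupling of the arithmetic factor from the random-matrix factor; making either of these rigorous is well beyond current technology, which is why the statement is offered as a conjecture.
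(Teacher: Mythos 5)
This is a conjecture rather than a theorem, and your heuristic derivation — invoking the CFKRS/Keating--Snaith recipe with the identification $N\leftrightarrow\log T$, inserting the arithmetic factor $c_k$, and then checking consistency against the Ingham- and Hall-type results in the special cases $k=M=1$ and $k=2$ — is essentially the same justification the paper gives (it cites exactly this random-matrix philosophy and the hybrid-product framework, and supports the conjecture via Proposition \ref{intro:prop3} and the computed values of $b_{2,1}$ and $b_{2,2}$). Your account is, if anything, more explicit about the conditional steps, but it follows the paper's route.
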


It follows from 
\cite[Theorem A$''$]{ingham1928mean} that the left hand side of (\ref{conjecture1}) is asymptotically close to $\frac{1}{2n+1} (\log T)^{2n+1}$ when $k=M=1, n_1=n, n_2=0$. It follows from \cite[Theorem 3]{hall1999behaviour} that the left hand side of (\ref{conjecture2}) is asymptotically close to $\frac{1}{4^n(2n+1)} (\log T)^{2n+1}$ when $k=M=1, n_1=n, n_2=0$. In addition, by \cite[Theorems 10 and 11]{hall1999behaviour}, the left hand side of (\ref{conjecture2}) is asymptotically close to $\frac{1}{672 \pi^2} (\log T)^{8}$ when $k=2, M=1, n_1=2, n_2=0$, and asymptotically close to $\frac{19}{604800 \pi^2} (\log T)^{10}$ when $k=2, M=1, n_1=2, n_2=1$. Moreover, from \cite[Lemma 1]{hall2004stationary}, the left hand side of (\ref{conjecture2}) is asymptotically close to $\frac{17}{1774080\pi^2}(\log T)^{12}$ when $k=M=2$, $n_{1}=2,n_{2}=0$. All these results verify the conjecture in the respective special cases due to the fact that
$c_1=1, c_2=6/\pi^2$, $b_{2,1}(2,0)=1/4032,
b_{2,1}(2,1)=19/3628800$, $b_{2,2}(2,0)=17/10644480$, and the following results on $a_{1,1}(n,0)$ and $b_{1,1}(n,0)$ from Theorems \ref{intro:thm3} and \ref{intro:thm4}.

\begin{prop}
\label{intro:prop3}
For any integer $n\geq 0$,
\[
a_{1,1}(n,0) = 
\frac{1}{2n+1},
\quad 
b_{1,1}(n,0) = \frac{1}{(2n+1)4^{n}}.
\]
\end{prop}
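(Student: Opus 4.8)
The plan is to derive both identities directly from the general formulae of Theorems~\ref{intro:thm3} and \ref{intro:thm4} by specialising to $k=M=1$ and then evaluating the resulting elementary sums. When $k=1$ the sign $(-1)^{k(k-1)/2}$ is trivial, the product $\prod_{1\le i<j\le k}$ is empty, and each constraint $\sum_{l=1}^{k}s_{l,q}\le n$ or $\sum_{l=1}^{k}h_{l,q}=n$ involves a single summand; if moreover $M=1$ there is only one index $q$. Thus in Theorem~\ref{intro:thm3} the sole free index is $s:=s_{1,1}\in\{0,\dots,n\}$, with $h_{1,1}=n$ forced, and one gets
\[
a_{1,1}(n,0)=(n!)^{2}\sum_{s=0}^{n}\frac{(-1)^{s}}{(n-s)!\,(n+s+1)!}.
\]
Multiplying by $(2n+1)!/(n!)^{2}$ and using $(n-s)+(n+s+1)=2n+1$ turns this into $\sum_{s=0}^{n}(-1)^{s}\binom{2n+1}{n-s}$; the substitution $j=n-s$ then gives $(-1)^{n}\sum_{j=0}^{n}(-1)^{j}\binom{2n+1}{j}$, which equals $(-1)^{n}\cdot(-1)^{n}\binom{2n}{n}=\binom{2n}{n}$ by the elementary identity $\sum_{j=0}^{m}(-1)^{j}\binom{N}{j}=(-1)^{m}\binom{N-1}{m}$ (telescoping via Pascal's rule). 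Hence $a_{1,1}(n,0)=(n!)^{2}\binom{2n}{n}/(2n+1)!=1/(2n+1)$.

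For $b_{1,1}(n,0)$, Theorem~\ref{intro:thm4} collapses to a double sum over $a:=s_{1,1}$ and $b:=s_{1,2}$ in $\{0,\dots,n\}$,
\[
b_{1,1}(n,0)=(-1)^{n}(n!)^{2}\sum_{a=0}^{n}\sum_{b=0}^{n}\Big(-\tfrac{1}{2}\Big)^{2n-a-b}\frac{1}{(n-a)!\,(n-b)!\,(1+a+b)!},
\]
and I would evaluate this via a generating function. With $i=n-a$, $j=n-b$ the double sum becomes $\sum_{i,j=0}^{n}(-1/2)^{i+j}/(i!\,j!\,(2n+1-i-j)!)$, which, writing $e_{n}(x)=\sum_{i=0}^{n}x^{i}/i!$ for the truncated exponential and using $1/(2n+1-m)!=[x^{2n+1-m}]e^{x}$, equals $[x^{2n+1}]\big(e_{n}(-x/2)^{2}e^{x}\big)$. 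Writing $e_{n}(-x/2)=e^{-x/2}-\rho_{n}(x)$ with $\rho_{n}(x)=\sum_{i\ge n+1}(-x/2)^{i}/i!$ of order $\ge n+1$ gives $e_{n}(-x/2)^{2}e^{x}=1-2e^{x/2}\rho_{n}(x)+e^{x}\rho_{n}(x)^{2}$; in degree $2n+1$ the constant contributes $0$, the term $e^{x}\rho_{n}(x)^{2}$ has order $\ge 2n+2$, and only the middle term survives, contributing $-\tfrac{2}{2^{2n+1}(2n+1)!}\sum_{i=n+1}^{2n+1}(-1)^{i}\binom{2n+1}{i}$. Using $\binom{2n+1}{i}=\binom{2n+1}{2n+1-i}$ and the same identity as before, this last sum equals $-(-1)^{n}\binom{2n}{n}$, so $[x^{2n+1}]\big(e_{n}(-x/2)^{2}e^{x}\big)=(-1)^{n}\binom{2n}{n}/(4^{n}(2n+1)!)$ and therefore $b_{1,1}(n,0)=(-1)^{n}(n!)^{2}\cdot(-1)^{n}\binom{2n}{n}/(4^{n}(2n+1)!)=(2n)!/(4^{n}(2n+1)!)=1/((2n+1)4^{n})$.

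The argument is essentially bookkeeping, so I do not expect a substantive obstacle; the two points that demand care are the collapse of the nested sums in Theorems~\ref{intro:thm3}--\ref{intro:thm4} (checking that the forced equality $h_{1,1}=n$, the empty products, and the signs $(-1)^{k(k-1)/2}$ and $(-1)^{nM}$ are all handled correctly) and, in the $b$ case, the truncation step --- one must verify that passing from $e_{n}(-x/2)$ to $e^{-x/2}$ is harmless at the level of the coefficient of $x^{2n+1}$, i.e.\ that the tail contributions really do drop out. As an alternative to the generating-function computation, the double sum for $b_{1,1}(n,0)$ can also be evaluated by summing over $a$ with $a+b$ held fixed, though the generating-function route keeps the truncation bookkeeping cleanest.
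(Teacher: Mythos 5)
Your proof is correct. For $a_{1,1}(n,0)$ your argument coincides with the paper's: both reduce Theorem \ref{intro:thm3} at $k=M=1$ to the single sum $(n!)^2\sum_{s=0}^n(-1)^s/((n-s)!(n+s+1)!)$ and evaluate the resulting alternating binomial sum by telescoping Pascal's rule. For $b_{1,1}(n,0)$ you take a genuinely different route. The paper applies the multinomial Pascal recursion $\binom{2n+1}{i,j,2n+1-i-j}=\binom{2n}{i-1,j,\cdot}+\binom{2n}{i,j-1,\cdot}+\binom{2n}{i,j,\cdot}$ (its Lemma \ref{multicombinatorial numbers} with $m=3$) and lets the double sum collapse by cancellation to the single sum $\sum_i(-1/2)^{i+n}\binom{2n}{n,i,n-i}$, which it then evaluates via the binomial theorem. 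You instead recognise the double sum as $[x^{2n+1}]\bigl(e_n(-x/2)^2e^x\bigr)$ and split the truncated exponential as $e^{-x/2}$ minus a tail of order $\geq n+1$, so that only the cross term survives in degree $2n+1$; this reduces to the same alternating binomial sum. Your generating-function argument is arguably more transparent about \emph{why} the double sum collapses (the leading term $e^{-x}e^x=1$ kills everything except boundary contributions), whereas the paper's recursion is more self-contained and uses only the one combinatorial lemma it has already stated; both are elementary and of comparable length. Your truncation bookkeeping (the constant contributing $0$ and $e^x\rho_n^2$ having order $\geq 2n+2$) checks out, as does the sign chase at the end.
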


Conjecture \ref{conjecture for higher derivatives} is closely related to problems concerning large gaps between zeros of higher order derivatives of the Hardy $Z$-function. Let
\[
\Lambda^{(m)}= \limsup_{n\rightarrow \infty} \frac{t_{n+1}^{(m)} - t_n^{(m)}}{2\pi/\log t_n^{(m)} },
\]
where $\{t_n^{(m)}\}$ is the sequence of non-negative zero of $Z^{(m)}(t)$, counted according to multiplicity. (This is standard notation: $\Lambda^{(m)}$ is not to be confused with the $m$-th derivative of $\Lambda_A(e^{\i \theta})$.)  It is noteworthy that for any fixed $m$, the quantity $\frac{t_{n+1}^{(m)} - t_n^{(m)}}{2\pi/\log t_n^{(m)} }$ has average value 1 under the Riemann hypothesis. This is well known when $m=0$. For $m\geq 1$, we refer readers to \cite{speiser1935geometrisches,MR1487979,MR417074,MR1354047,MR1084176,MR266874,Zhang} for distribution of zeros of higher order derivatives of the Riemann zeta function and \cite{hall2004stationary} for zeros of derivatives of $Z(t)$.

For $m=0$, the quantity $\Lambda^{(0)}$ was introduced by Hall \cite{hall1999behaviour}.
A related quantity $\lambda$ may be defined with respect to $\{\gamma_n\}$, where $\{\gamma_n\}$ is the non-decreasing sequence of imaginary parts of the zeros of the Riemann zeta function in the upper half plane, counted by multiplicity. Selberg showed that $\lambda>1$ \cite{selberg1946zeta}. Conrey, Ghosh and Gonek showed that $\lambda>2.337$ under the Riemann hypothesis \cite{conrey1984note} and $\lambda>2.68$ under the generalized Riemann hypothesis \cite{conrey1986large}. Note that  $\lambda= \Lambda^{(0)}$ under these assumptions. In \cite{hall2004large}, Hall proposed a method, which does not assume the Riemann hypothesis, to use joint moments of $Z(t)$ and $Z'(t)$ to obtain lower bounds of $\Lambda^{(0)}$. He proved that $\Lambda^{(0)}>4$ times the average gap length assuming (\ref{conjecture2}) for the case $n_1=1,n_2=0, k=6$ in our notation, which was predicted previously by Random Matrix Theory. It is widely believed that $\Lambda^{(0)}=\infty$. This problem remains open. For $m\geq 1$, we refer to \cite{conrey1985mean,hall2004stationary} for lower bounds on $\Lambda^{(1)}$ (which all exceed 1). Additionally, in \cite{bui2023note} the authors also obtained lower bounds of $\Lambda^{(m)}$ for some $m\geq 2$.

We remark that the method in \cite{hall2004large} uses a Wirtinger-type inequality, which involves joint moments of $Z(t)$ and $Z'(t)$. It was mentioned in \cite{bui2023note} that this inequality may be generalized to higher order derivatives that involve joint moments of higher order derivatives of $Z(t)$. Using the results in this paper, we can give numerical data on $b_{k,M}(n_1,n_2)$ for large $k,M$ and $n_1,n_2$. Under Conjecture \ref{conjecture for higher derivatives}, we are able to predict the joint moments of the $n_1$-th and $n_2$-th derivatives of $Z(t)$ for large $n_1,n_2,k$ and for any $M$ between 0 and $k$. It is possible to use these numerical values to predict sharper lower bounds on $\Lambda^{(m)}$ through generalized Wirtinger-type inequalities. We plan to explore this in future research.

\subsection{A brief summary of our main methods}

Our starting point is \cite{conrey2006moments}, which concerns the first order derivative of moments of characteristic polynomials of CUE matrices. The approach there is to take the first order derivative of moments of characteristic polynomials with shifts. In the higher-order case, it is natural to use the same approach and take the $n$-th derivative. The difficulty here is how to represent multiple integrals (see formulae (\ref{eq3}) and (\ref{519eq4})) by expressions in terms of the determinant involving modified Bessel functions of the first kind, which is closely related to solutions of the $\sigma$-Painlev\'{e} III$'$ equation. Furthermore, it is desirable that this expression is computable. Our main idea here is to introduce $n$ extra variables $t_1,\ldots,t_n$, and to compute higher order derivatives with respect to them at 0. We find that the higher order derivative with respect to $t_1$ is related to taking the derivative of a determinant of a matrix whose entries involve the modified Bessel functions of the first kind, and higher order derivatives with respect to the other $n-1$ variables contribute combinatorial coefficients and shifts to the indices of the modified Bessel functions of the first kind, see Propositions \ref{prop1}, \ref{prop18}.

Our method to describe $a_{k,M}(n_1,n_2), b_{k,M}(n_1,n_2)$ is different from that of \cite[Theorem 3]{conrey2006moments}. The main tool of \cite{conrey2006moments} is the Laplacian transform, which converts the complex integral into a computable multiple real integral. In the higher order derivative case, this multiple integral is very complicated. Therefore, instead of using the Laplacian transform, we directly take the higher order derivatives of the determinant involving the modified Bessel functions of the first kind with shifts that appear in (\ref{intro:thm1-eq}). We find that this procedure can be carried out straightforwardly.

In Theorems \ref{intro:thm3} and \ref{intro:thm4}, we provide alternate formulae for the joint moments of higher order derivatives that are more concise than those in Theorems \ref{intro:prop1} and \ref{into:prop2}. Our starting point is similar to that of Theorems \ref{intro:thm1} and \ref{into:thm2}. The difference is that we use a new method (see Proposition \ref{prelimiary:prop1}) to handle the higher order powers of some expressions produced in taking the $n$-th order derivatives.

We remark that our approaches can be generalised to any type of joint moments of characteristic polynomials of the form
\be \label{general case of 1}
\int_{\U(N)} |\Lambda_A^{(n_{1})} (1)|^{2M_1} |\Lambda_A^{(n_{2})}(1)|^{2M_2} \cdots |\Lambda_A^{(n_{d})} (1)|^{2M_d} dA_N.
\ee
and to any corresponding joint moments of the analogue of Hardy's $Z$-functions.

\subsection{Notation}

We denote by $I_n(x)$ the modified Bessel function of the first kind, which has the following power series expansion
\[
I_{n}(x)=(\frac{x}{2})^{n}\sum_{j=0}^{\infty}\frac{x^{2j}}{2^{2j}(n+j)!j!}.
\]
For $x \in \mathbb{C}$, we denote
\bea\label{defofz}
z(x)=\frac{1}{1-e^{-x}}
\eea
The multinomial coefficient is defined as
\[
\binom{m}{l_1,\ldots,l_k} = \frac{m!}{l_1!\cdots l_k!},
\]
where $l_1+\cdots+l_k=m$.
We denote by $S_k$ the set of all permutations of $\{1,\ldots,k\}$. For $n$ an integer, when we say
$h_1+\cdots+h_k=n$ or $h_1+\cdots+h_k\leq n$, we shall always mean $h_i\geq 0$ and taking integer values.
For convenience, for any integrable function $F$, we define
\[
\int_{|w_i|=1} F(w_1,\ldots,w_k) \prod_{i=1}^k dw_i
:=
\int_{|w_1|=1} \cdots \int_{|w_k|=1} F(w_1,\ldots,w_k) \prod_{i=1}^k dw_i.
\]
Let $f(t)$ be a differentiable function.  The $n$-th derivative of $f$ will be denoted as
$(\frac{d}{dt})^n f$ or $f^{(n)}$.
For $w_1,\ldots,w_k\in \mathbb{C}$, the Vandermonde determinant is denoted as
\[
\Delta(w_1,\ldots,w_k) = \det_{k\times k}(w_i^{j-1}) = \prod_{1\leq i<j \leq k} (w_i-w_j).
\]
For convenience, we will ignore the subscripts and denote it as $\Delta(w)$. We also allow commutative differential operators as the arguments, such as
\[
\Delta(\frac{d}{dL}) := \Delta(\frac{d}{dL_1}, \cdots, \frac{d}{dL_n}) = \prod_{1\leq i<j\leq k}\Big(\frac{d}{dL_{j}}-\frac{d}{dL_{i}}\Big).
\]
The square of $\Delta(w)$ will be denoted as $\Delta^2(w)$. Similarly, the composition of the operator $\Delta(\frac{d}{dL})$ will be denoted as $\Delta^2(\frac{d}{dL})$.

\subsection{Acknowledgement}

This research was carried out while F.W. was visiting the University of Oxford. F.W. would like to thank the Mathematical Institute, University of Oxford for generous hospitality. We are grateful to Dr. Yacine Barhoumi-Andr\'{e}ani for helpful feedback on an earlier version of our manuscript. F.W. was supported by the fellowship of China Postdoctoral Science Foundation 2020M670273 and by Jinxin Xue's grant NSFC (project No. 20221300605).  JPK is pleased to acknowledge support from ERC Advanced Grant 740900 (LogCorRM).

\section{Lemmas and propositions}

In this section, we prove some lemmas and propositions that will be used in the proofs of our main results stated in the next section.

The following lemma gives an explicit formula for the higher order derivatives of the function appearing in the expression of moments of characteristic polynomials with shifts.

\begin{lem}
\label{lem1}
Let $n\geq 0, k\geq 1$ be integers, then
\beas
&& \left( \frac{d}{d\alpha} \right)^n \frac{e^{-\frac{N}{2} \alpha}}{\prod_{i=1}^k (w_i-\alpha)} \\
&=& \frac{e^{-\frac{N}{2} \alpha}}{\prod_{i=1}^k (w_i-\alpha)} \sum_{m_1+2m_2+\cdots+nm_n =n} \frac{n!}{m_1!\cdots m_n!} \left( - \frac{N}{2} + \sum_{l=1}^k \frac{1}{w_l-\alpha} \right)^{m_1} \prod_{j=2}^n \left(\frac{1}{j} \sum_{l=1}^k \frac{1}{(w_l-\alpha)^{j}}\right)^{m_j}.
\eeas
\end{lem}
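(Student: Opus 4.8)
The plan is to prove this by writing the left-hand side as a product $e^{-\frac{N}{2}\alpha} \cdot g(\alpha)$ with $g(\alpha) = \prod_{i=1}^k (w_i-\alpha)^{-1}$, applying the general Leibniz rule, and then recognising the resulting sum over ordered compositions as the Fa\`a di Bruno / exponential-formula expansion indexed by partitions. Concretely, first I would note that $\frac{d}{d\alpha} \log\!\big(e^{-\frac{N}{2}\alpha} g(\alpha)\big) = -\frac{N}{2} + \sum_{l=1}^k \frac{1}{w_l-\alpha}$, so if we set $\phi(\alpha) := e^{-\frac{N}{2}\alpha} g(\alpha)$ then $\phi'/\phi$ has a clean closed form; more generally the $j$-th logarithmic derivative is $\big(\tfrac{d}{d\alpha}\big)^{j-1}\!\big(\phi'/\phi\big) = (j-1)!\,(-1)^{j-1}\!\sum_{l=1}^k (w_l-\alpha)^{-j}$ for $j\geq 2$ (the $-N/2$ term drops out after one derivative), while for $j=1$ it equals $-\frac{N}{2} + \sum_l (w_l-\alpha)^{-1}$.

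The key step is then to invoke the standard formula expressing the $n$-th derivative of a function in terms of its logarithmic derivatives: if $\psi_j := \big(\tfrac{d}{d\alpha}\big)^{j}\log\phi$, then
\[
\phi^{(n)} = \phi \sum_{m_1 + 2m_2 + \cdots + n m_n = n} \frac{n!}{m_1!\cdots m_n!} \prod_{j=1}^n \left(\frac{\psi_{j-1}'}{j!}\cdot \text{(appropriate normalisation)}\right)^{m_j},
\]
which is most cleanly obtained by expanding $\phi = \exp(\log\phi)$ and using the exponential formula, or equivalently Fa\`a di Bruno applied to $\exp\circ\log\phi$. Substituting $\psi_0' = \phi'/\phi = -\frac{N}{2} + \sum_l (w_l-\alpha)^{-1}$ for the $j=1$ factor and, for $j\geq 2$, $\frac{1}{j}\cdot\frac{1}{(j-1)!}\psi_{j-1}$ — after accounting for the sign $(-1)^{j-1}$ and the factor $(j-1)!$ above — collapses precisely to $\frac{1}{j}\sum_l (w_l-\alpha)^{-j}$, up to a sign that I would need to track carefully. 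Dividing through by $e^{-\frac{N}{2}\alpha}$ only changes the prefactor and not the bracketed sum, yielding the stated identity.

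The main obstacle I anticipate is bookkeeping the signs and factorials: the higher logarithmic derivatives $\big(\tfrac{d}{d\alpha}\big)^{j}(w_l-\alpha)^{-1}$ produce $(-1)^j j!\,(w_l-\alpha)^{-(j+1)}$-type terms, and one must verify that the combination $(-1)^{j-1}(j-1)!$ coming from differentiating $\sum_l (w_l-\alpha)^{-1}$ a total of $j-1$ times, divided by the $(j-1)!$ from the Bruno/exponential-formula denominator, and multiplied by $1/j$, conspires to give exactly $\frac{1}{j}\sum_l (w_l-\alpha)^{-j}$ with no leftover sign. A clean way to sidestep some of this is to prove the identity by induction on $n$ instead: differentiate the claimed right-hand side once, using that $\frac{d}{d\alpha}$ of the prefactor $\phi$ reproduces $\phi \cdot \psi_0'$ (the $m_1$-bracket), and that $\frac{d}{d\alpha}$ of each bracket $\big(\frac{1}{j}\sum_l (w_l-\alpha)^{-j}\big)$ equals $\big(\frac{1}{j+1}\sum_l(w_l-\alpha)^{-(j+1)}\big)\cdot(j+1)$ times the right factor — then match the compositions of $n+1$ against those of $n$. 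This reduces everything to a combinatorial identity on multinomial coefficients indexed by integer partitions, which is routine once set up. I would present whichever of the two routes (exponential formula vs. induction) gives the shorter sign-free argument, most likely the exponential-formula one stated directly via Fa\`a di Bruno for $\exp$.
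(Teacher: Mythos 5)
Your proposal follows essentially the same route as the paper: the authors likewise reduce the statement to Fa\`a di Bruno's formula applied to $e^{g(\alpha)}$, where $g'(\alpha)=-\frac{N}{2}+\sum_{l=1}^k(w_l-\alpha)^{-1}$ is the logarithmic derivative of $\phi(\alpha)=e^{-N\alpha/2}\prod_i(w_i-\alpha)^{-1}$, after first checking via the recursion $f_{l+1}=f_lf_1+f_l'$ that $\phi^{(n)}/\phi$ depends only on $f_1=g'$ and its derivatives. The sign you flag as needing care is in fact absent: $\frac{d}{d\alpha}(w_l-\alpha)^{-1}=+(w_l-\alpha)^{-2}$ because the $-1$ from the chain rule cancels the $-1$ from the power rule, so $\bigl(\frac{d}{d\alpha}\bigr)^{j-1}\sum_l(w_l-\alpha)^{-1}=(j-1)!\sum_l(w_l-\alpha)^{-j}$, and dividing by the Fa\`a di Bruno denominator $j!$ yields exactly $\frac{1}{j}\sum_l(w_l-\alpha)^{-j}$ with no leftover sign.
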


\begin{proof}
It is not hard to check that
\be
\label{eq1}
\left( \frac{d}{d\alpha} \right)^n \frac{e^{-\frac{N}{2} \alpha}}{\prod_{i=1}^k (w_i-\alpha)} = \frac{e^{-\frac{N}{2} \alpha}}{\prod_{i=1}^k (w_i-\alpha)} f_n(\alpha),
\ee
where $f_n(\alpha)$ is defined by the following recursive formulae
\beas
f_1(\alpha) &=& - \frac{N}{2} + \sum_{l=1}^k \frac{1}{w_l-\alpha}, \\
f_{l+1}(\alpha) &=& f_l(\alpha) f_1(\alpha) + f_l'(\alpha) , \quad l\geq 1.
\eeas
Let $g(\alpha)$ be a function such that $g'(\alpha) = f_1(\alpha)$. Consider the function $e^{g(\alpha)}$, then
\bes
\left( \frac{d}{d\alpha} \right)^n e^{g(\alpha)} = e^{g(\alpha)} f_n(\alpha).
\ees
By Fa\`{a} di Bruno's formula
\beas
\left( \frac{d}{d\alpha} \right)^n e^{g(\alpha)}
=e^{g(\alpha)}\sum_{m_1+2m_2+\cdots+nm_n =n} \frac{n!}{m_1!\cdots m_n!} \prod_{j=1}^n \left(\frac{f_1^{(j-1)}(\alpha)}{j!}\right)^{m_j},
\eeas
so
\beas
f_{n}(\alpha)=\sum_{m_1+2m_2+\cdots+nm_n =n} \frac{n!}{m_1!\cdots m_n!} \left( - \frac{N}{2} + \sum_{l=1}^k \frac{1}{w_l-\alpha} \right)^{m_1} \prod_{j=2}^n \left(\frac{1}{j} \sum_{l=1}^k \frac{1}{(w_l-\alpha)^{j}} \right)^{m_j}.
\eeas
By (\ref{eq1}), we obtain the claimed result.
\end{proof}

The proposition below gives a formula for the powers of the expression obtained above at $\alpha=0$.

\begin{prop}
\label{prelimiary:prop1}
Let $k\geq1, r \geq 1$, $n\geq 0$ and $m_{1}\geq 0,\ldots,m_{n}\geq 0$ be integers. Let $w_{1},\ldots,w_{k}$ be non-zero complex numbers. Then we have
\bea
&& \Bigg( \sum_{m_1+2m_2+\cdots+nm_n=n} \frac{n!}{m_1!\cdots m_n!} (-N + \sum_{l=1}^k \frac{1}{w_l})^{m_1} \prod_{j=2}^n (\frac{1}{j} \sum_{l=1}^k \frac{1}{w_l^j})^{m_j}\Bigg)^{r} \nonumber \\
&=&\sum_{ \substack{\sum_{l=1}^k s_{lj} \leq n \\ 
j=1,\ldots,r} } \, \prod_{j=1}^r \left((-N)^{n-\sum_{l=1}^k s_{lj}} \binom{n}{\sum_{l=1}^k s_{lj}} (\sum_{l=1}^k s_{lj})! \right)
 \frac{1}{\prod_{l=1}^k w_l^{\sum_{j=1}^r s_{lj}}}.
 \label{eq1128formula1}
\eea
and 
\bea
\Bigg( \sum_{m_1+2m_2+\cdots+nm_n=n} \frac{n!}{m_1!\cdots m_n!} \prod_{j=1}^n (\frac{1}{j} \sum_{l=1}^k \frac{1}{w_l^j})^{m_j}\Bigg)^{r}
=\sum_{ \substack{\sum_{l=1}^k h_{lj} = n \\ 
j=1,\ldots,r} }  \frac{(n!)^r}{\prod_{l=1}^k w_l^{\sum_{j=1}^r h_{lj}}}
\label{eq1128formula2}.
\eea 
\end{prop}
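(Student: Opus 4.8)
The plan is to prove the two identities by expanding the $r$-th power, applying the multinomial theorem, and then using Fa\`{a} di Bruno (or rather its inverse, combined with Lemma \ref{lem1}) to collapse the inner partition-sum back into a closed form. Let me write $P_n(w) := \sum_{m_1+2m_2+\cdots+nm_n=n} \frac{n!}{m_1!\cdots m_n!} (-N + \sum_{l=1}^k \frac{1}{w_l})^{m_1} \prod_{j=2}^n (\frac{1}{j}\sum_{l=1}^k \frac{1}{w_l^j})^{m_j}$ for the bracket in \eqref{eq1128formula1}, and $Q_n(w)$ for the bracket in \eqref{eq1128formula2}. The first observation is that, by Lemma \ref{lem1} with $\alpha = 0$ (after relabelling), $P_n(w)$ is exactly $f_n(0)$ where $f_n = (\frac{d}{d\alpha})^n e^{g(\alpha)} / e^{g(\alpha)}$ with $g'(\alpha) = -N + \sum_l \frac{1}{w_l - \alpha}$, i.e. $g(\alpha) = -N\alpha - \sum_l \log(w_l - \alpha)$, so $e^{g(\alpha)} = e^{-N\alpha}\prod_l (w_l-\alpha)^{-1}$. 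Hence $P_n(w) = e^{N\alpha}\prod_l(w_l-\alpha) \cdot (\frac{d}{d\alpha})^n\big(e^{-N\alpha}\prod_l(w_l-\alpha)^{-1}\big)\big|_{\alpha=0}$. The cleanest route is then: raise this to the $r$-th power, and recognize $P_n(w)^r$ as a product of $r$ independent such $n$-th derivatives; but a slicker approach avoids even that and works directly with the generating function.

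Concretely, here is the main step. Consider the function $h(\alpha) = e^{-N\alpha}\prod_{l=1}^k (w_l - \alpha)^{-1}$, and expand $(w_l - \alpha)^{-1} = w_l^{-1}\sum_{s\geq 0}(\alpha/w_l)^s$ and $e^{-N\alpha} = \sum_{s\ge 0}(-N\alpha)^s/s!$. Then $(\frac{d}{d\alpha})^n h(\alpha)|_{\alpha = 0} = n! \cdot [\alpha^n]\, h(\alpha)$, and multiplying by $e^{N\cdot 0}\prod_l w_l \cdot w_l/w_l$ — wait, more carefully, $P_n(w) = n!\,[\alpha^n]\Big(e^{-N\alpha}\prod_l \frac{w_l}{w_l - \alpha}\Big)$ since $\prod_l(w_l-\alpha) \cdot (\frac{d}{d\alpha})^n h$ evaluated at $0$ gives $\prod_l w_l$ times $(\frac{d}{d\alpha})^n h|_0 = n![\alpha^n]h$, and $h \prod_l w_l = e^{-N\alpha}\prod_l \frac{w_l}{w_l-\alpha} / \text{(nothing)}$; the $w_l$ factors cancel to leave $P_n(w) = n!\,[\alpha^n]\big(e^{-N\alpha}\prod_l(1-\alpha/w_l)^{-1}\big)$. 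Now raise to the $r$-th power: $P_n(w)^r = \prod_{j=1}^r P_n(w)$, and using a distinct bookkeeping variable $\alpha_j$ for each factor, $P_n(w)^r = (n!)^r \prod_{j=1}^r [\alpha_j^n]\Big(e^{-N\alpha_j}\prod_{l=1}^k (1-\alpha_j/w_l)^{-1}\Big)$. Extract coefficients factor by factor: from the $j$-th factor, choosing $e^{-N\alpha_j}$ to contribute $\alpha_j^{n - \sum_l s_{lj}}$ with coefficient $(-N)^{n-\sum_l s_{lj}}/(n-\sum_l s_{lj})!$ and each $(1-\alpha_j/w_l)^{-1}$ to contribute $\alpha_j^{s_{lj}} w_l^{-s_{lj}}$. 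Collecting, each $j$ contributes $(-N)^{n-\sum_l s_{lj}}\binom{n}{\sum_l s_{lj}}(\sum_l s_{lj})! \cdot \frac{1}{\prod_l w_l^{s_{lj}}}$ — here $\binom{n}{\sum s_{lj}}(\sum s_{lj})!/(n-\sum s_{lj})!$ combine from $(n!)^r$ divided by $(n-\sum_l s_{lj})!$, which is exactly $n!/(n-\sum_l s_{lj})! = \binom{n}{\sum s_{lj}}(\sum s_{lj})!$. Summing over all $s_{lj}\geq 0$ with $\sum_l s_{lj}\leq n$ gives \eqref{eq1128formula1}. For \eqref{eq1128formula2}, the identical argument with $N$ set to $0$ forces $\sum_l h_{lj} = n$ exactly (the $e^{-N\alpha_j}$ factor is absent), and $Q_n(w) = n!\,[\alpha^n]\prod_l(1-\alpha/w_l)^{-1}$; the $r$-th power then gives $\sum (n!)^r / \prod_l w_l^{\sum_j h_{lj}}$ over $\sum_l h_{lj}=n$ for each $j$, with the combinatorial coefficient collapsing since each single coefficient extraction from $\prod_l(1-\alpha/w_l)^{-1}$ at $\alpha^n$ with prescribed exponents $h_{lj}$ is just $1$ (it's $\prod_l w_l^{-h_{lj}}$, no multinomial weight because $(1-\alpha/w_l)^{-1}=\sum_s (\alpha/w_l)^s$ has all coefficients $1$), and the leftover $n!$ per factor assembles into $(n!)^r$.

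The main obstacle, such as it is, is purely organizational: verifying that the partition-sum definition of $P_n$ (the Fa\`{a} di Bruno side) genuinely equals the coefficient-extraction expression $n![\alpha^n](e^{-N\alpha}\prod_l(1-\alpha/w_l)^{-1})$, rather than something off by the $\prod_l(w_l-\alpha)$ prefactor. This is exactly the content of Lemma \ref{lem1} read backwards — Lemma \ref{lem1} says $(\frac{d}{d\alpha})^n h(\alpha) = h(\alpha) f_n(\alpha)$ with $f_n$ given by that partition sum, so at $\alpha=0$, $f_n(0) = (\frac{d}{d\alpha})^n h|_0 / h(0) = \prod_l w_l \cdot n![\alpha^n]h = n![\alpha^n](e^{-N\alpha}\prod_l(1-\alpha/w_l)^{-1})$, and $f_n(0)$ is precisely the bracket $P_n(w)$ (matching $\alpha = 0$ in Lemma \ref{lem1}, with the sign convention $-N$ versus $-N/2$ being an irrelevant rescaling $N \mapsto N/2$, or one simply re-runs the Lemma \ref{lem1} derivation with $-N$ in place of $-N/2$). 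Once that bridge is in place, everything else is a mechanical double application of the multinomial/geometric-series coefficient extraction, and the claimed cancellation of factorials into $\binom{n}{\sum_l s_{lj}}(\sum_l s_{lj})!$ falls out immediately. I would present the $N\neq 0$ case \eqref{eq1128formula1} in full and then remark that \eqref{eq1128formula2} is the specialization obtained by deleting the exponential factor.
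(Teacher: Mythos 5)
Your proposal is correct and follows essentially the same route as the paper: both identify the partition sum as $n!\,[t^n]\big(e^{-Nt}\prod_{l}(1-t/w_l)^{-1}\big)$ and then compute the $r$-th power by introducing $r$ independent variables, extracting the coefficient of $\alpha_j^n$ factor by factor, with the factorials collapsing to $\binom{n}{\sum_l s_{lj}}(\sum_l s_{lj})!$ exactly as you describe. The only (harmless) divergence is that you obtain the generating-function representation by reading Lemma \ref{lem1} backwards at $\alpha=0$ (with $N/2$ relabelled as $N$), whereas the paper re-derives it directly from the series $\sum_{j\ge 2}\frac{t^j}{j w_l^j}=-\ln(1-t/w_l)-t/w_l$.
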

\begin{proof}
Denote
\beas
a_{n-m_1}= \sum_{2m_2+\cdots+nm_n=n-m_1} \frac{1}{m_2!\cdots m_n!} \prod_{j=2}^n (\frac{1}{j} \sum_{l=1}^k \frac{1}{w_l^j})^{m_j}.
\eeas
Then $a_{n-m_1}$ is the coefficient of 
$t^{n-m_1}$ in the Taylor expansion of
$$
\prod_{j=2}^n e^{ \frac{t^j}{j} \sum_{l=1}^k \frac{1}{w_l^j} }
=
\prod_{j=2}^n
\prod_{l=1}^k e^{ \frac{1}{j} \frac{t^j}{w_l^j} }.
$$
So
\beas
a_{n-m_1} &=& \frac{1}{(n-m_1)!} \frac{d^{(n-m_1)}}{dt^{(n-m_1)}} e^{\sum_{j=2}^n (\frac{1}{j} \sum_{l=1}^k \frac{1}{w_l^j} )t^j} \Bigg|_{t=0} \\
&=& \frac{1}{(n-m_1)!} \frac{d^{(n-m_1)}}{dt^{(n-m_1)}} e^{\sum_{j=2}^\infty (\frac{1}{j} \sum_{l=1}^k \frac{1}{w_l^j} )t^j} \Bigg|_{t=0} \\
&=& \frac{1}{(n-m_1)!} \frac{d^{(n-m_1)}}{dt^{(n-m_1)}} \prod_{l=1}^k  e^{\sum_{j=2}^\infty \frac{1}{j} \frac{t^j}{w_l^j} } \Bigg|_{t=0} \\
&=& \frac{1}{(n-m_1)!} \frac{d^{(n-m_1)}}{dt^{(n-m_1)}} \prod_{l=1}^k  e^{-\ln(1-\frac{t}{w_l}) - \frac{t}{w_l} } \Bigg|_{t=0} \\
&=& \frac{1}{(n-m_1)!} \frac{d^{(n-m_1)}}{dt^{(n-m_1)}} \prod_{l=1}^k  e^{- \frac{t}{w_l} } \frac{1}{1-\frac{t}{w_l} }\Bigg|_{t=0}.
\eeas
Namely, $a_{n-m_1}$ is the coefficient of 
$t^{n-m_1}$ in the Taylor expansion of
\[
F(t) =\left(\prod_{l=1}^k  \frac{1}{1-\frac{t}{w_l} }\right)
e^{- \sum_{l=1}^k\frac{t}{w_l} }.
\]

Let $G_1(t)= e^{(-N+\sum_{l=1}^k\frac{1}{w_l}) t}$ and let $G_2(t)= e^{(\sum_{l=1}^k\frac{1}{w_l}) t}$, then
\beas
&&\sum_{m_1+2m_2+\cdots+nm_n=n} \frac{n!}{m_1!\cdots m_n!} (-N + \sum_{l=1}^k \frac{1}{w_l})^{m_1} \prod_{j=2}^n (\frac{1}{j} \sum_{l=1}^k \frac{1}{w_l^j})^{m_j} \nonumber \\
&=& \sum_{m_1=0}^n  a_{n-m_1} \frac{n!}{m_1!} (-N + \sum_{l=1}^k \frac{1}{w_l})^{m_1}  \nonumber \\
&=& \frac{d^n}{dt^n} (G_1(t) F(t))\Bigg|_{t=0} =\frac{d^n}{dt^n} \left(e^{-Nt} \prod_{l=1}^k  \frac{1}{1-\frac{t}{w_l} } \right) \Bigg|_{t=0} \nonumber \\
&=& \frac{d^n}{dt^n} \left( e^{-Nt} \prod_{l=1}^k (\sum_{i=0}^n \frac{t^i}{w_l^i}) \right) \Bigg|_{t=0} .
\eeas
Similarly,
\bes
\sum_{m_1+2m_2+\cdots+nm_n=n} \frac{n!}{m_1!\cdots m_n!} \prod_{j=1}^n (\frac{1}{j} \sum_{l=1}^k \frac{1}{w_l^j})^{m_j} 
=\frac{d^n}{dt^n} \left( \prod_{l=1}^k (\sum_{i=0}^n \frac{t^i}{w_l^i}) \right) \Bigg|_{t=0} .
\ees

Note that
\bea
&& \left(\frac{d^n}{dt^n} \left( e^{-Nt} \prod_{l=1}^k (\sum_{i=0}^n \frac{t^i}{w_l^i}) \right) \Bigg|_{t=0} \right)^r \nonumber \\
&=& \left(\prod_{j=1}^r \frac{d^n}{dt_j^n} \right) \prod_{j=1}^r \left(  e^{-Nt_j} \prod_{l=1}^k (\sum_{i=0}^n \frac{t_j^i}{w_l^i}) \right) \Bigg|_{t_1,\ldots,t_r=0} \nonumber  \\
&=& \left(\prod_{j=1}^r \frac{d^n}{dt_j^n} \right)  e^{-N(t_1+\cdots+t_r)} \prod_{j=1}^r \prod_{l=1}^k (\sum_{i=0}^n \frac{t_j^i}{w_l^i})  \Bigg|_{t_1,\ldots,t_r=0} \nonumber  \\
&=& \left(\prod_{j=1}^r \frac{d^n}{dt_j^n} \right)  e^{-N(t_1+\cdots+t_r)} 
\sum_{ \substack{s_{11},\ldots,s_{1r},  \\
\cdots\cdots\cdots \\
s_{k1},\ldots,s_{kr}= 0} }^n \frac{\prod_{j=1}^r t_j^{\sum_{l=1}^k s_{lj}}}{\prod_{l=1}^k w_l^{\sum_{j=1}^r s_{lj}}}
\Bigg|_{t_1,\ldots,t_r=0} \nonumber \\
&=& \sum_{ \substack{s_{11},\ldots,s_{1r}, \\
\cdots\cdots\cdots \\
s_{k1},\ldots,s_{kr}= 0} }^n \prod_{j=1}^r \left(\frac{d^n}{dt_j^n} e^{-Nt_j}  t_j^{\sum_{l=1}^k s_{lj}} \right)
\Bigg|_{t_1,\ldots,t_r=0}  \frac{1}{\prod_{l=1}^k w_l^{\sum_{j=1}^r s_{lj}}}\nonumber  \\ 
&=& \sum_{ \substack{\sum_{l=1}^k s_{lj} \leq n \\ 
j=1,\ldots,r} } \prod_{j=1}^r \left((-N)^{n-\sum_{l=1}^k s_{lj}} \binom{n}{\sum_{l=1}^k s_{lj}} (\sum_{l=1}^k s_{lj})! \right)
 \frac{1}{\prod_{l=1}^k w_l^{\sum_{j=1}^r s_{lj}}}. \nonumber
\eea
Similarly,
\bes
\left(\frac{d^n}{dt^n} \left( \prod_{l=1}^k (\sum_{i=0}^n \frac{t^i}{w_l^i}) \right) \Bigg|_{t=0} \right)^r 
= \sum_{ \substack{ \sum_{l=1}^k h_{lj} = n \\ 
j=1,\ldots,r} } (n!)^r \frac{1}{\prod_{l=1}^k w_l^{\sum_{j=1}^r h_{lj}}}.
\ees
Putting it all together, we obtain the claimed result.
\end{proof}

\begin{lem}[Equation (2.11) of \cite{conrey2006moments}]
\label{lem2}
Let $n\geq 0$ be an integer, then
\[
\frac{1}{2\pi \i} \int_{|w|=1} \frac{e^{Lw+\frac{t}{w}}}{w^{n+1}} dw
= \left(\frac{L}{t}\right)^{n/2} I_n(2\sqrt{Lt}) .
\]
\end{lem}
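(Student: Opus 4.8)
The statement to prove is
\[
\frac{1}{2\pi \i} \int_{|w|=1} \frac{e^{Lw+\frac{t}{w}}}{w^{n+1}} \, dw = \left(\frac{L}{t}\right)^{n/2} I_n(2\sqrt{Lt}).
\]

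\textbf{Approach.} The plan is to compute the contour integral by residues, extracting the coefficient of $w^n$ in the Laurent expansion of $e^{Lw + t/w}$, and then to recognize the resulting power series as the series expansion of the modified Bessel function $I_n$ recalled in the Notation subsection. Since $e^{Lw+t/w}$ is holomorphic on $\C \setminus \{0\}$, the integral over the unit circle picks out (by Cauchy's residue theorem, with the standard $2\pi\i$ normalization) precisely the coefficient of $w^n$ in the Laurent series of $e^{Lw+t/w}$ about $w=0$.

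\textbf{Key steps.} First, I would expand both exponential factors as absolutely convergent series on $|w|=1$ (for $t \neq 0$), writing $e^{Lw} = \sum_{a\geq 0} \frac{L^a}{a!} w^a$ and $e^{t/w} = \sum_{b\geq 0} \frac{t^b}{b!} w^{-b}$, and multiply them together. Second, I would collect the terms contributing $w^n$: these are the pairs $(a,b)$ with $a - b = n$, i.e. $a = n+b$ with $b \geq 0$, giving
\[
[w^n]\, e^{Lw+t/w} = \sum_{b=0}^\infty \frac{L^{n+b}}{(n+b)!} \cdot \frac{t^b}{b!} = L^n \sum_{b=0}^\infty \frac{(Lt)^b}{(n+b)!\, b!}.
\]
Third, I would rewrite this using the series $I_n(x) = (x/2)^n \sum_{j=0}^\infty \frac{x^{2j}}{2^{2j}(n+j)!\, j!}$ from the Notation subsection: setting $x = 2\sqrt{Lt}$ gives $(x/2)^n = (Lt)^{n/2}$ and $\frac{x^{2j}}{2^{2j}} = (Lt)^j$, so $I_n(2\sqrt{Lt}) = (Lt)^{n/2} \sum_{j\geq 0} \frac{(Lt)^j}{(n+j)!\, j!}$. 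Comparing, $[w^n]\, e^{Lw+t/w} = L^n \cdot (Lt)^{-n/2} I_n(2\sqrt{Lt}) = (L/t)^{n/2} I_n(2\sqrt{Lt})$, which is exactly the claimed value of the integral.

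\textbf{Main obstacle.} There is essentially no serious obstacle here; this is a standard generating-function identity for Bessel functions (the integrand is the classical Bessel generating function $e^{(x/2)(w - 1/w)}$ after a change of variables). The only points requiring a word of care are: (i) justifying term-by-term integration, which follows from uniform convergence of the double series on the compact contour $|w|=1$ when $t \neq 0$ (and the $t=0$ case, where both sides equal $L^n/n!$ for $n \geq 0$ and $0$ otherwise, can be checked directly or by continuity); and (ii) matching the index conventions so that the shift by $n$ in the denominator factorials lines up with the order of the Bessel function. Since the identity is quoted from \cite{conrey2006moments}, one may alternatively simply cite it, but the residue computation above is short and self-contained.
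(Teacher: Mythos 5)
Your proof is correct. Note that the paper itself gives no proof of this lemma at all --- it is simply quoted as Equation (2.11) of the cited Conrey--Rubinstein--Snaith paper --- so there is no internal argument to compare against. Your residue computation (extracting the coefficient of $w^n$ in the Laurent expansion of $e^{Lw+t/w}$ and matching it term by term with the series $I_n(x)=(x/2)^n\sum_{j\geq 0}\frac{x^{2j}}{2^{2j}(n+j)!\,j!}$ given in the paper's Notation subsection) is the standard, self-contained verification, and the bookkeeping $L^n(Lt)^{-n/2}=(L/t)^{n/2}$ is handled correctly, as is the degenerate case $t=0$.
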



In our situation, we need to handle the expression (\ref{lem3:eq}). We give a formula for it correlated with the modified Bessel functions of the first kind.

\begin{lem}
\label{lem3}
For integers $n\geq 1, k \geq 1$, we have
\bea
&& \frac{1}{2\pi \i} \int_{|w|=1} \frac{e^{Lw+ \frac{t_1}{w} + \cdots + \frac{t_n}{w^n} }}{w^{2k}} dw \label{lem3:eq} \\
&=& \sum_{m_2,\ldots,m_n=0}^\infty \, \left(\prod_{j=2}^n \, \frac{t_j^{m_j}}{m_j!} \right) \, I_{\sum_{j=2}^n jm_j + 2k-1}(2\sqrt{Lt_1}) \left(\frac{L}{t_1}\right)^{\frac{\sum_{j=2}^n jm_j + 2k-1}{2}}.
\nonumber 
\eea
\end{lem}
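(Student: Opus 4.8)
The plan is to expand the part of the integrand carrying the variables $t_2,\ldots,t_n$ as a power series in those variables, interchange this (multi-indexed) summation with the contour integral, and then invoke Lemma \ref{lem2} term by term.

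First I would write, using $e^{a+b}=e^ae^b$,
\[
e^{\frac{t_2}{w^2}+\cdots+\frac{t_n}{w^n}}=\prod_{j=2}^{n}e^{t_j/w^j}=\sum_{m_2,\ldots,m_n=0}^{\infty}\left(\prod_{j=2}^{n}\frac{t_j^{m_j}}{m_j!}\right)w^{-\sum_{j=2}^{n}jm_j}.
\]
On the contour $|w|=1$ we have $|w^{-\sum_{j}jm_j}|=1$, so the general term of this series is bounded in modulus by $\prod_{j=2}^n |t_j|^{m_j}/m_j!$, whose sum over all $(m_2,\ldots,m_n)$ equals $\prod_{j=2}^n e^{|t_j|}<\infty$, a bound independent of $w$. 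Hence by the Weierstrass $M$-test the series converges uniformly on $|w|=1$; multiplying by the continuous (hence bounded) factor $e^{Lw+t_1/w}w^{-2k}$ on that compact contour preserves uniform convergence, so term-by-term integration is legitimate.

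After the interchange, the integral equals
\[
\sum_{m_2,\ldots,m_n=0}^{\infty}\left(\prod_{j=2}^{n}\frac{t_j^{m_j}}{m_j!}\right)\cdot\frac{1}{2\pi\i}\int_{|w|=1}\frac{e^{Lw+t_1/w}}{w^{\,2k+\sum_{j=2}^{n}jm_j}}\,dw .
\]
Now I apply Lemma \ref{lem2} to the inner integral, with $t=t_1$ and with the nonnegative integer there taken to be $2k-1+\sum_{j=2}^{n}jm_j$ (so that the denominator exponent is of the required form ``$n+1$''); this produces exactly
$\big(L/t_1\big)^{(\sum_{j=2}^{n}jm_j+2k-1)/2}\,I_{\sum_{j=2}^{n}jm_j+2k-1}(2\sqrt{Lt_1})$,
which is the claimed summand. (When $n=1$ the sum is over the empty index set and the statement reduces directly to Lemma \ref{lem2}.)

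The only genuinely non-cosmetic point is the justification of exchanging the infinite multi-sum with the contour integral; the uniform bound $\prod_{j=2}^n e^{|t_j|}$ derived above settles it, and everything else is routine index bookkeeping. I do not expect any real obstacle beyond making that interchange clean.
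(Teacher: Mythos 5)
Your proof is correct and follows essentially the same route as the paper: both expand $\prod_{j=2}^{n}e^{t_j/w^j}$ into the multi-indexed series and reduce each term to Lemma \ref{lem2}, the paper merely phrasing this as a recursive extraction of the coefficient of $w^{2k-1}$ rather than as an explicit interchange of sum and integral. Your uniform-convergence justification of that interchange is a point the paper leaves implicit, but it is the same argument in substance.
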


\begin{proof}
We need to compute the coefficient of $w^{2k-1}$ in the Taylor expansion of $e^{Lw+ \frac{t_1}{w} + \cdots + \frac{t_n}{w^n} }$. Let $a_m(i)$ be the coefficient of $w^i$ in the Taylor expansion of $e^{Lw+ \frac{t_1}{w} + \cdots + \frac{t_m}{w^m} }$ for $m=1,2,\ldots,n$, then  $a_n(2k-1) = \sum_{m_n=0}^\infty \frac{t_n^{m_n}}{m_n!} a_{n-1}(nm_n+2k-1)$. So by Lemma \ref{lem2},
\beas
a_n(2k-1) &=& \sum_{m_2,\ldots,m_n=0}^\infty \, \left(\prod_{j=2}^n \, \frac{t_j^{m_j}}{m_j!} \right) a_1 (\sum_{j=2}^n jm_j + 2k-1 ) \\
&=& \sum_{m_2,\ldots,m_n=0}^\infty \, \left(\prod_{j=2}^n \, \frac{t_j^{m_j}}{m_j!} \right) \, I_{\sum_{j=2}^n jm_j + 2k-1}(2\sqrt{Lt_1}) \left(\frac{L}{t_1}\right)^{\frac{\sum_{j=2}^n jm_j + 2k-1}{2}}.
\eeas
\end{proof}

We now state a fact about higher order derivatives of determinants of functions.

\begin{lem}
\label{key-lem3}
Let $s\geq 0$, $k\geq 1$ be integers and $a_{i,j}(x)$ be $s$-th differentiable functions of $x$. Then
\[\Big(\frac{d}{dx}\Big)^{s}\det_{k\times k}(a_{i,j}(x))=\sum_{\substack{l_1+\cdots+l_k=s \\ l_1\geq 0,\ldots,l_k\geq 0}}\binom{s}{l_1,\ldots,l_k}\det_{k\times k}\Big(a_{i,j}^{(l_{i})}(x)\Big),\]
where $a_{i,j}^{(l_{i})}(x)$ means that we take the $l_{i}$-th derivative of $a_{i,j}(x)$.
\end{lem}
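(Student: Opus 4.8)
The plan is to prove Lemma \ref{key-lem3} by induction on $k$, expanding the determinant along its first row (or equivalently, using multilinearity of the determinant in its rows). First I would recall that the determinant $\det_{k\times k}(a_{i,j}(x))$ is a multilinear function of the $k$ row vectors $R_i(x) := (a_{i,1}(x), \ldots, a_{i,k}(x))$, so it may be written as $D(R_1(x), \ldots, R_k(x))$ where $D$ is multilinear. Applying the Leibniz rule for the $s$-th derivative of a product of $k$ functions (the multinomial generalization of the Leibniz rule), one gets
\[
\Big(\frac{d}{dx}\Big)^s D(R_1(x), \ldots, R_k(x)) = \sum_{l_1+\cdots+l_k = s} \binom{s}{l_1, \ldots, l_k} D\big(R_1^{(l_1)}(x), \ldots, R_k^{(l_k)}(x)\big),
\]
where $R_i^{(l_i)}(x)$ is the entrywise $l_i$-th derivative of the $i$-th row. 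Since $D\big(R_1^{(l_1)}(x), \ldots, R_k^{(l_k)}(x)\big)$ is exactly $\det_{k\times k}\big(a_{i,j}^{(l_i)}(x)\big)$, this yields the claim.

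To make this rigorous without hand-waving about ``multilinear Leibniz,'' I would carry out the following steps in order. First, record the ordinary Leibniz rule for products of two functions, and then prove by a short induction on $k$ that for any $k$-linear map $D$ and differentiable vector-valued functions $R_1(x), \ldots, R_k(x)$,
\[
\Big(\frac{d}{dx}\Big)^s D(R_1(x), \ldots, R_k(x)) = \sum_{\substack{l_1 + \cdots + l_k = s \\ l_i \geq 0}} \binom{s}{l_1, \ldots, l_k} D\big(R_1^{(l_1)}(x), \ldots, R_k^{(l_k)}(x)\big).
\]
The base case $k=1$ is trivial, and the inductive step writes $D(R_1, \ldots, R_{k+1})$ as a bilinear pairing of $D(R_1, \ldots, R_k, \cdot)$ (a $k$-linear map in the first $k$ arguments, valued in linear functionals of the last) with $R_{k+1}$, applies the two-function Leibniz rule, uses the inductive hypothesis on the $k$-linear part, and finally regroups the binomial coefficients using the identity $\binom{s}{m}\binom{m}{l_1, \ldots, l_k} = \binom{s}{l_1, \ldots, l_k, s-m}$. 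Second, I would instantiate $D$ as the determinant functional applied to the rows, note that row-wise differentiation of the matrix $(a_{i,j}(x))$ produces the matrix $\big(a_{i,j}^{(l_i)}(x)\big)$ when the $i$-th row is differentiated $l_i$ times, and conclude.

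The only mild subtlety — and hence the ``main obstacle,'' though it is quite minor — is bookkeeping the nested binomial/multinomial coefficients correctly in the inductive step so that they collapse to a single multinomial coefficient $\binom{s}{l_1, \ldots, l_k}$; this is purely the combinatorial identity $\sum_{m} \binom{s}{m}\binom{m}{l_1, \ldots, l_{k-1}}\binom{s-m}{l_k} \cdot [\,l_1 + \cdots + l_k = s\,]$ reducing appropriately, which is a routine consequence of grouping $s$ differentiations into $k$ labelled blocks. No analytic input beyond $s$-fold differentiability of the entries is needed, since everything is a finite algebraic identity at each fixed $x$. An even slicker alternative, which I would mention as a remark rather than the main argument, is to introduce $k$ independent auxiliary variables $x_1, \ldots, x_k$, observe that $\det_{k\times k}(a_{i,j}(x)) = \big(\prod_{i=1}^k \partial_{x_i}\big)^0$-type substitution $\det_{k\times k}(a_{i,j}(x_i))\big|_{x_1 = \cdots = x_k = x}$ after putting $x_i$ only in row $i$, so that $\big(\frac{d}{dx}\big)^s = \big(\sum_i \partial_{x_i}\big)^s$ expands by the multinomial theorem directly into the stated sum — this bypasses the induction entirely and matches the style of auxiliary-variable arguments used elsewhere in the paper.
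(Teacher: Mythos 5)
Your proposal is correct and is essentially the paper's own argument: the paper expands the determinant as a signed sum over permutations of the products $\prod_{i=1}^{k}a_{i,\sigma(i)}(x)$ and applies the multinomial Leibniz rule to each such product before reassembling the determinants, which is exactly the content of your multilinear Leibniz rule specialized to the rows of the matrix. Your auxiliary-variable remark (putting $x_i$ in row $i$ and expanding $\left(\sum_i \partial_{x_i}\right)^s$ by the multinomial theorem) is a valid and slightly slicker packaging, but the main route and the combinatorial bookkeeping coincide with the paper's proof.
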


\begin{proof}
Expand the determinant as a sum over all permutations $\sigma$ of the numbers $1,2,\ldots,k$:
\[\det_{k\times k}(a_{i,j}(x))=\sum_{\sigma \in S_k}\text{sign}(\sigma)\prod_{i=1}^{k}a_{i,\sigma(i)}(x).\]
Then
\beas
\Big(\frac{d}{dx}\Big)^{s}\det_{k\times k}(a_{i,j}(x)) &=& \sum_{\sigma \in S_k}\text{sign}(\sigma) \Big(\frac{d}{dx}\Big)^{s}\Big(\prod_{i=1}^{k}a_{i,\sigma(i)}(x)\Big) \\
&=&\sum_{\sigma \in S_k}\text{sign}(\sigma) \sum_{\substack{l_1+\cdots+l_k=s \\ l_1\geq 0,\ldots,l_k\geq 0}}\binom{s}{l_1,\ldots,l_k}\prod_{i=1}^{k}(\frac{d}{dx})^{l_{i}}(a_{i,\sigma(i)}(x))\\
&=&\sum_{\substack{l_1+\cdots+l_k=s \\ l_1\geq 0,\ldots,l_k\geq 0}}\binom{s}{l_1,\ldots,l_k}\sum_{\sigma \in S_k}\text{sign}(\sigma) \prod_{i=1}^{k}(\frac{d}{dx})^{l_{i}}(a_{i,\sigma(i)}(x))\\
&=&\sum_{\substack{l_1+\cdots+l_k=s \\ l_1\geq 0,\ldots,l_k\geq 0}}\binom{s}{l_1,\ldots,l_k}\det_{k\times k}\Big((\frac{d}{dx})^{l_{i}}(a_{i,j}(x))\Big).
\eeas
\end{proof}

The following are two identities concerning $\Lambda_{A}^{(n)}(1)$ and $Z_{A}^{(n)}(1)$. They ensure that the derivatives with respect to the shifts at 0 give $|\Lambda_{A}^{(n)}(1)|^2$ and $|Z_{A}^{(n)}(1)|^2$.

\begin{lem}\label{15}
For $n\geq 0$, 
\[\Lambda_{A^*}^{(n)}(1)=\overline{\Lambda_{A}^{(n)}(1)}\]
and
\[Z_{A^*}^{(n)}(1)=(-1)^{N} \overline{Z_{A}^{(n)}(1)}.\]
\end{lem}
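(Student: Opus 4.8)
The plan is to derive both identities directly from the definitions of $\Lambda_A$ and $Z_A$, together with the defining property of $A^*$, namely that the eigenvalues of $A^*$ are the complex conjugates $e^{-\i\theta_n}$ of the eigenvalues $e^{\i\theta_n}$ of $A$. First I would establish the functional-type relation between $\Lambda_{A^*}(s)$ and $\Lambda_A$ for $s$ on the unit circle. Writing $\Lambda_{A^*}(s) = \prod_{n=1}^N(1 - s e^{\i\theta_n})$, I observe that for $|s|=1$, say $s = e^{\i\phi}$, one has $1 - s e^{\i\theta_n} = -s e^{\i\theta_n}\overline{(1 - s e^{-\i\theta_n})} = -s e^{\i\theta_n}\overline{(1 - \bar s^{-1} e^{-\i\theta_n})}$; more cleanly, since $\overline{\Lambda_A(s)} = \prod_n (1 - \bar s\, e^{\i\theta_n})$ for arbitrary $s$, we get the polynomial identity $\overline{\Lambda_A(\bar s)} = \prod_n(1 - s e^{\i\theta_n}) = \Lambda_{A^*}(s)$ valid for all $s\in\C$ (both sides are polynomials in $s$ with conjugate-paired coefficients). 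This is the key algebraic fact: $\Lambda_{A^*}(s) = \overline{\Lambda_A(\bar s)}$ as polynomials.

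From here the first identity is immediate by differentiating $n$ times in $s$: since conjugation of coefficients commutes with $\frac{d}{ds}$, we get $\Lambda_{A^*}^{(n)}(s) = \overline{\Lambda_A^{(n)}(\bar s)}$, and evaluating at $s=1$ (which is its own conjugate) gives $\Lambda_{A^*}^{(n)}(1) = \overline{\Lambda_A^{(n)}(1)}$. For the second identity I would unwind the definition $Z_A(s) = e^{-\pi\i N/2} e^{\i\sum_n\theta_n/2} s^{-N/2}\Lambda_A(s)$. The scalar prefactor $c_A := e^{-\pi\i N/2}e^{\i\sum_n\theta_n/2}$ satisfies $\overline{c_A} = e^{\pi\i N/2}e^{-\i\sum_n\theta_n/2}$, and the prefactor for $A^*$ is $c_{A^*} = e^{-\pi\i N/2}e^{-\i\sum_n\theta_n/2}$ since the eigenphases of $A^*$ are $-\theta_n$. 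Thus $c_{A^*} = e^{-\pi\i N}\,\overline{c_A} = (-1)^N\overline{c_A}$. Combining with $\Lambda_{A^*}(s) = \overline{\Lambda_A(\bar s)}$ and noting that the branch of $s^{-N/2}$ is chosen to be positive (hence real, hence conjugation-invariant along the relevant portion of the real axis and in particular at $s=1$), one obtains near $s=1$ the relation $Z_{A^*}(s) = (-1)^N\overline{Z_A(\bar s)}$; differentiating $n$ times and setting $s=1$ yields $Z_{A^*}^{(n)}(1) = (-1)^N\overline{Z_A^{(n)}(1)}$.

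The one point requiring a little care — and the main obstacle — is the handling of the $s^{-N/2}$ factor when $N$ is odd, because then $Z_A$ is not a polynomial and the ``conjugate the coefficients'' argument does not literally apply; instead I would work locally. Near $s=1$ the function $s^{-N/2}$ is analytic with a power-series expansion having real coefficients (by the positivity choice of the branch), so $\overline{(\bar s)^{-N/2}} = s^{-N/2}$ in a neighbourhood of $s=1$, and likewise $\Lambda_A(s)$ restricted to this neighbourhood has the conjugate-pairing property $\overline{\Lambda_A(\bar s)} = \Lambda_{A^*}(s)$ as established above. Since all $n$-th derivatives at $s=1$ only depend on the germ of these functions at $s=1$, the identities for derivatives follow. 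Alternatively, and perhaps more transparently, I would use the stated functional equation $Z_A(s) = (-1)^N Z_{A^*}(1/s)$ together with the already-known reality $Z_A(e^{\i\theta})\in\R$ for real $\theta$; but the direct computation from the definition is cleaner and self-contained, so that is the route I would take.
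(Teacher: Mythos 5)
Your proof is correct and follows essentially the same route as the paper: verify $\Lambda_{A^*}(s)=\overline{\Lambda_A(\bar s)}$ from the definition to get the first identity, then observe that the scalar prefactor transforms as $c_{A^*}=(-1)^N\overline{c_A}$ and combine. The paper states the first step with only "it is not hard to check"; your additional care with the branch of $s^{-N/2}$ near $s=1$ fills in a detail the paper leaves implicit.
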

\begin{proof}
By the definition of $\Lambda_{A}(s)$, it is not hard to check that $\Lambda_{A^*}^{(n)}(1)=\overline{\Lambda_{A}^{(n)}(1)}$.
Since
\beas
Z_{A}(s) &=& e^{-\pi \i N/2}e^{\i \sum_{n=1}^{N}\theta_{n}/2}s^{-       N/2}\Lambda_{A}(s), \\
Z_{A^{*}}(s) &=& (-1)^{N}e^{\pi \i N/2}e^{-\i \sum_{n=1}^{N}\theta_{n}/2}s^{-       N/2}\Lambda_{A^{*}}(s).
\eeas
Combining with $\Lambda_{A^*}^{(n)}(1)=\overline{\Lambda_{A}^{(n)}(1)}$, we have
$Z_{A^*}^{(n)}(1)=(-1)^{N} \overline{Z_{A}^{(n)}(1)}$.
\end{proof}

We use $\Xi$ to denote the subset of permutations $\sigma\in S_{2k}$ of $\{1,2,\ldots,2k\}$ for which $\sigma(1)<\sigma(2)<\cdot\cdot\cdot<\sigma(k)$ and $\sigma(k+1)<\sigma(k+2)<\cdot\cdot\cdot<\sigma(2k)$.
The following lemma was proved in section 2 of \cite{conrey2003autocorrelation}.

\begin{lem}
\label{11101}
Assume that $\alpha_{1},\ldots,\alpha_{k}$ are distinct complex numbers. We have
\[\int_{\U(N)}\prod_{j=1}^{k}\Lambda_{A}(e^{-\alpha_{j}})\Lambda_{A^{*}}(e^{\alpha_{j+k}})dA_{N}=\sum_{\sigma\in \Xi}e^{N\sum_{j=1}^{k}(\alpha_{\sigma(j)}-\alpha_{j})}\prod_{1\leq i,j\leq k}z(\alpha_{\sigma(j)}-\alpha_{\sigma(k+i)}),\]
where $z(x)$ is defined in (\ref{defofz}).
\end{lem}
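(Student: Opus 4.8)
The plan is to use Weyl's integration formula to reduce the integral over $\U(N)$ to a symmetric integral over the eigenangles, then recognize the resulting object as a Cauchy-type determinant that can be evaluated in closed form. First I would write, using $\Lambda_A(e^{-\alpha_j}) = \prod_{n=1}^N(1 - e^{-\alpha_j}e^{-\i\theta_n})$ and $\Lambda_{A^*}(e^{\alpha_{j+k}}) = \prod_{n=1}^N(1 - e^{\alpha_{j+k}}e^{\i\theta_n})$, the integrand as a function on the torus, and apply Weyl's formula
\[
\int_{\U(N)} F(e^{\i\theta_1},\dots,e^{\i\theta_N})\,dA_N = \frac{1}{N!}\int_{[0,2\pi]^N} F\,\prod_{1\le m<n\le N}|e^{\i\theta_m}-e^{\i\theta_n}|^2\,\frac{d\theta_1\cdots d\theta_N}{(2\pi)^N}.
\]
This turns the left-hand side into an integral of a product of $N$ independent single-variable factors against $|\Delta|^2$, which is the classical setup for a Heine-type identity.

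Next I would convert the $|\Delta|^2$ weight into a Toeplitz/Andr\'eief determinant. The key step is Andr\'eief's (Heine's) identity: an integral of the form $\frac{1}{N!}\int \det(\phi_i(x_n))\det(\psi_j(x_n))\prod d\mu(x_n)$ equals $\det\big(\int \phi_i\psi_j\,d\mu\big)$. Writing $\Delta(e^{\i\theta})$ as a determinant of monomials and absorbing the rank-one factors $\prod_n(1-e^{-\alpha_j}e^{-\i\theta_n})$ and $\prod_n(1-e^{\alpha_{j+k}}e^{\i\theta_n})$ appropriately (this is where one must be careful to match the number $2k$ of extra factors against the matrix size $N$, and to track the powers of $e^{\i\theta}$ that shift the monomial indices), the integral collapses to an $N\times N$ or, after cancellation, a $(2k)\times(2k)$ determinant whose entries are geometric sums $\sum_{r=0}^{N-1} x^r$ or residue-type contour integrals. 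Evaluating these entries and pulling out the resulting rational prefactors produces a Cauchy determinant $\det\big(\frac{1}{1 - e^{\alpha_{j+k}-\alpha_i}}\big)$-type object, i.e. exactly the $z(\cdot)$ factors, with the exponential prefactors $e^{N\sum(\alpha_{\sigma(j)}-\alpha_j)}$ emerging from the boundary terms of the geometric sums.

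The final step is the Cauchy determinant evaluation: expanding $\det_{1\le i,j\le k}\big(z(\alpha_{\sigma(j)} - \alpha_{k+i})\big)$-type determinants by the Laplace/Leibniz rule over $S_{2k}$, and observing that grouping permutations by which indices land in the ``first block'' versus the ``second block'' is precisely the sum over the coset representatives $\sigma\in\Xi$ defined just before the statement. The Vandermonde factors from the Cauchy determinant cancel against the Vandermonde from Weyl's formula, leaving the clean product $\prod_{1\le i,j\le k} z(\alpha_{\sigma(j)} - \alpha_{\sigma(k+i)})$. I expect the main obstacle to be the bookkeeping in the middle step: correctly reducing the $N$-dimensional determinant to the $2k$-dimensional one (valid for $N$ large enough, which suffices since we take $N\to\infty$), keeping track of all index shifts in the monomials, and verifying that the spurious poles at $\alpha_i = \alpha_j$ cancel so that the distinctness hypothesis is only needed to make intermediate expressions well-defined. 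Since this lemma is quoted verbatim from \cite{conrey2003autocorrelation}, in practice I would cite that reference for the detailed computation and only sketch the Weyl-formula/Andr\'eief/Cauchy-determinant skeleton above.
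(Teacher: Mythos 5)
The paper does not actually prove this lemma: it states it and immediately attributes it to Section~2 of \cite{conrey2003autocorrelation}, so there is no in-paper argument to compare against, and your closing remark that in practice one would simply cite that reference is exactly what the authors do. Your proposed skeleton (Weyl integration, then Heine/Andr\'eief to convert the average of $\prod_n f(e^{\i\theta_n})$ against $|\Delta|^2$ into an $N\times N$ Toeplitz determinant with banded Laurent-polynomial symbol, then an exact evaluation producing the sum over the $\binom{2k}{k}$ coset representatives in $\Xi$) is a legitimate and well-documented alternative route --- it is essentially the Day/Widom banded-Toeplitz evaluation, or equivalently the dual Cauchy identity approach of Bump--Gamburd --- and the final structure you predict (an exponential factor $e^{N\sum_j(\alpha_{\sigma(j)}-\alpha_j)}$ per choice of $k$ of the $2k$ roots, times a product of $z$-factors) is correct.

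That said, as a proof the proposal has a real gap precisely at the step you flag as ``bookkeeping'': the passage from the $N\times N$ Toeplitz determinant to the closed $2k$-term formula is the entire content of the lemma, and ``the integral collapses to \dots a $(2k)\times(2k)$ determinant whose entries are geometric sums'' is asserted rather than derived; you would need to invoke (and verify the hypotheses of) Day's formula or carry out the dual Cauchy expansion in Schur functions explicitly. One concrete confusion: the Vandermonde from Weyl's formula is in the eigenangle variables and is consumed by the Andr\'eief step, so it cannot ``cancel against'' the Vandermonde factors of a Cauchy determinant in the shift variables $e^{\pm\alpha_j}$; the $\alpha$-Vandermondes that appear in the numerator of the Cauchy determinant must instead cancel against prefactors generated by the Toeplitz evaluation itself. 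Since the end product $\prod_{1\leq i,j\leq k}z(\alpha_{\sigma(j)}-\alpha_{\sigma(k+i)})$ contains no residual Vandermonde in the $\alpha$'s, this cancellation does occur, but not for the reason you give. None of this invalidates the plan, but it means the proposal is an outline of a known alternative proof rather than a self-contained argument; deferring to \cite{conrey2003autocorrelation}, as both you and the paper do, is the appropriate resolution.
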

By the above lemma and the definition of $Z_{A}(s)$,
\beas
&& \int_{\U(N)} \prod_{j=1}^k Z_A(e^{-\alpha_j}) Z_{A^*}(e^{\alpha_{j+k}})  dA_N \\
&=& (-1)^{Nk} e^{-\frac{N}{2} \sum_{j=1}^{2k} \alpha_j} \sum_{\sigma \in\Xi} e^{N\sum_{j=1}^k \alpha_{\sigma(j)}} \prod_{1\leq i,j\leq k} Z(\alpha_{\sigma(j)} - \alpha_{\sigma(k+i)})
\eeas
By the residue theorem, the summations in Lemma \ref{11101} and the above formula can be expressed as integrals in the following way.

\begin{lem}[Lemma 3 of \cite{conrey2006moments}]
\label{11/10lemma1}
\beas
&& \int_{\U(N)} \prod_{j=1}^k \Lambda_A(e^{-\alpha_j}) \Lambda_{A^*}(e^{\alpha_{j+k}})  dA_N \\
&=&\frac{1}{k!(2\pi \i)^{k}}
\int_{|w_{i}|=1}e^{N\sum_{j=1}^{k}(w_{j}-\alpha_{j})}\prod_{\substack{1\leq i\leq k\\ 1\leq j\leq 2k}}z(w_{i}-\alpha_{j})\prod_{i\neq j}z(w_{i}-w_{j})^{-1}\prod_{j=1}^{k}dw_{j},
\eeas
and
\beas
&& \int_{\U(N)} \prod_{j=1}^k Z_A(e^{-\alpha_j}) Z_{A^*}(e^{\alpha_{j+k}})  dA_N \\
&=&(-1)^{Nk}\frac{e^{-\frac{N}{2}\sum_{j=1}^{2k}\alpha_{j}}}{k!(2\pi \i)^{k}} \int_{|w_{i}|=1}
e^{N\sum_{j=1}^{k}w_{j}}\prod_{\substack{1\leq i\leq k\\1\leq j\leq2k}}z(w_{i}-\alpha_{j})\prod_{i\neq j}z(w_{i}-w_{j})^{-1}\prod_{j=1}^{k}dw_{j}.
\eeas
\end{lem}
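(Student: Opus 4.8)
The plan is to evaluate the two contour integrals on the right-hand sides by an iterated application of the residue theorem, and to check that the outcome coincides with the sums appearing in Lemma~\ref{11101} and in the displayed identity for the $Z_A$-moment stated just before the lemma. Throughout I assume, as we may (and as holds in the intended applications, where one later differentiates in the $\alpha_j$ and sends them to $0$), that $\alpha_1,\ldots,\alpha_{2k}$ are distinct and small enough that all the points $\alpha_j$ lie strictly inside the unit disc; then each integrand is holomorphic in an annulus around every circle $|w_i|=1$, with only finitely many poles inside. Write the $\Lambda_A$-integrand as $F(w_1,\ldots,w_k)=e^{N\sum_{j=1}^{k}(w_{j}-\alpha_{j})}\prod_{1\leq i\leq k,\,1\leq j\leq 2k}z(w_{i}-\alpha_{j})\prod_{i\neq j}z(w_{i}-w_{j})^{-1}$, and recall that $z(x)=(1-e^{-x})^{-1}$ is meromorphic with simple poles exactly at $x\in 2\pi\i\mathbb{Z}$, the pole at $x=0$ having residue $1$, while $z(x)^{-1}=1-e^{-x}$ is entire with a simple zero at $x=0$. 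Consequently, for each fixed $i$ with the remaining variables on their contours, $w_i\mapsto F$ is meromorphic inside $|w_i|=1$ with only simple poles, located at $w_i=\alpha_1,\ldots,\alpha_{2k}$; the factors $z(w_i-w_j)^{-1}$ are entire and contribute no poles.

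Next I would integrate successively in $w_1,w_2,\ldots,w_k$. The step in $w_1$ replaces $\frac{1}{2\pi\i}\int_{|w_1|=1}F\,dw_1$ by $\sum_{j_1=1}^{2k}\operatorname{Res}_{w_1=\alpha_{j_1}}F$, which, since $z$ has residue $1$ at $0$, is simply $F$ with $w_1$ set equal to $\alpha_{j_1}$ and the factor $z(w_1-\alpha_{j_1})$ deleted. At the $w_2$-step, a putative pole at $w_2=\alpha_{j_1}$ is annihilated because the factor $z(w_1-w_2)^{-1}=z(\alpha_{j_1}-w_2)^{-1}$ vanishes there; hence only $w_2=\alpha_{j_2}$ with $j_2\neq j_1$ survives. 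Iterating, the $k$-fold integral equals $\frac{1}{k!}$ times the sum, over ordered tuples of \emph{distinct} indices $(j_1,\ldots,j_k)$ from $\{1,\ldots,2k\}$, of $F$ evaluated at $w_i=\alpha_{j_i}$ with each $z(w_i-\alpha_{j_i})$ deleted. Since $F$ is symmetric in $w_1,\ldots,w_k$, the $k!$ orderings of a given $k$-element subset all contribute equally, so the sum collapses to one term for each way of splitting $\{1,\ldots,2k\}$ into a chosen set $\{\sigma(1),\ldots,\sigma(k)\}$ and its complement $\{\sigma(k+1),\ldots,\sigma(2k)\}$, $\sigma\in\Xi$, and the factor $\frac{1}{k!}$ cancels.

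To finish the $\Lambda_A$-case I would identify each $\sigma$-term. Evaluating $F$ at $w_i=\alpha_{\sigma(i)}$: the exponential gives $e^{N\sum_{j=1}^k(\alpha_{\sigma(j)}-\alpha_j)}$; among the surviving factors $z(w_i-\alpha_j)$, those with $j=\sigma(i')$, $i'\neq i$, produce $\prod_{i\neq i'}z(\alpha_{\sigma(i)}-\alpha_{\sigma(i')})$, which cancels exactly against the matching factors of $\prod_{i\neq j}z(w_i-w_j)^{-1}$, while those with $j=\sigma(k+i')$ assemble into $\prod_{1\leq i,i'\leq k}z(\alpha_{\sigma(i)}-\alpha_{\sigma(k+i')})$. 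This is precisely the $\sigma$-summand of Lemma~\ref{11101}, so the contour integral equals that sum, hence equals the $\Lambda_A$-moment. For the $Z_A$-case, the $Z_A$-integrand differs from $F$ only by the $w$-independent scalar $(-1)^{Nk}e^{-\frac{N}{2}\sum_{j=1}^{2k}\alpha_j}e^{N\sum_{j=1}^{k}\alpha_j}$, so the identical residue computation applies, multiplying each $\sigma$-term by this scalar; using $e^{N\sum_{j=1}^k\alpha_j}\cdot e^{N\sum_{j=1}^k(\alpha_{\sigma(j)}-\alpha_j)}=e^{N\sum_{j=1}^k\alpha_{\sigma(j)}}$, the answer matches $(-1)^{Nk}e^{-\frac{N}{2}\sum_{j=1}^{2k}\alpha_j}\sum_{\sigma\in\Xi}e^{N\sum_{j=1}^k\alpha_{\sigma(j)}}\prod_{1\leq i,j\leq k}z(\alpha_{\sigma(j)}-\alpha_{\sigma(k+i)})$, which by the identity preceding the lemma is the $Z_A$-moment.

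The bulk of this is routine pole bookkeeping; the one point genuinely deserving care — and the step I would write out in full — is the combined observation that (i) coincident poles $w_i=w_{i'}=\alpha_j$ contribute nothing, thanks to the simple zeros of $z(w_i-w_j)^{-1}$, so the iterated residues range only over injections $\{1,\ldots,k\}\hookrightarrow\{1,\ldots,2k\}$, and (ii) the diagonal factors $z(\alpha_{\sigma(i)}-\alpha_{\sigma(i')})$ with $i'\neq i$ cancel precisely against $\prod_{i\neq j}z(w_i-w_j)^{-1}$, so that only the cross-terms between the two halves remain. Beyond that I do not anticipate any obstacle, since this is the standard ``collect the residues inside the contour'' manipulation familiar from the CFKRS framework.
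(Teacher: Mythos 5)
Your argument is correct and is precisely the residue-theorem derivation that the paper itself invokes (it states only ``by the residue theorem'' and cites Lemma 3 of \cite{conrey2006moments}, whose proof is this same computation): iterated residues at $w_i=\alpha_j$, cancellation of coincident poles by the entire factors $z(w_i-w_j)^{-1}$, and matching of the surviving terms with the sum over $\sigma\in\Xi$ in Lemma \ref{11101}. No discrepancy with the paper's approach.
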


\begin{rem}\label{remark1}
\rm{ The integral contours $\{w_{i}:|w_{i}|=1\}$ can be replaced by any contours enclosing the variables $\alpha_{i}$, by the residue theorem.
}
\end{rem}

The following lemma is a non-symmetric version of \cite[Lemma 5]{conrey2006moments}.

\begin{lem} \label{lem2211}
Let $f_1,\ldots,f_k$ be $2k-2$ times differentiable functions. Let $\Delta(\frac{d}{dL})$ be the differential operator
\[
\Delta(\frac{d}{dL}):=\prod_{1\leq i<j\leq k}\Big(\frac{d}{dL_{j}}-\frac{d}{dL_{i}}\Big)=\det_{k\times k}\Big(\frac{d^{j-1}}{dL_{i}^{j-1}}\Big).
\]
Then
\[
\Delta^2(\frac{d}{dL}) \prod_{i=1}^k f_i(L_i)  \Bigg|_{L_i=X} \\
= \sum_{\sigma \in S_k} \det_{k\times k} (f_{\sigma(i)}^{(i + j-2)}(X) ) .
\]
In particular, when $f_1=\cdots=f_k=f$, we have
\[
\Delta^2(\frac{d}{dL})(\prod_{i=1}^{k}f(L_{i}))\Big|_{L_i=X}=k!\det_{k\times k}(f^{(i+j-2)}(X)) .
\]
\end{lem}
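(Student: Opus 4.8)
The plan is to expand the operator $\Delta^2(\frac{d}{dL})$ as a determinant product and then track how the resulting differential monomials act on the product $\prod_{i=1}^k f_i(L_i)$. Since $\Delta(\frac{d}{dL}) = \det_{k\times k}(\frac{d^{j-1}}{dL_i^{j-1}})$, we can write
\[
\Delta(\tfrac{d}{dL}) = \sum_{\tau \in S_k} \mathrm{sign}(\tau) \prod_{i=1}^k \frac{d^{\tau(i)-1}}{dL_i^{\tau(i)-1}},
\qquad
\Delta(\tfrac{d}{dL})^2 = \sum_{\tau,\rho \in S_k} \mathrm{sign}(\tau)\mathrm{sign}(\rho) \prod_{i=1}^k \frac{d^{\tau(i)+\rho(i)-2}}{dL_i^{\tau(i)+\rho(i)-2}}.
\]
Because the variables $L_1,\dots,L_k$ are separated in $\prod_i f_i(L_i)$, applying this operator and then setting all $L_i = X$ gives $\sum_{\tau,\rho} \mathrm{sign}(\tau)\mathrm{sign}(\rho) \prod_{i=1}^k f_i^{(\tau(i)+\rho(i)-2)}(X)$. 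The first step is thus to reduce the claim to the combinatorial identity
\[
\sum_{\tau,\rho \in S_k} \mathrm{sign}(\tau)\mathrm{sign}(\rho) \prod_{i=1}^k f_i^{(\tau(i)+\rho(i)-2)}(X)
= \sum_{\sigma \in S_k} \det_{k\times k}\bigl(f_{\sigma(i)}^{(i+j-2)}(X)\bigr).
\]

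Next I would reorganize the left-hand double sum by substituting $\sigma = \tau \circ \rho^{-1}$ (equivalently, relabeling the index $i$ via $\rho$). Fixing $\rho$ and writing $\tau = \sigma\rho$, we have $\mathrm{sign}(\tau)\mathrm{sign}(\rho) = \mathrm{sign}(\sigma)$, and reindexing the product over $i$ by $\rho^{-1}$ turns $\prod_i f_i^{(\sigma\rho(i)+\rho(i)-2)}(X)$ into $\prod_{i'} f_{\rho^{-1}(i')}^{(\sigma(i') + i' - 2)}(X)$. The remaining sum over $\rho$ at fixed $\sigma$ is $\sum_{\rho} \mathrm{sign}(\sigma) \prod_{i} f_{\rho^{-1}(i)}^{(\sigma(i)+i-2)}(X)$; but the sum over $\rho$ of the product $\prod_i f_{\rho^{-1}(i)}^{(\sigma(i)+i-2)}(X)$ is precisely the permanent-like expansion that collapses — more carefully, summing over $\rho$ and the leftover sign structure from $\sigma$ reproduces $\det_{k\times k}(f_{\sigma(i)}^{(i+j-2)}(X))$ after one more relabeling. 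I would carry out this bookkeeping by first reindexing so that the derivative orders read $i+j-2$ with $j = \tau(i)$, recognizing the column index, and then observing that summing the Leibniz expansion of the determinant $\det(f_{\sigma(i)}^{(i+j-2)})$ over $j$-permutations is exactly what the $\rho$-sum produces. The ``in particular'' case $f_1=\cdots=f_k=f$ then follows immediately, since every term in $\sum_{\sigma\in S_k}$ equals the same determinant $\det(f^{(i+j-2)}(X))$, giving the factor $k!$.

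The main obstacle is getting the index gymnastics exactly right: one must be careful that the two independent permutations $\tau,\rho$ (coming from the two factors of the Vandermonde operator) recombine into one ``which function goes in which row'' permutation $\sigma$ plus one ``which derivative order in which column'' permutation that is then reabsorbed into a determinant via the Leibniz formula, and that all the signs match. A clean way to package this, which I would use to avoid sign errors, is to note that $\Delta^2(\frac{d}{dL})\prod_i f_i(L_i)$ evaluated at $L_i = X$ equals $\det_{k\times k}\bigl(\frac{d^{j-1}}{dL_i^{j-1}}\bigr)$ applied twice, i.e. the Cauchy–Binet / Andréief-type expansion of a product of two determinants, so that
\[
\Delta^2(\tfrac{d}{dL})\prod_{i=1}^k f_i(L_i)\Big|_{L_i=X}
= \sum_{\sigma\in S_k}\mathrm{sign}(\sigma)\det_{k\times k}\Bigl(f_i^{(\sigma(i)+j-2)}(X)\Bigr),
\]
and then a single row permutation $i \mapsto \sigma^{-1}(i)$ inside each determinant converts this into $\sum_{\sigma}\det(f_{\sigma(i)}^{(i+j-2)}(X))$, the sign being exactly cancelled by the row swap. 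This is the non-symmetric refinement of \cite[Lemma 5]{conrey2006moments}, whose symmetric version is recovered by setting all $f_i$ equal.
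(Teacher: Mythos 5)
Your proposal is correct and follows essentially the same route as the paper: both expand $\Delta^2(\frac{d}{dL})$ into a double sum over permutations, recognize the inner signed sum as $\det_{k\times k}(f_i^{(\sigma(i)+j-2)}(X))$ via the Leibniz formula, and then absorb $\mathrm{sign}(\sigma)$ by a row permutation to reach $\sum_{\sigma}\det_{k\times k}(f_{\sigma(i)}^{(i+j-2)}(X))$. The ``clean way'' you give at the end is exactly the paper's argument (the paper merely organizes it as two successive applications of $\Delta(\frac{d}{dL})$), and your somewhat hand-wavy middle paragraph about $\sigma=\tau\circ\rho^{-1}$ is superseded by it.
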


\begin{proof}
Firstly,
\beas
\Delta(\frac{d}{dL}) \prod_{i=1}^k f_i(L_i) 
&=& \det_{k\times k} \left( \frac{d^{j-1}}{dL_i^{j-1}} \right) \prod_{i=1}^k f_i(L_i) \\
&=& \sum_{\sigma\in S_k} {\rm sign}(\sigma) 
\left(\frac{d}{dL_1}  \right)^{\sigma(1)-1} \left(\frac{d}{dL_2}  \right)^{\sigma(2)-1}\cdots \left(\frac{d}{dL_k}  \right)^{\sigma(k)-1} \prod_{i=1}^k f_i(L_i) \\
&=& \sum_{\sigma \in S_k} {\rm sign}(\sigma) \prod_{i=1}^k f_i^{(\sigma(i)-1)}(L_i)  \\
&=& \det_{k\times k} (f_i^{(j-1)}(L_i)).
\eeas
Secondly,
\beas
\Delta^2(\frac{d}{dL}) \prod_{i=1}^k f_i(L_i)  \Bigg|_{L_i=X} 
&=& \Delta(\frac{d}{dL}) \det_{k\times k} (f_i^{(j-1)}(L_i))  \Bigg|_{L_i=X} \\
&=& \Delta(\frac{d}{dL}) \sum_{\sigma} {\rm sign}(\sigma) \prod_{i=1}^k f_i^{(\sigma(i)-1)}  (L_i)\Bigg|_{L_i=X} \\
&=& \sum_{\tau\in S_k} \sum_{\sigma\in S_k} {\rm sign}(\sigma) 
 {\rm sign}(\tau) \prod_{i=1}^k f_i^{(\sigma(i)-1+\tau(i)-1)}  (L_i)\Bigg|_{L_i=X} \\
&=&  \sum_{\sigma\in S_k} {\rm sign}(\sigma) 
\det_{k\times k} (f_i^{(\sigma(i) + j-2)}(X) ) \\
&=& \sum_{\sigma\in S_k} \det_{k\times k} (f_{\sigma(i)}^{(i + j-2)}(X) ) .
\eeas
\end{proof}

The following proposition will be used in the proofs of our main theorems \ref{general mixed moment on Lambda} and \ref{general 1128mixed moment}.

\begin{prop}
\label{lemmaonthe integral}

Let $k\geq 1$, $Q_{l}\geq 0$ be integers for $l=1,\dots,k$. Then we have
\beas
\frac{1}{(2\pi \i)^{k}} \int_{|w_i|=1}  \frac{e^{\sum_{j=1}^k N w_j}\Delta^2(w)}{\prod_{l=1}^k w_l^{2k+Q_{l}}} \prod_{j=1}^k dw_j 
= \sum_{\sigma \in S_k} \det_{k\times k} \left( \frac{N^{2k+Q_{\sigma(i)}-1-(i+j-2)}}{(2k+Q_{\sigma(i)}-1-(i+j-2))!} \right).
\eeas

\end{prop}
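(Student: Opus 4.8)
The plan is to turn the contour integral into the action of a constant-coefficient differential operator on an elementary function, and then apply Lemma~\ref{lem2211}. First I would introduce auxiliary variables $L_1,\dots,L_k$ and use $\frac{d}{dL_j}e^{L_j w_j}=w_j e^{L_j w_j}$, together with the fact that $\Delta^2(\frac{d}{dL})$ is a polynomial in the commuting operators $\frac{d}{dL_1},\dots,\frac{d}{dL_k}$, to write
\[
\Delta^2(w)\, e^{\sum_{j=1}^k N w_j}
= \Delta^2\!\Big(\tfrac{d}{dL}\Big)\prod_{j=1}^k e^{L_j w_j}\,\Bigg|_{L_1=\cdots=L_k=N}.
\]
Since the integrand depends on the $L_j$ only through the entire functions $e^{L_j w_j}$ and the contours $|w_i|=1$ are fixed and compact, differentiation under the integral sign is justified, so
\[
\frac{1}{(2\pi \i)^{k}} \int_{|w_i|=1}  \frac{e^{\sum_{j=1}^k N w_j}\Delta^2(w)}{\prod_{l=1}^k w_l^{2k+Q_{l}}} \prod_{j=1}^k dw_j
= \Delta^2\!\Big(\tfrac{d}{dL}\Big)\left[\frac{1}{(2\pi\i)^k}\int_{|w_i|=1}\frac{\prod_{j=1}^k e^{L_j w_j}}{\prod_{l=1}^k w_l^{2k+Q_l}}\prod_{j=1}^k dw_j\right]_{L_j=N}.
\]

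Next I would evaluate the inner integral, which factorises over the $w_l$; for each factor the residue theorem (equivalently, reading off a Taylor coefficient) gives $\frac{1}{2\pi\i}\int_{|w_l|=1}\frac{e^{L_l w_l}}{w_l^{2k+Q_l}}\,dw_l = \frac{L_l^{2k+Q_l-1}}{(2k+Q_l-1)!}$. Hence the bracketed expression equals $\prod_{l=1}^k f_l(L_l)$ with $f_l(L):=L^{2k+Q_l-1}/(2k+Q_l-1)!$, and applying Lemma~\ref{lem2211} with these $f_l$ and $X=N$ yields $\sum_{\sigma\in S_k}\det_{k\times k}\big(f_{\sigma(i)}^{(i+j-2)}(N)\big)$. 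Finally, differentiating the monomial $f_{\sigma(i)}(L)=L^{2k+Q_{\sigma(i)}-1}/(2k+Q_{\sigma(i)}-1)!$ exactly $i+j-2$ times gives $L^{2k+Q_{\sigma(i)}-1-(i+j-2)}/(2k+Q_{\sigma(i)}-1-(i+j-2))!$ when the exponent is nonnegative and $0$ otherwise; with the standard convention that $1/m!=0$ for a negative integer $m$, both cases are captured by the single formula, and evaluating at $L=N$ produces exactly the claimed entries.

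The argument is essentially bookkeeping, and the only points needing a word of care are the justification of differentiation under the integral (immediate, since everything in sight is entire in the $L_j$ and the contours are fixed and compact) and the reciprocal-factorial convention, which is what reconciles the vanishing of an over-differentiated monomial with the uniform statement of the result. The one genuinely substantive step — converting $\Delta^2(\frac{d}{dL})$ acting on a product of one-variable functions into a sum of single $k\times k$ determinants — is provided entirely by Lemma~\ref{lem2211}, so no essential difficulty remains beyond assembling these pieces.
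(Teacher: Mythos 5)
Your proposal is correct and follows essentially the same route as the paper: pull $\Delta^2(w)e^{N\sum w_j}$ out of the integral as $\Delta^2(\frac{d}{dL})$ acting on $\prod_j e^{L_j w_j}$, evaluate the factorised residue integrals to get the monomials $L^{2k+Q_l-1}/(2k+Q_l-1)!$, and apply Lemma~\ref{lem2211}. Your explicit remark about the convention $1/m!=0$ for negative integers $m$ is a small clarification the paper leaves implicit, but otherwise the two arguments coincide.
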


\begin{proof}
Note that
\bea
\frac{1}{(2\pi \i)^{k}}\int_{|w_i|=1} \frac{e^{\sum_{j=1}^k N w_j}\Delta^2(w)}{\prod_{l=1}^k w_l^{2k+Q_{l}}} \prod_{j=1}^k dw_j 
=
\frac{1}{(2\pi \i)^{k}} \Delta^2(\frac{d}{dL}) \int_{|w_i|=1} \frac{e^{\sum_{j=1}^k w_jL_j}}{\prod_{l=1}^k w_l^{2k+Q_{l}}} \prod_{j=1}^k dw_j \Bigg|_{L_j=N}.
\label{eq6}
\eea
Let
\[
f_j(L_j) := \frac{1}{2\pi \i}  \int_{|w|=1} \frac{e^{L_jw}}{w^{2k+Q_{j}}} dw.
\]
Then
\[(\ref{eq6})
=\Delta^2(\frac{d}{dL})  \prod_{j=1}^k f_j(L_j) \Bigg|_{L_j=N}.
\]
By Lemma \ref{lem2211},
\beas
(\ref{eq6}) = \sum_{\sigma \in S_k} \det_{k\times k} (f_{\sigma(i)}^{(i+j-2)} (N)).
\eeas

Note that $f_j(L_j)$ is the coefficient of $w^{2k+Q_{j}-1}$ in the Taylor expansion of $e^{L_j w}$, namely
\[
f_i(L_i) = \frac{(L_i)^{2k+Q_{i}-1}}{(2k+Q_{i}-1)!}.
\]
So
\beas
(\ref{eq6})
= \sum_{\sigma\in S_k} \det_{k\times k} \left( \frac{N^{2k+ Q_{\sigma(i)}-1-(i+j-2)}}{(2k+Q_{\sigma(i)}-1-(i+j-2))!} \right).
\eeas
\end{proof}

The following lemma follows from the proof of \cite[Theorem 3, see (4.39)-(4.41)]{conrey2006moments}.

\begin{lem}\label{determinant}
\[\det_{k\times k}\Big(\frac{1}{(2k+1+m_{i}-i-j)!}\Big)=\prod_{i=1}^{k}\frac{1}{(2k-i+m_{i})!}\prod_{1\leq i<j\leq k}(m_{j}-m_{i}-j+i).\]
\end{lem}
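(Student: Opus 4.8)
The plan is to evaluate the determinant $\det_{k\times k}\big(\frac{1}{(2k+1+m_i-i-j)!}\big)$ by pulling out a common factor from each row and reducing to a Vandermonde-type determinant. First I would fix the row index $i$ and set $a_i := 2k-i+m_i$, so that the $(i,j)$-entry is $\frac{1}{(a_i + 1 - j)!}$. For a fixed $i$, as $j$ runs from $1$ to $k$ the quantity $a_i+1-j$ runs over $a_i, a_i-1, \ldots, a_i-k+1$. The idea is to factor $\frac{1}{(a_i-k+1)!}$ out of row $i$; after this, the $(i,j)$-entry becomes
\[
\frac{(a_i-k+1)!}{(a_i+1-j)!} = (a_i+1-j)(a_i-j)\cdots(a_i-k+2),
\]
which is a falling factorial, hence a monic polynomial in $a_i$ of degree $k-j$. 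Here one should be slightly careful about the case $a_i - k + 1 < 0$, i.e.\ when $2k-i+m_i < k-1$; since $m_i \geq 0$ and $i \leq k$ we have $2k-i+m_i \geq k \geq k-1$, so in fact $a_i - k + 1 \geq 1 > 0$ and all factorials are well defined (and positive). So the factoring step is legitimate.

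Next I would observe that after this row operation the determinant equals $\big(\prod_{i=1}^k \frac{1}{(a_i-k+1)!}\big)$ times $\det_{k\times k}\big(p_{k-j}(a_i)\big)$, where $p_{k-j}$ is a monic polynomial of degree $k-j$ in its argument. Since $p_{k-j}$ ranges over degrees $k-1, k-2, \ldots, 1, 0$ as $j$ ranges over $1, \ldots, k$, this is a ``generalized Vandermonde'' determinant built from a triangular family of monic polynomials; by column operations (expressing each $p_{k-j}(a)$ in terms of the monomials $a^{k-1}, \ldots, a, 1$ and clearing lower-order terms upward) it equals the ordinary Vandermonde determinant $\det_{k\times k}(a_i^{k-j}) = \prod_{1\le i < j \le k}(a_i - a_j)$. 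Substituting back $a_i = 2k-i+m_i$ gives $a_i - a_j = (2k-i+m_i) - (2k-j+m_j) = m_i - m_j - i + j$, so
\[
\prod_{1\le i<j\le k}(a_i-a_j) = \prod_{1\le i<j\le k}(m_i - m_j - i + j) = (-1)^{\binom{k}{2}}\prod_{1\le i<j\le k}(m_j - m_i - j + i).
\]
Collecting the row factors $\prod_{i=1}^k \frac{1}{(a_i-k+1)!} = \prod_{i=1}^k \frac{1}{(2k-i+m_i-k+1)!} = \prod_{i=1}^k \frac{1}{(k-i+1+m_i)!}$, I would note this does not literally match $\prod_i \frac{1}{(2k-i+m_i)!}$ as written in the statement, which signals that the correct normalization to factor out is not $\frac{1}{(a_i-k+1)!}$ but rather a shift aligned with the stated answer; one should instead factor so that the residual matrix is $\det\big(\binom{2k-i+m_i}{j-1}\big)$ up to reindexing, and the sign $(-1)^{\binom{k}{2}}$ must be tracked to land exactly on $\prod_{1\le i<j\le k}(m_j-m_i-j+i)$ with the factorials $\prod_i \frac{1}{(2k-i+m_i)!}$.

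The main obstacle I anticipate is precisely this bookkeeping: getting the row-factored factorials to be exactly $(2k-i+m_i)!$ and the Vandermonde factor to come out as $\prod_{1\le i<j\le k}(m_j-m_i-j+i)$ with the correct (here, apparently $+1$) sign, rather than an off-by-a-reflection or off-by-$(-1)^{\binom{k}{2}}$ version. The cleanest route is probably to reverse the column order at the outset (replace $j$ by $k+1-j$), which turns the entry into $\frac{1}{(2k-i+m_i - k + j)!} = \frac{1}{(k-i+m_i+j)!}$, or alternatively to recognize the matrix directly as a minor of the Pascal/binomial matrix $\big(\binom{2k-i+m_i}{j-1}\big)$ after multiplying row $i$ by $(2k-i+m_i)!$; then the determinant of a binomial matrix with row-parameters $2k-i+m_i$ is the standard generalized-Vandermonde evaluation $\frac{1}{\prod_j (j-1)!}\prod_{i<j}\big((2k-j+m_j)-(2k-i+m_i)\big) = \frac{1}{\prod_j (j-1)!}\prod_{i<j}(m_j-m_i-j+i)$, and one checks the factor $\prod_j (j-1)!$ cancels against the Vandermonde of the trivial part of the index set. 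Since the statement is quoted from \cite{conrey2006moments}, equations (4.39)--(4.41), I would if pressed simply cite that derivation, but the self-contained argument above (factor common factorials from rows, reduce to Vandermonde, substitute $a_i = 2k-i+m_i$, track the sign) is straightforward modulo the indexing care just described.
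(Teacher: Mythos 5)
The paper gives no proof of this lemma at all: it is imported from \cite{conrey2006moments} (the derivation around (4.39)--(4.41)), so your fallback of simply citing that source is exactly what the authors do. Your attempted self-contained argument has the right architecture (pull a factorial out of each row, reduce to a generalized Vandermonde determinant), but the key displayed identity is inverted, and the proof is left unfinished at precisely the point you flag as ``bookkeeping.'' Concretely, with $a_i = 2k-i+m_i$ the $(i,j)$ entry is $1/(a_i+1-j)!$, and factoring $1/(a_i-k+1)!$ out of row $i$ leaves $\frac{(a_i-k+1)!}{(a_i+1-j)!} = \frac{1}{(a_i+1-j)(a_i-j)\cdots(a_i-k+2)}$, the \emph{reciprocal} of a falling factorial --- not a polynomial in $a_i$ --- so the Vandermonde reduction does not apply to that matrix. (Check $j=k-1$: your right-hand side is $a_i-k+2$ while the left-hand side is $1/(a_i-k+2)$.)

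The correct normalization is the one you half-identify at the end: factor $1/a_i! = 1/(2k-i+m_i)!$ out of row $i$, i.e.\ the largest factorial in the row (the $j=1$ entry), not the smallest. Then the entry becomes $\frac{a_i!}{(a_i+1-j)!} = a_i(a_i-1)\cdots(a_i-j+2)$, a monic polynomial of degree $j-1$ in $a_i$, so column operations reduce the determinant to $\det_{k\times k}\big(a_i^{j-1}\big) = \prod_{1\leq i<j\leq k}(a_j-a_i) = \prod_{1\leq i<j\leq k}(m_j-m_i-j+i)$, while the extracted prefactor is exactly $\prod_{i=1}^k 1/(2k-i+m_i)!$. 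In particular your sign worry is spurious: with this normalization there is no column reflection and no $(-1)^{\binom{k}{2}}$ to track; the identity comes out on the nose. As written, though, your argument contains a false step and does not actually reach the stated formula, so it needs this repair before it counts as a proof.
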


\section{Main results}

In the following, we first calculate the integrals
\bea
\label{eq3}
\frac{1}{(2\pi \i)^k k!}   \int_{|w_i|=1} \frac{e^{N \sum_{i=1}^k w_i}}{\prod_{i=1}^k w_i^{2k}} \Delta^2(w) \left(\sum_{l=1}^k \frac{1}{w_l}  - \frac{N}{2} \right)^{m_1} \prod_{j=2}^n \left(\sum_{l=1}^k \frac{1}{w_l^j} \right)^{m_j} \prod_{i=1}^k dw_i
\eea
and
\be
\label{519eq4}
\frac{1}{(2\pi \i)^k k!}   \int_{|w_i|=1} \frac{e^{N \sum_{i=1}^k w_i}}{\prod_{i=1}^k w_i^{2k}} \Delta^2(w) \left(\sum_{l=1}^k \frac{1}{w_l}  - N\right)^{m_1} \prod_{j=2}^n \left(\sum_{l=1}^k \frac{1}{w_l^j} \right)^{m_j} \prod_{i=1}^k dw_i
\ee

\begin{prop}
\label{prop1}
Let $(m_1,\ldots,m_n)$ be a tuple of non-negative integers, then
\beas
(\ref{eq3}) &=& N^{k^2+\sum_{s=1}^n sm_s} \sum_{\substack{
\sum_{l=1}^k m_{sl} = m_s \\ 
s=2,\ldots,n}} 
\left(\,\prod_{s=2}^n \binom{m_s}{m_{s1},\ldots,m_{sk}} \right)
\frac{d^{m_1}}{dx^{m_1}} \Bigg( e^{-\frac{x}{2}} x^{-\frac{k^2}{2}-\frac{1}{2} \sum_{s=2}^n sm_s} \\
&&  \times \, \det_{k\times k} \left( 
I_{\sum_{s=2}^n sm_{si} + i+j-1}(2\sqrt{x})  \right)\Bigg) \Bigg|_{x=0}.
\eeas
More explicitly,
\beas
(\ref{eq3}) &=&  N^{k^2+\sum_{s=1}^n sm_s} \sum_{\substack{
\sum_{l=1}^k m_{sl} = m_s \\ s=2,\ldots,n \\
\sum_{l=0}^k m_{1l}=m_1}} 
 (-\frac{1}{2})^{m_{10}}
\left(\,\prod_{s=2}^n \binom{m_s}{m_{s1},\ldots,m_{sk}} \right) \binom{m_1}{m_{10},\ldots, m_{1k}}  \\
&&  \times \, \left(\prod_{i=1}^k \frac{1}{(2k+\sum_{s=1}^n sm_{si}-i)!} \right)
\prod_{1\leq i<j\leq k} \left(\sum_{s=1}^n sm_{sj} - \sum_{s=1}^n sm_{si}-j+i\right).
\eeas
\end{prop}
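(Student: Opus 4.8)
The plan is to prove the Bessel–determinant formula first and then read off the ``more explicit'' combinatorial expression from it.

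\emph{Auxiliary variables.} The first step is to introduce parameters $x,t_2,\dots,t_n$. Since
$\big(-\tfrac N2+\sum_l\tfrac1{w_l}\big)^{m_1}\prod_{j\ge2}\big(\sum_l\tfrac1{w_l^{\,j}}\big)^{m_j}
=\big(\tfrac d{dx}\big)^{m_1}\prod_{j\ge2}\big(\tfrac d{dt_j}\big)^{m_j}\big|_{0}\,e^{-Nx/2}\prod_l e^{x/w_l+\sum_{j\ge2}t_j/w_l^{\,j}}$
and $e^{-Nx/2}$ does not involve $w$, one gets
\[
(\ref{eq3})=\Big(\tfrac{d}{dx}\Big)^{m_1}\prod_{j=2}^n\Big(\tfrac{d}{dt_j}\Big)^{m_j}\bigg|_{0}\ e^{-Nx/2}\,\frac{1}{(2\pi\i)^k k!}\int_{|w_i|=1}\frac{\Delta^2(w)}{\prod_i w_i^{2k}}\prod_i e^{Nw_i+x/w_i+\sum_{j\ge2}t_j/w_i^{\,j}}\prod_i dw_i .
\]
I would compute the inner integral, then apply the $t_j$-derivatives, then multiply by $e^{-Nx/2}$ and apply the $x$-derivative, in that order.

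\emph{The Bessel determinant.} Writing $e^{Nw_i}=e^{L_iw_i}\big|_{L_i=N}$ and pulling out the operator $\Delta^2(\tfrac{d}{dL})$, the inner integral becomes $\tfrac1{k!}\Delta^2(\tfrac{d}{dL})\prod_i\phi(L_i)\big|_{L_i=N}$, where $\phi(L):=\tfrac1{2\pi\i}\int_{|w|=1}w^{-2k}e^{wL+x/w+\sum_{j\ge2}t_j/w^{\,j}}\,dw$, and by Lemma \ref{lem2211} (with $f_1=\cdots=f_k=\phi$) this equals $\det_{k\times k}\big(\phi^{(i+j-2)}(N)\big)$. Since $\phi^{(p)}(L)=\tfrac1{2\pi\i}\int w^{-(2k-p)}e^{wL+x/w+\sum_{j\ge2}t_j/w^{\,j}}\,dw$, the obvious extension of Lemma \ref{lem3} to an arbitrary power of $w$ writes each entry as a convergent power series in $t_2,\dots,t_n$ with coefficients of the form $I_{\bullet}(2\sqrt{Nx})(N/x)^{\bullet/2}$. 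Differentiating: $\prod_{j\ge2}(\tfrac{d}{dt_j})^{m_j}$ at $t_j=0$, applied entry-by-entry through Lemma \ref{key-lem3} (once per variable $t_j$), produces for each row $i$ a multinomial splitting $m_j=\sum_l m_{jl}$ with coefficient $\prod_{j\ge2}\binom{m_j}{m_{j1},\dots,m_{jk}}$ and turns the $(i,j)$ entry into $I_{q_i+2k-i-j+1}(2\sqrt{Nx})(N/x)^{(q_i+2k-i-j+1)/2}$ with $q_i:=\sum_{j\ge2}jm_{ji}$. Reversing the order of the rows and of the columns (net sign $+1$) and then relabelling the summation indices $m_{ji}\mapsto m_{j,k+1-i}$ (harmless, since the multinomial weights are symmetric in $i$) changes the index to $q_i+i+j-1$; splitting the prefactor $(N/x)^{(q_i+i+j-1)/2}$ into a row factor $(N/x)^{(q_i+i)/2}$ and a column factor $(N/x)^{(j-1)/2}$ and pulling both out of the determinant produces the global factor $(N/x)^{k^2/2+\frac12\sum_{j\ge2}jm_j}$. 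Reinstating $e^{-Nx/2}$, applying $(\tfrac{d}{dx})^{m_1}$ at $x=0$, and rescaling $x\mapsto x/N$ (contributing $N^{m_1}$ from the derivative and $N^{k^2/2+\frac12\sum_{j\ge2}jm_j}$ from the remaining $x^{-\beta}$) gives exactly the first displayed formula, with $\beta=k^2/2+\tfrac12\sum_{s=2}^n sm_s$ and overall power $N^{k^2+\sum_{s=1}^n sm_s}$.

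\emph{The explicit form.} Now I start from the Bessel formula just proved. Using $I_\nu(2\sqrt x)=x^{\nu/2}\sum_{l\ge0}x^l/\big((\nu+l)!\,l!\big)$, the factor $x^{-\beta}$ exactly cancels the $x$-powers emerging from the determinant, so $e^{-x/2}x^{-\beta}\det_{k\times k}\big(I_{q_i+i+j-1}(2\sqrt x)\big)=e^{-x/2}\det_{k\times k}\big(\sum_{l\ge0}x^l/((q_i+i+j-1+l)!\,l!)\big)$. Expanding $e^{-x/2}=\sum_p(-1/2)^px^p/p!$, applying Lemma \ref{key-lem3} to the $x$-derivative of this determinant and evaluating at $x=0$ gives $\sum_{m_{10}+\cdots+m_{1k}=m_1}\binom{m_1}{m_{10},\dots,m_{1k}}(-1/2)^{m_{10}}\det_{k\times k}\big(1/(M_i+i+j-1)!\big)$ with $M_i:=m_{1i}+q_i=\sum_{s=1}^n sm_{si}$. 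A direct evaluation of the last determinant (factor $1/(M_i+i+k-1)!$ from row $i$; what remains is a Vandermonde in the quantities $M_i+i$, up to column reversal) gives $(-1)^{k(k-1)/2}\prod_i\frac1{(M_i+i+k-1)!}\prod_{i<j}(M_j-M_i+j-i)$, and relabelling $i\mapsto k+1-i$ in the sum once more turns this into $\prod_i\frac1{(2k-i+M_i)!}\prod_{i<j}(M_j-M_i-j+i)$, the sign $(-1)^{k(k-1)/2}$ cancelling against the one produced when the Vandermonde factor is reindexed. This is the second displayed formula. (Equivalently, one can expand all the power sums multinomially, apply Proposition \ref{lemmaonthe integral}, collapse the sum over $S_k$ against the $1/k!$ using the symmetry of the weights, and then invoke Lemma \ref{determinant} with $m_i=M_i$.)

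\emph{Main obstacle.} The content is essentially bookkeeping; nothing deep is needed. The two delicate points are: (i) propagating the index shifts and signs through the row/column reversals so that $I_{q_i+2k-i-j+1}$ becomes $I_{q_i+i+j-1}$ with the correct scaling factor; and (ii) checking that the $(-1)^{k(k-1)/2}$ produced by Lemma \ref{determinant} (equivalently by the direct Vandermonde computation) is cancelled — which hinges on the innocuous but essential relabelling $i\mapsto k+1-i$ of the summation indices, legitimate precisely because the weights $\prod_s\binom{m_s}{m_{s1},\dots,m_{sk}}$ are symmetric in $i$. Interchanging the various derivatives with the contour integrals and with the convergent power series in $t_2,\dots,t_n$ is standard and needs only a word.
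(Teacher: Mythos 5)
Your proposal is correct and follows essentially the same route as the paper: auxiliary variables $t_1(=x),t_2,\dots,t_n$, Lemma \ref{lem2211} to produce $\det(\phi^{(i+j-2)}(N))$, Lemma \ref{lem3} for the Bessel series, Lemma \ref{key-lem3} for the $t_j$-derivatives, the row/column reversal $\det(a_{i,j})=\det(a_{k+1-i,k+1-j})$ with the symmetric relabelling of the $m_{si}$, the rescaling $Nt_1=x$, and finally Lemma \ref{determinant} (which you rederive directly as a Vandermonde) for the explicit form. The only cosmetic deviations — differentiating $\phi$ under the integral instead of differentiating the series of Lemma \ref{lem3} termwise, and evaluating the final determinant by hand rather than citing Lemma \ref{determinant} — do not change the argument.
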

\begin{proof}
In the following, we write $t=(t_1,\ldots,t_n)$ and when we say $t=0$ we mean $t_1=\cdots=t_n=0$.
We rewrite
\beas
(\ref{eq3}) = \frac{1}{(2\pi \i)^k k!} \prod_{i=1}^n \left( \frac{d}{dt_i} \right)^{m_i} e^{-\frac{N}{2}t_1} \Delta^2(\frac{d}{dL})   \int_{|w_i|=1}  \frac{e^{\sum_{i=1}^k(L_iw_i+\frac{t_1}{w_i}+ \frac{t_2}{w_i^2}+\cdots+\frac{t_n}{w_i^n}) }}{\prod_{i=1}^k 
 w_i^{2k}} \prod_{i=1}^k dw_i \Bigg|_{L_i=N,t=0}.
\eeas
Let
\[
f(x) = \frac{1}{2\pi \i} \int_{|w|=1} \frac{e^{xw+\sum_{j=1}^n\frac{t_j}{w^j}}}{w^{2k}} dw.
\]
By Lemma \ref{lem2211}, we have
\beas
(\ref{eq3}) = \prod_{i=1}^n \left( \frac{d}{dt_i} \right)^{m_i} e^{-\frac{N}{2}t_1} \det_{k\times k} (f^{(i+j-2)}(N))\Bigg|_{t=0}.
\eeas
By Lemma \ref{lem3},
\[
f(L) = \sum_{l_2,\ldots,l_n=0}^\infty \, \left(\prod_{j=2}^n \, \frac{t_j^{l_j}}{l_j!} \right) \, \sum_{l_1=0}^\infty \frac{L^{l_1} t_{1}^{l_1} L^{\sum_{j=2}^n jl_j+2k-1}}{l_{1}! (\sum_{j=2}^n jl_j+2k-1+l_1)!}.
\]
Since $i+j-2\leq 2k-2$, we have
\[
f^{(i+j-2)}(N) = \sum_{l_2,\ldots,l_n=0}^\infty \, \left(\prod_{j=2}^n \, \frac{t_j^{l_j}}{l_j!} \right) \, \sum_{l_1=0}^\infty \frac{N^{l_1} t_{1}^{l_1} N^{\sum_{j=2}^n jl_j+2k-1-(i+j-2)}}{l_{1}! (\sum_{j=2}^n jl_j+2k-1+l_1-(i+j-2))!}.
\]
Since $\det(a_{i,j}) = \det(a_{k+1-i,k+1-j})$,
\beas
(\ref{eq3}) = \prod_{l=1}^n \left( \frac{d}{dt_l} \right)^{m_l} e^{-\frac{N}{2}t_1}
\det_{k\times k} \left( 
\sum_{l_2,\ldots,l_n=0}^\infty \, \left(\prod_{s=2}^n \, \frac{t_s^{l_s}}{l_s!} \right) I_{\sum_{s=2}^n sl_s + i+j-1}(2\sqrt{Nt_1}) \left(\frac{N}{t_1}\right)^{\frac{\sum_{s=2}^n s l_s + i+j-1}{2}} \right)\Bigg|_{t=0}.
\eeas
By Lemma \ref{key-lem3}, we further obtain
\beas
(\ref{eq3}) = && \sum_{\substack{ m_{sl} \geq 0 \\
\sum_{l=1}^k m_{sl} = m_s \\ 
s=2,\ldots,n}} 
\left( \prod_{s=2}^n \binom{m_s}{m_{s1},\ldots,m_{sk}} \right)
\frac{d^{m_1}}{dt_1^{m_1}} \Bigg( e^{-\frac{N}{2}t_1}\\
&& \times \, \det_{k\times k} \left( 
I_{\sum_{s=2}^n sm_{si} + i+j-1}(2\sqrt{Nt_1}) \left(\frac{N}{t_1}\right)^{\frac{\sum_{s=2}^n s m_{si} +i+j-1}{2}} \right) \Bigg) \Bigg|_{t_1=0}.
\eeas
Let $Nt_1=x$, then
\beas
(\ref{eq3}) &=& N^{k^2+\sum_{s=1}^n sm_s} \sum_{\substack{m_{sl} \geq 0 \\
\sum_{l=1}^k m_{sl} = m_s \\ 
s=2,\ldots,n}} \left( \prod_{s=2}^n \binom{m_s}{m_{s1},\ldots,m_{sk}} \right)
\frac{d^{m_1}}{dx^{m_1}} \Bigg( e^{-\frac{x}{2}} x^{-\frac{k^2}{2}-\frac{1}{2} \sum_{s=2}^n sm_s} \\
&& \times \, \det_{k\times k} \left( 
I_{\sum_{s=2}^n sm_{si} + i+j-1}(2\sqrt{x})  \right) \Bigg) \Bigg|_{x=0}.
\eeas

Concerning the explicit formula of (\ref{eq3}), note that
\beas
&& \frac{d^{m_1}}{dx^{m_1}} \Bigg( e^{-\frac{x}{2}} x^{-\frac{k^2}{2}-\frac{1}{2} \sum_{s=2}^n sm_s} \det_{k\times k} \left( 
I_{\sum_{s=2}^n sm_{si} + i+j-1}(2\sqrt{x})  \right)\Bigg) \Bigg|_{x=0} \\
&=& \hspace{-.3cm} \sum_{\substack{m_{10},\ldots, m_{1k}\geq 0 \\ m_{10}+\ldots+ m_{1k}=m_1 }} \binom{m_1}{m_{10},\ldots, m_{1k}} (-\frac{1}{2})^{m_{10}} \det_{k\times k} \left(\left(  \sum_{l=0}^\infty \frac{x^l}{l! (\sum_{s=2}^n sm_{si}+i+j-1+l)! }\right)^{(m_{1i})}\right)\Bigg|_{x=0} \\
&=&  \hspace{-.3cm} \sum_{\substack{m_{10},\ldots, m_{1k}\geq 0 \\ m_{10}+\ldots+ m_{1k}=m_1}} \binom{m_1}{m_{10},\ldots, m_{1k}} (-\frac{1}{2})^{m_{10}} \det_{k\times k} \left(
\frac{1}{(\sum_{s=1}^n sm_{si}+i+j-1)!}
\right) .
\eeas
We next use Lemma \ref{determinant} to finish this proof. To use Lemma \ref{determinant}, we first change the variables $i\rightarrow k+1-i$, $j\rightarrow k+1-j$ so $m_{si}$ becomes $\tilde{m}_{si}:=m_{s(k+1-i)}$. The summation above over $m_{si}$ has a symmetric property in the sense that $\sum_{i=1}^k m_{si} = \sum_{i=1}^k \tilde{m}_{si}$. By this observation and Lemma \ref{determinant}, we obtain
the claim in this proposition.
\end{proof}

A similar argument to that of Proposition \ref{prop1} leads to the following result for (\ref{519eq4}).

\begin{prop}
\label{prop18}
For any tuple $(m_1,\ldots,m_n)$ of non-negative integers, we have
\beas
(\ref{519eq4}) &=& N^{k^2+\sum_{s=1}^n sm_s} \sum_{\substack{m_{sl}\geq 0 \\
\sum_{l=1}^k m_{sl} = m_s \\ 
s=2,\ldots,n}} \prod_{s=2}^n \binom{m_s}{m_{s1},\ldots,m_{sk}} \left(\frac{d}{dx}\right)^{m_1} \Bigg( e^{-x}x^{-\frac{k^2}{2}-\frac{1}{2} \sum_{s=2}^n sm_s} \\
&& \times \, \det_{k\times k} \left( 
I_{\sum_{s=2}^n sm_{si} + i+j-1}(2\sqrt{x})  \right)\Bigg) \Bigg|_{x=0}.
\eeas
More explicitly,
\beas
(\ref{519eq4}) &=&  N^{k^2+\sum_{s=1}^n sm_s} \sum_{\substack{s=2,\ldots,n\\
\sum_{l=1}^k m_{sl} = m_s \\
\sum_{l=0}^k m_{1l}=m_1}} 
 (-1)^{m_{10}}
\left(\,\prod_{s=2}^n \binom{m_s}{m_{s1},\ldots,m_{sk}} \right) \binom{m_1}{m_{10},\ldots, m_{1k}}  \\
&& \times \, \left(\prod_{i=1}^k \frac{1}{(2k+\sum_{s=1}^n sm_{si}-i)!} \right)
\prod_{1\leq i<j\leq k} \left(\sum_{s=1}^n sm_{sj} - \sum_{s=1}^n sm_{si}-j+i\right).
\eeas
\end{prop}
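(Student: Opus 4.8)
The plan is to mimic the proof of Proposition \ref{prop1} almost verbatim, tracking only the single difference: the prefactor $e^{-\frac{N}{2}t_1}$ there becomes $e^{-Nt_1}$ here, since the linear term inside the parenthesis in (\ref{519eq4}) is $\sum_l \frac{1}{w_l}-N$ rather than $\sum_l \frac{1}{w_l}-\frac{N}{2}$. Concretely, I would first rewrite (\ref{519eq4}) by introducing auxiliary variables $t_1,\ldots,t_n$ and the operator $\Delta^2(\frac{d}{dL})$ exactly as in Proposition \ref{prop1}, obtaining
\[
(\ref{519eq4}) = \prod_{i=1}^n \left(\frac{d}{dt_i}\right)^{m_i} e^{-Nt_1}\,\Delta^2(\tfrac{d}{dL})\int_{|w_i|=1}\frac{e^{\sum_{i=1}^k(L_iw_i+\sum_{j=1}^n t_j/w_i^j)}}{\prod_i w_i^{2k}}\prod_i dw_i\Bigg|_{L_i=N,\,t=0}.
\]
Then I apply Lemma \ref{lem2211} to the $L$-derivatives, identifying the relevant one-variable function $f(L)=\frac{1}{2\pi\i}\int_{|w|=1}\frac{e^{Lw+\sum_j t_j/w^j}}{w^{2k}}dw$, and expand $f^{(i+j-2)}(N)$ using Lemma \ref{lem3}, just as in Proposition \ref{prop1}. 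The entries of the determinant are the same shifted modified Bessel functions $I_{\sum_s sl_s+i+j-1}(2\sqrt{Nt_1})(N/t_1)^{(\cdots)/2}$, and Lemma \ref{key-lem3} distributes the $t_s$-derivatives for $s\geq 2$ into the multinomial coefficients $\binom{m_s}{m_{s1},\ldots,m_{sk}}$ and the index shifts exactly as before.

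The only genuinely new computation is the substitution $x=Nt_1$ in the $t_1$-derivative: here the factor $e^{-Nt_1}$ becomes $e^{-x}$ (instead of $e^{-x/2}$), which accounts for the change from $e^{-x/2}$ to $e^{-x}$ in the determinantal formula, and changes the sign bookkeeping in the explicit formula from $(-\frac12)^{m_{10}}$ to $(-1)^{m_{10}}$. I would note that $\left(\frac{d}{dt_1}\right)^{m_1}e^{-Nt_1}G(Nt_1)|_{t_1=0}=N^{m_1}\left(\frac{d}{dx}\right)^{m_1}(e^{-x}G(x))|_{x=0}$, which produces the overall power $N^{k^2+\sum_s sm_s}$ together with the $N^{\sum_{s\geq 2}sm_s}$ coming from the Bessel prefactors, matching the claimed exponent.

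For the explicit (second) formula, I would expand the $m_1$-fold derivative of $e^{-x}x^{-k^2/2-\frac12\sum_{s\geq2}sm_s}\det(I_{\cdots}(2\sqrt{x}))$ at $x=0$ using the power series $I_\nu(2\sqrt{x})x^{-\nu/2}=\sum_{l\geq0}x^l/(l!(\nu+l)!)$ together with Lemma \ref{key-lem3} to split the derivative across the exponential $e^{-x}$ and the $k$ diagonal Bessel entries, giving the multinomial $\binom{m_1}{m_{10},\ldots,m_{1k}}$ and the factor $(-1)^{m_{10}}$ from differentiating $e^{-x}$. Evaluating at $x=0$ collapses each entry to $\frac{1}{(\sum_s sm_{si}+i+j-1)!}$, and then—after the relabelling $i\mapsto k+1-i$, $j\mapsto k+1-j$ that symmetrizes the sum, as in Proposition \ref{prop1}—Lemma \ref{determinant} evaluates the resulting determinant, yielding the product $\prod_i\frac{1}{(2k+\sum_s sm_{si}-i)!}$ and the Vandermonde-type product $\prod_{i<j}(\sum_s sm_{sj}-\sum_s sm_{si}-j+i)$. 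Since every step is identical to Proposition \ref{prop1} except for the replacement $-\frac{N}{2}\rightsquigarrow -N$ (hence $e^{-x/2}\rightsquigarrow e^{-x}$ and $(-\frac12)^{m_{10}}\rightsquigarrow(-1)^{m_{10}}$), there is no real obstacle; the main thing to be careful about is bookkeeping the powers of $N$ correctly through the $x=Nt_1$ rescaling, and making sure the sign and exponential changes propagate consistently into both the determinantal and the combinatorial forms.
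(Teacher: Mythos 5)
Your proposal is correct and is precisely the route the paper intends: the paper gives no separate proof of Proposition \ref{prop18}, stating only that ``a similar argument to that of Proposition \ref{prop1}'' applies, and your tracking of the single change $-\frac{N}{2}\rightsquigarrow -N$ (hence $e^{-x/2}\rightsquigarrow e^{-x}$ and $(-\frac{1}{2})^{m_{10}}\rightsquigarrow(-1)^{m_{10}}$) through the $x=Nt_1$ rescaling is exactly the required bookkeeping. No gaps.
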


The following is our first main theorem. It concerns joint moments of the $n_1$-th and $n_2$-th derivatives of characteristic polynomials of CUE matrices. 
Theorems \ref{intro:thm1} and \ref{intro:prop1} follow directly by setting $n_1=2, n_2=0$ and switching the roles of $k-M$ and $M$. 

\begin{thm}
\label{518mixmomentofLamda}

For $k\geq 1$, $0\leq M\leq k$ and $n_{1}\geq n_{2}\geq 0$ be integers. Let 
$P=\#\{\a_{i}=(a_{i1},\ldots,a_{in_{1}}):\sum_{j=1}^{n_{1}}ja_{ij}=n_{1},a_{ij}\geq 0\}$ and $Q=\#\{\b_{i}=(b_{i1},\ldots,b_{in_{2}}):\sum_{j=1}^{n_{2}}jb_{ij}=n_{2},b_{ij}\geq 0\}$. Denote by $\a_{i}!=\prod_{j=1}^{n_{1}}a_{ij}!$
and
$\b_{i}!=\prod_{j=1}^{n_{2}}b_{ij}!$.
Then we have
\beas
&&\int_{\U(N)} |\Lambda_A^{(n_{1})} (1)|^{2k-2M} |\Lambda_A^{(n_{2})}(1)|^{2M} dA_N\\
&=&N^{k^2+2(k-M)n_{1}+2Mn_{2}} (-1)^{(k-M)n_{1}+Mn_{2}+\frac{k(k-1)}{2}}\\ 
&& \times \,\sum_{\substack{l_1+\cdots+l_P=k-M \\ l'_1+\cdots+l'_P=k-M}}\sum_{t_{1}=0}^{\sum_{i=1}^P l'_ia_{i1}}  \binom{k-M}{l_1,\ldots,l_P} \binom{k-M}{l'_1,\ldots,l'_P} \binom{\sum_{i=1}^P l'_ia_{i1}}{t_{1}}\frac{(n_{1}!)^{2k-2M}}{\prod_{i=1}^P (\a_{i}!\prod_{j=1}^{n_{1}} j^{a_{ij}})^{l_i+l'_i}}\\
&&\times \,\sum_{\substack{u_1+\cdots+u_Q=M \\ u'_1+\cdots+u'_Q=M}}\sum_{t_{2}=0}^{\sum_{i=1}^Q u'_ib_{i1}}  \binom{M}{u_1,\ldots,u_Q} \binom{M}{u'_1,\ldots,u'_Q} \binom{\sum_{i=1}^Q u'_ib_{i1}}{t_{2}}\frac{(n_{2}!)^{2M}}{\prod_{i=1}^Q (\b_{i}!\prod_{j=1}^{n_{2}}j^{b_{ij}})^{u_i+u'_i}}\\
&&\times  \hspace{-.3cm} \sum_{ \substack{s=2,\ldots,n_{1} \\ \sum_{i=1}^k h_{si} = V_{s} } } \hspace{-.1cm} 
\prod_{s=2}^{n_{1}} \binom{V_{s}}{h_{s1}, \ldots, h_{sk}} 
\left( \frac{d}{dx} \right)^{V_{1}} \Bigg( e^{-x}x^{-\frac{k^2}{2} - (k-M)n_{1}-Mn_{2}+ \frac{1}{2} \sum_{i=1}^P (l_i+l'_i)a_{i1}+\frac{1}{2}\sum_{i=1}^{Q}(u_{i}+u'_{i})b_{i1} }\\
&& \times \, \det_{k\times k} \Big( I_{\sum_{s=2}^{n_{1}} sh_{si}+i+j-1 } (2\sqrt{x}) \Big) \Bigg) \Bigg|_{x=0}+O(N^{k^2+2(k-M)n_{1}+2Mn_{2}-1}),
\eeas
where $V_{1}=\sum_{i=1}^P (l_i+l_{i}')a_{i1}+\sum_{i=1}^Q (u_i+u'_{i})b_{i1}-t_{1}-t_{2}$, $V_{j}=\sum_{i=1}^P (l_i+l'_{i})a_{ij}+\sum_{i=1}^Q (u_i+u'_{i})b_{ij}$ for $j=2,\ldots,n_{2}$, and $V_{j}=\sum_{i=1}^P (l_i+l'_{i})a_{ij}$ for $j=\max(2,n_{2}+1),\ldots,n_{1}$.

Moreover, 
\beas
&& \hspace{-.3cm} \sum_{ \substack{s=2,\ldots,n_{1} \\ \sum_{i=1}^k h_{si} = V_{s} } }
\prod_{s=2}^{n_{1}} \binom{V_{s}}{h_{s1}, \ldots, h_{sk}} 
\left( \frac{d}{dx} \right)^{V_{1}} \Bigg(e^{-x}x^{-\frac{k^2}{2} - (k-M)n_{1}-Mn_{2}+ \frac{1}{2} \sum_{i=1}^P (l_i+l'_i)a_{i1}+\frac{1}{2}\sum_{i=1}^{Q}(u_{i}+u'_{i})b_{i1} }\\
&& \times \, \det_{k\times k} \Big( I_{\sum_{s=2}^{n_{1}} sh_{si}+i+j-1 } (2\sqrt{x}) \Big) \Bigg)  \Bigg|_{x=0} \\
&=& \sum_{ \substack{s=2,\ldots,n_{1} \\ \sum_{i=1}^k h_{si} = V_{s} \\ \sum_{i=0}^k h_{1i} = V_{1} } }(-1)^{h_{10}} \Bigg(\prod_{s=2}^{n_{1}} \binom{V_{s}}{h_{s1}, \ldots, h_{sk}} \Bigg)\binom{V_1}{h_{10},\ldots, h_{1k}} 
 \left(\prod_{i=1}^k \frac{1}{(2k+\sum_{s=1}^{n_1} sh_{si}-i)!} \right)\\
&&\times \,\prod_{1\leq i<j\leq k} \left(\sum_{s=1}^{n_1} sh_{sj} - \sum_{s=1}^{n_1} sh_{si}-j+i \right) . 
\eeas

\end{thm}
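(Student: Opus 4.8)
The plan is to reduce the joint moment to a shifted moment differentiated in its shifts, apply the integral representation of Lemma~\ref{11/10lemma1}, extract its leading order by replacing the $z$-factors with their principal singular parts, carry out the shift-derivatives with Fa\`a di Bruno (Lemma~\ref{lem1}), and finally recognise the resulting integrals as ones already evaluated in Proposition~\ref{prop18}; the concluding "Moreover'' part is then a rerun of the explicit-form computation from Proposition~\ref{prop18}. \textbf{Step 1 (reduction).} By Lemma~\ref{15}, $|\Lambda_A^{(n_1)}(1)|^{2k-2M}|\Lambda_A^{(n_2)}(1)|^{2M}=\Lambda_A^{(n_1)}(1)^{k-M}\Lambda_{A^*}^{(n_1)}(1)^{k-M}\Lambda_A^{(n_2)}(1)^{M}\Lambda_{A^*}^{(n_2)}(1)^{M}$. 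Fa\`a di Bruno applied to $\alpha\mapsto\Lambda_A(e^{-\alpha})$ at $\alpha=0$ gives $(\tfrac{d}{d\alpha})^{n}\Lambda_A(e^{-\alpha})|_{\alpha=0}=(-1)^{n}\Lambda_A^{(n)}(1)+\sum_{j<n}c_{n,j}\Lambda_A^{(j)}(1)$, and the same without the sign for $\alpha\mapsto\Lambda_{A^*}(e^{\alpha})$; since a mixed moment of derivatives of total order $r$ is $O(N^{k^2+r})$ (a crude bound via H\"older and $\int_{\U(N)}|\Lambda_A^{(j)}(1)|^{2k}dA_N=O(N^{k^2+2jk})$ suffices), the sub-leading terms fall into the error term, so
\[
\int_{\U(N)}|\Lambda_A^{(n_1)}(1)|^{2k-2M}|\Lambda_A^{(n_2)}(1)|^{2M}dA_N=(-1)^{(k-M)n_1+Mn_2}\,D\,I(\alpha)\big|_{\alpha=0}+O\!\big(N^{k^2+2(k-M)n_1+2Mn_2-1}\big),
\]
where $I(\alpha)=\int_{\U(N)}\prod_{j=1}^{k}\Lambda_A(e^{-\alpha_j})\Lambda_{A^*}(e^{\alpha_{j+k}})dA_N$ and $D=\prod_{j\in S_1\cup S_3}(\tfrac{d}{d\alpha_j})^{n_1}\prod_{j\in S_2\cup S_4}(\tfrac{d}{d\alpha_j})^{n_2}$ with $S_1=\{1,\dots,k-M\}$, $S_2=\{k-M+1,\dots,k\}$, $S_3=\{k+1,\dots,2k-M\}$, $S_4=\{2k-M+1,\dots,2k\}$. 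By Lemma~\ref{11/10lemma1} and Remark~\ref{remark1}, $I(\alpha)$ equals $\tfrac{1}{k!(2\pi\i)^{k}}$ times the integral of $e^{N\sum_{j=1}^{k}(w_j-\alpha_j)}\prod_{i,j}z(w_i-\alpha_j)\prod_{i\neq j}z(w_i-w_j)^{-1}$ over small circles about $0$; writing $z(x)=\tfrac1x+\tilde z(x)$ with $\tilde z$ entire and arguing as in \cite{conrey2006moments} that, after applying $D$, every term keeping a factor $\tilde z$ loses a power of $N$ in the residue computation, $D I(\alpha)|_{\alpha=0}$ agrees up to $O(N^{k^2+2(k-M)n_1+2Mn_2-1})$ with
\[
\frac{(-1)^{\binom{k}{2}}}{k!(2\pi\i)^{k}}\,D\int_{|w_i|=1}\frac{e^{N\sum_{j=1}^{k}(w_j-\alpha_j)}\,\Delta^{2}(w)}{\prod_{1\le i\le k,\,1\le j\le 2k}(w_i-\alpha_j)}\prod_i dw_i\,\bigg|_{\alpha=0}.
\]

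\textbf{Step 2 (shift-derivatives, powers, completion of $-N$).} The $\alpha_j$-dependent factor is $e^{-N\alpha_j}\prod_i(w_i-\alpha_j)^{-1}$ for $j\le k$ (the $\Lambda_A$-side shifts $S_1,S_2$) and $\prod_i(w_i-\alpha_j)^{-1}$ for $j>k$ (the $\Lambda_{A^*}$-side shifts $S_3,S_4$). Lemma~\ref{lem1} (applied with the exponential $e^{-N\alpha_j}$, resp.\ with no exponential) evaluates $(\tfrac{d}{d\alpha_j})^{n}$ of this at $\alpha_j=0$ as $\prod_i w_i^{-1}\sum_{\c}\tfrac{n!}{\c!\prod_s s^{c_s}}\big(-N+\sum_l\tfrac1{w_l}\big)^{c_1}\prod_{s\ge2}\big(\sum_l\tfrac1{w_l^s}\big)^{c_s}$, resp.\ with $-N$ deleted, the sum being over partitions $\c$ of $n$ (parts $c_s$ of size $s$). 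Multiplying the $k-M$ identical factors from $S_1$ and expanding the $(k-M)$-th power over the $P$ partitions $\a_1,\dots,\a_P$ of $n_1$ with multiplicities $l_1,\dots,l_P$, and likewise for $S_3$ (multiplicities $l'_i$, no $-N$), $S_2$ (partitions $\b_i$ of $n_2$, multiplicities $u_i$, with $-N$) and $S_4$ (multiplicities $u'_i$, no $-N$), produces the multinomial coefficients $\binom{k-M}{l_1,\dots,l_P}$, $\binom{k-M}{l'_1,\dots,l'_P}$, $\binom{M}{u_1,\dots,u_Q}$, $\binom{M}{u'_1,\dots,u'_Q}$, the combined weights $(n_1!)^{2k-2M}\prod_i(\a_i!\prod_s s^{a_{is}})^{-(l_i+l'_i)}$ and $(n_2!)^{2M}\prod_i(\b_i!\prod_s s^{b_{is}})^{-(u_i+u'_i)}$, and a factor $\prod_i w_i^{-2k}$. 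Collecting terms, the factor at power $s\ge2$ is $\big(\sum_l\tfrac1{w_l^s}\big)^{V_s}$ with $V_s$ as in the statement, and the power-$1$ factor is $\big(-N+\sum_l\tfrac1{w_l}\big)^{\sum_i l_ia_{i1}+\sum_i u_ib_{i1}}\big(\sum_l\tfrac1{w_l}\big)^{\sum_i l'_ia_{i1}+\sum_i u'_ib_{i1}}$. Now write $\sum_l\tfrac1{w_l}=(-N+\sum_l\tfrac1{w_l})+N$ and expand the two $\Lambda_{A^*}$-side factors $\big(\sum_l\tfrac1{w_l}\big)^{\sum_i l'_ia_{i1}}$ and $\big(\sum_l\tfrac1{w_l}\big)^{\sum_i u'_ib_{i1}}$ binomially; this introduces $t_1\le\sum_i l'_ia_{i1}$, $t_2\le\sum_i u'_ib_{i1}$ with the binomial coefficients $\binom{\sum_i l'_ia_{i1}}{t_1}\binom{\sum_i u'_ib_{i1}}{t_2}$ and a factor $N^{t_1+t_2}$, the power-$1$ factor becoming $N^{t_1+t_2}\big(-N+\sum_l\tfrac1{w_l}\big)^{V_1}$ with $V_1=\sum_i(l_i+l'_i)a_{i1}+\sum_i(u_i+u'_i)b_{i1}-t_1-t_2$.

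\textbf{Step 3 (evaluation and explicit form).} Steps~1–2 express the leading term as $(-1)^{(k-M)n_1+Mn_2+\binom{k}{2}}$ times a finite combination (indexed by $l,l',u,u',t_1,t_2$, with the stated coefficients and a factor $N^{t_1+t_2}$) of integrals of the type $(\ref{519eq4})$ for the tuple $(V_1,V_2,\dots,V_{n_1})$. Proposition~\ref{prop18} evaluates each of these; the identity $\sum_{s=2}^{n_1}sV_s=2(k-M)n_1+2Mn_2-\big(\sum_i(l_i+l'_i)a_{i1}+\sum_i(u_i+u'_i)b_{i1}\big)$ collapses the $x$-exponent to $-\frac{k^2}{2}-(k-M)n_1-Mn_2+\frac12\sum_i(l_i+l'_i)a_{i1}+\frac12\sum_i(u_i+u'_i)b_{i1}$, and the accumulated power of $N$, namely $t_1+t_2+k^2+\sum_{s=1}^{n_1}sV_s$, equals $k^2+2(k-M)n_1+2Mn_2$; this gives the first displayed identity. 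For the second (explicit) identity one treats $(\tfrac{d}{dx})^{V_1}\big(e^{-x}x^{(\cdot)}\det_{k\times k}(I_{\sum_{s=2}^{n_1}sh_{si}+i+j-1}(2\sqrt x))\big)\big|_{x=0}$ exactly as in the proof of Proposition~\ref{prop18}: by $I_\nu(2\sqrt x)=x^{\nu/2}\sum_{m\ge0}\tfrac{x^m}{m!(\nu+m)!}$ and multilinearity the prefactor $x^{(\cdot)}$ cancels the $x$-powers pulled out of the determinant, leaving $e^{-x}\det_{k\times k}(c_{ij}(x))$ with $c_{ij}(x)=\sum_{m\ge0}\tfrac{x^m}{m!(\sum_{s\ge2}sh_{si}+i+j-1+m)!}$; Lemma~\ref{key-lem3} distributes the $V_1$ derivatives, $e^{-x}$ taking $h_{10}$ of them (giving $(-1)^{h_{10}}\binom{V_1}{h_{10},\dots,h_{1k}}$) and row $i$ of the determinant taking $h_{1i}$; evaluating at $x=0$ and applying Lemma~\ref{determinant} after the substitution $i\mapsto k{+}1{-}i$, $j\mapsto k{+}1{-}j$ (under which the sum over $\{h_{si}\}$ is invariant) yields the claimed product formula.

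\textbf{Anticipated main obstacle.} The genuinely analytic point is the passage in Step~1 from the exact integrand to its principal part — replacing each $z(w_i-\alpha_j)$ by $(w_i-\alpha_j)^{-1}$ and $\prod_{i\neq j}z(w_i-w_j)^{-1}$ by $(-1)^{\binom{k}{2}}\Delta^2(w)$ — and showing this perturbs $D I(\alpha)|_{\alpha=0}$ only at relative order $N^{-1}$; this amounts to controlling the $N$-growth of every correction term in the residue/rescaling analysis once the high-order shift-derivatives $D$ have been applied, and is the only place where more than bookkeeping is required. Everything after that is a long but mechanical combinatorial computation built on Lemmas~\ref{lem1}, \ref{key-lem3}, \ref{determinant} and Proposition~\ref{prop18}; the one subtlety there is keeping the four groups $S_1,\dots,S_4$, the partitions $\a_i,\b_i$, the multiplicities $l,l',u,u'$ and the completion variables $t_1,t_2$ consistent, and verifying the $N$-power and $x$-exponent bookkeeping exactly.
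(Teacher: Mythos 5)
Your proposal is correct and follows essentially the same route as the paper: reduction via Lemmas \ref{15} and \ref{11/10lemma1} with the subleading chain-rule terms absorbed into the error, principal-part extraction of the $z$-factors, Lemma \ref{lem1} for the shift-derivatives, multinomial expansion over the $P$ and $Q$ partitions, binomial completion of $\sum_l w_l^{-1}$ against $-N+\sum_l w_l^{-1}$ to introduce $t_1,t_2$, and Proposition \ref{prop18} (with the same exponent identity $\sum_{s\ge 2}sV_s=2(k-M)n_1+2Mn_2-\sum_i(l_i+l'_i)a_{i1}-\sum_i(u_i+u'_i)b_{i1}$) for both displayed formulae. The analytic point you flag as the main obstacle is likewise the step the paper treats most tersely, so your assessment of where the real work lies is accurate.
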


\begin{proof}
Let $\alpha=(\alpha_1,\ldots,\alpha_{2k})$.
By Lemmas \ref{15} and \ref{11/10lemma1}, 
\beas
&& \int_{\U(N)} |\Lambda_A^{(n_{1})} (1)|^{2k-2M} |\Lambda_A ^{(n_{2})}(1)|^{2M} dA_N \nonumber \\
&=& (-1)^{n_{1}(k-M)+Mn_{2}} \frac{1}{k!(2\pi \i)^k} \prod_{j=1}^{k-M} (\frac{d}{d\alpha_j} \frac{d}{d\alpha_{k+j}} )^{n_{1}}\prod_{j=k-M+1}^{k} (\frac{d}{d\alpha_j} \frac{d}{d\alpha_{k+j}} )^{n_{2}} \nonumber  \\
&&\times \,\int_{|w_i|=1} e^{N \sum_{j=1}^k (w_j-\alpha_{j})} \prod_{i\neq j} z(w_i-w_j)^{-1} \prod_{\substack{ 1\leq i \leq k\\ 1 \leq j \leq 2k }} z(w_i-\alpha_j)\prod_{j=1}^{k} dw_j \Bigg|_{\alpha=0} \nonumber \\ 
&& +\, O(N^{k^2+(2k-2M)n_{1}+2Mn_{2}-1}).
\eeas
Suppose that $\alpha_j=\alpha_j(N)$ and $|\alpha_j| \leq 1/N$ for $j=1,2,\ldots,2k$ above, we have furthermore that the last expression is
\beas
&=& (-1)^{n_{1}(k-M)+n_{2}M+\frac{k(k-1)}{2}} \frac{1}{k!(2\pi \i)^k}\prod_{j=1}^{k-M}  \left( \frac{d}{d\alpha_j} \frac{d}{d\alpha_{k+j}} \right)^{n_{1}} \prod_{j=k-M+1}^{k} \left(\frac{d}{d\alpha_j} \frac{d}{d\alpha_{k+j}} \right)^{n_{2}} \nonumber \\
&&\times \,\int_{|w_i|=1} \frac{e^{N \sum_{j=1}^k(w_j-\alpha_{j})} \Delta^2(w)}{\prod_{j=1}^{2k} \prod_{i=1}^k(w_i-\alpha_j)} \prod_{j=1}^{k} dw_j \Bigg|_{\alpha=0}
+\, O(N^{k^2+(2k-2M)n_{1}+2Mn_{2}-1}) .
\eeas
By Lemma \ref{lem1},
\bea
&&\prod_{j=1}^{k-M}  \left( \frac{d}{d\alpha_j} \frac{d}{d\alpha_{k+j}} \right)^{n_{1}} \prod_{j=k-M+1}^{k} (\frac{d}{d\alpha_j} \frac{d}{d\alpha_{k+j}} )^{n_{2}}\int_{|w_i|=1} \frac{e^{N \sum_{j=1}^kw_j} \Delta^2(w)}{\prod_{j=1}^{2k} \prod_{i=1}^k(w_i-\alpha_j)} e^{-N \sum_{j=1}^{k} \alpha_j} \prod_{j=1}^{k} dw_j \Bigg|_{\alpha=0}\nonumber\\
=&&\int_{|w_i|=1} \frac{e^{N \sum_{j=1}^kw_j} \Delta^2(w)}{ \prod_{i=1}^kw_i^{2k}} \Bigg( \sum_{m_1+2m_2+\cdots+n_{1}m_{n_{1}}=n_{1}} \frac{n_{1}!}{m_1!\cdots m_{n_{1}}!} \left( - N + \sum_{l=1}^k \frac{1}{w_l} \right)^{m_1} 
\nonumber \\
&&\times \, \prod_{j=2}^{n_{1}} \left(\frac{1}{j} \sum_{l=1}^k \frac{1}{w_l^j} \right)^{m_j}  \Bigg)^{k-M} \Bigg( \sum_{r_1+2r_2+\cdots+n_{2}r_{n_{2}}=n_{2}} \frac{n_{2}!}{r_1!\cdots r_{n_{2}}!} \left( - N + \sum_{l=1}^k \frac{1}{w_l} \right)^{r_{1}}\nonumber \\
&&\times \,\prod_{j=2}^{n_{2}} \left(\frac{1}{j} \sum_{l=1}^k \frac{1}{w_l^j} \right)^{r_j}  \Bigg)^{M}\Bigg( \sum_{m_1+2m_2+\cdots+n_{1}m_{n_{1}}=n_{1}} \frac{n_{1}!}{m_1!\cdots m_{n_{1}}!} \prod_{j=1}^{n_{1}} \left(\frac{1}{j} \sum_{l=1}^k \frac{1}{w_l^j} \right)^{m_j}  \Bigg)^{k-M}\nonumber \\
&&\times \,\Bigg( \sum_{r_1+2r_2+\cdots+n_{2}r_{n_{2}}=n_{2}} \frac{n_{2}!}{r_1!\cdots r_{n_{2}}!} \prod_{j=1}^{n_{2}} \left(\frac{1}{j} \sum_{l=1}^k \frac{1}{w_l^j} \right)^{r_j}  \Bigg)^{M}
 \prod_{j=1}^k dw_j.
\label{518mixmomentofLamda:eq}
\eea
Expanding all the powers involved, we obtain 
\bea
&=& \int_{|w_i|=1} \frac{e^{N \sum_{j=1}^kw_j} \Delta^2(w)}{ \prod_{i=1}^kw_i^{2k}} \Bigg(\sum_{l_1+\cdots+l_P=k-M} \binom{k-M}{l_1,\ldots,l_P} \frac{(n_{1}!)^{k-M}}{\prod_{i=1}^P (\a_i!)^{l_i}} \left( - N + \sum_{l=1}^k \frac{1}{w_l} \right)^{\sum_{i=1}^P l_ia_{i1}} \nonumber\\
&&\times \,\prod_{j=2}^{n_{1}} \left(\frac{1}{j} \sum_{l=1}^k \frac{1}{w_l^j} \right)^{\sum_{i=1}^P l_ia_{ij}}\Bigg)
\Bigg(\sum_{u_1+\cdots+u_Q=M} \binom{M}{u_1,\ldots,u_Q} \frac{(n_{2}!)^{M}}{\prod_{i=1}^Q (\b_i!)^{u_i}} \left( -N + \sum_{l=1}^k \frac{1}{w_l} \right)^{\sum_{i=1}^Q u_ib_{i1}} \nonumber\\
&&\times \,\prod_{j=2}^{n_{2}} \left(\frac{1}{j} \sum_{l=1}^k \frac{1}{w_l^j} \right)^{\sum_{i=1}^Q u_ib_{ij}}\Bigg) \Bigg(\sum_{l'_1+\cdots+l'_P=k-M} \binom{k-M}{l'_1,\ldots,l'_P} \frac{(n_{1}!)^{k-M}}{\prod_{i=1}^P (\a_i!)^{l'_i}} \prod_{j=1}^{n_{1}} \left(\frac{1}{j} \sum_{l=1}^k \frac{1}{w_l^j} \right)^{\sum_{i=1}^P l'_ia_{ij}}\Bigg) \nonumber \\
&&\times \,
\Bigg(\sum_{u'_1+\cdots+u'_Q=M} \binom{M}{u'_1,\ldots,u'_Q} \frac{(n_{2}!)^{M}}{\prod_{i=1}^Q (\b_i!)^{u'_i}}
\prod_{j=1}^{n_{2}} \left(\frac{1}{j} \sum_{l=1}^k \frac{1}{w_l^j} \right)^{\sum_{i=1}^Q u'_ib_{ij}}\Bigg)
\prod_{j=1}^{k}dw_{j}.
\label{516night1}
\eea
Next, we expand $\left(\sum_{l=1}^k \frac{1}{w_l} \right)^{\sum_{i=1}^P l'_ia_{i1}}$ and $\left(\sum_{l=1}^k \frac{1}{w_l} \right)^{\sum_{i=1}^Q u'_ib_{i1}}$
by viewing $\sum_{l=1}^k \frac{1}{w_l}-N$ as a whole. We then simplify the expression by viewing $\sum_{l=1}^k \frac{1}{w_l^j}$ as a whole for $j=2,\ldots,n_1$, and obtain
\beas
(\ref{516night1}) &=& \sum_{\substack{l_1+\cdots+l_P=k-M \\ l'_1+\cdots+l'_P=k-M}} \binom{k-M}{l_1,\ldots,l_P} \binom{k-M}{l'_1,\ldots,l'_P} \frac{(n_{1}!)^{2k-2M}}{\prod_{i=1}^P (\prod_{j=1}^{n_{1}} a_{ij}! j^{a_{ij}})^{l_i+l'_i}}\sum_{t_{1}=0}^{\sum_{i=1}^P l'_ia_{i1}} \binom{\sum_{i=1}^P l'_ia_{i1}}{t_{1}}N^{t_{1}}\\
&& \hspace{-.7cm} \times \, \sum_{\substack{u_1+\cdots+u_Q=M \\ u'_1+\cdots+u'_Q=M}} \binom{M}{u_1,\ldots,u_Q} \binom{M}{u'_1,\ldots,u'_Q} \frac{(n_{2}!)^{2M}}{\prod_{i=1}^Q (\prod_{j=1}^{n_{2}} b_{ij}! j^{b_{ij}})^{u_i+u'_i}}\sum_{t_{2}=0}^{\sum_{i=1}^Q u'_ib_{i1}} \binom{\sum_{i=1}^Q u'_ib_{i1}}{t_{2}}N^{t_{2}}\\
&&\hspace{-.7cm} \times \, \int_{|w_i|=1} \frac{e^{N \sum_{j=1}^k w_j} \Delta^2(w)}{\prod_{i=1}^k w_i^{2k}}(\sum_{l=1}^{k}\frac{1}{w_{l}}-N)^{V_{1}}\prod_{j=2}^{n_{1}}\left( \sum_{l=1}^k \frac{1}{w_l^j}\right)^{V_{j}}\prod_{j=1}^k dw_j,
\eeas
where the quantities $V_j$ are defined in the theorem.
Finally, we apply Proposition \ref{prop18} to the last integral with $m_j=V_j$ for $j=1,\ldots,n_1$.
In this process, we can compute that  $\sum_{s=2}^{n_1} s V_s=\sum_{s=1}^{n_1} s V_s - V_1 = 2 (k-M)n_{1} +2Mn_{2} - \sum_{i=1}^P (l_i+l'_i)a_{i1} - \sum_{i=1}^{Q}(u_{i}+u'_{i})b_{i1}$. Hence, we obtain the claimed result. 
\end{proof}

We present below our main theorem on joint moments of the $n_1$-th and $n_2$-th derivatives of the analogue of Hardy's $Z$-function.  Theorems \ref{into:thm2} and \ref{into:prop2} are obtained by choosing $n_1=2,n_2=0$ and switching $n-M$ and $M$.

\begin{thm}
\label{generalization of mixed moment}
For $k\geq 1$, $0\leq M\leq k$ and $n_{1}\geq n_{2}\geq 0$ be integers.  Let $P=\#\{\a_{i}=(a_{i1},\ldots,a_{in_{1}}):\sum_{j=1}^{n_{1}}ja_{ij}=n_{1},a_{ij}\geq 0\}$ and $Q=\#\{\b_{i}=(b_{i1},\ldots,b_{in_{2}}):\sum_{j=1}^{n_{2}}jb_{ij}=n_{2},b_{ij}\geq 0\}$. Denote by $\a_{i}!=\prod_{j=1}^{n_{1}}a_{ij}!$
and
$\b_{i}!=\prod_{j=1}^{n_{2}}b_{ij}!$.
Then we have
\beas
&&\int_{\U(N)} |Z_A^{(n_{1})} (1)|^{2k-2M} |Z_A^ {(n_{2})}(1)|^{2M} dA_N\\
&=&(-1)^{n_{1}(k-M)+n_{2}M+\frac{k(k-1)}{2}}N^{k^2+2n_{1}(k-M)+2n_{2}M}\\ 
&&\times \, \sum_{\substack{l_1+\cdots+l_P= 2k-2M\\u_{1}+\cdots+u_{Q}=2M}}\binom{2k-2M}{l_1,\ldots,l_P}\binom{2M}{u_1,\ldots,u_Q}
\frac{(n_{1}!)^{2k-2M}}{(\prod_{i=1}^P(\a_{i}!)^{l_{i}})(\prod_{j=1}^{n_{1}} j^{\sum_{i=1}^{P}l_{i}a_{ij}})} \\
&&\times \, \frac{(n_{2}!)^{2M}}{(\prod_{i=1}^Q (\b_{i}!)^{u_{i}})(\prod_{j=1}^{n_{2}} j^{\sum_{i=1}^{Q}u_{i}b_{ij}} )}
\sum_{\substack{s=2,\ldots,n_{1} \\ \sum_{i=1}^k h_{si}=V_{s}}} 
\prod_{s=2}^{n_{1}} \binom{V_{s}}{h_{s1},\ldots,h_{sk}}
\\
&& \times \, \left( \frac{d}{dx} \right)^{V_{1}} \left( e^{-\frac{x}{2}} x^{-\frac{k^2}{2} - (k-M)n_{1}-Mn_{2}+ \frac{1}{2} V_{1}}\det_{k\times k} \Big( I_{\sum_{s=2}^{n_{1}} s h_{si}+i+j-1 } (2\sqrt{x}) \Big) \right) \Bigg|_{x=0}\\ 
&&+ \, O(N^{k^2+2n_{1}(k-M)+2n_{2}M-1}),
\eeas
where
$V_{j}=\sum_{i=1}^{P}l_{i}a_{i,j}+\sum_{i=1}^{Q}u_{i}b_{i,j}$ for $j=1,\ldots,n_{2}$ and $V_{j}=\sum_{i=1}^{P}l_{i}a_{i,j}$ for $j=n_{2}+1,\ldots,n_{1}$.

Moreover,
\beas
&& 
\sum_{\substack{s=2,\ldots,n_{1} \\ \sum_{i=1}^k h_{si}=V_{s}}} \left(\prod_{s=2}^{n_{1}} \binom{V_{s}} {h_{s1},\ldots,h_{sk}}\right) 
\left( \frac{d}{dx} \right)^{V_{1}} \Bigg(e^{-\frac{x}{2}} x^{-\frac{k^2}{2} - (k-M)n_{1}-Mn_{2}+ \frac{1}{2} V_{1}} \\
&& \times \, \det_{k\times k} \Big( I_{\sum_{s=2}^{n_{1}} s h_{si}+i+j-1 } (2\sqrt{x}) \Big) \Bigg) \Bigg|_{x=0} \\
&=& \sum_{\substack{
\sum_{i=1}^k h_{si} = V_s \\ s=2,\ldots,n_1 \\
\sum_{i=0}^k h_{1i}=V_1}} 
 (-\frac{1}{2})^{h_{10}}
\left(\prod_{s=2}^{n_1} \binom{V_s}{h_{s1},\ldots,h_{sk}} \right) \binom{V_1}{h_{10},\ldots, h_{1k}}  \\
&& \times \, \left(\prod_{i=1}^k \frac{1}{(2k+\sum_{s=1}^{n_{1}} sh_{si}-i)!} \right)
\prod_{1\leq i<j\leq k} \left(\sum_{s=1}^{n_1} sh_{sj} - \sum_{s=1}^{n_1} sh_{si}-j+i\right).
\eeas
\end{thm}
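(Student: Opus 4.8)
The argument is the $Z$-analogue of the proof of Theorem~\ref{518mixmomentofLamda}, and it is in fact slightly shorter. The only structural change is that the shift-dependent prefactor in the $Z$-version of Lemma~\ref{11/10lemma1} is $(-1)^{Nk}e^{-\frac N2\sum_{j=1}^{2k}\alpha_j}$, so all $2k$ shift variables enter symmetrically through a factor $e^{-\frac N2\alpha_j}$, and none of the auxiliary re-expansion that introduced the indices $t_1,t_2$ in Theorem~\ref{518mixmomentofLamda} is needed. First I would use $|Z_A^{(n)}(1)|^2=(-1)^N Z_A^{(n)}(1)Z_{A^*}^{(n)}(1)$ from Lemma~\ref{15} together with $\left(\tfrac{d}{d\alpha}\right)^n Z_A(e^{-\alpha})\big|_{\alpha=0}=(-1)^n Z_A^{(n)}(1)+(\text{linear combination of }Z_A^{(i)}(1),\ i<n)$ and the analogous identity $\left(\tfrac{d}{d\alpha}\right)^n Z_{A^*}(e^{\alpha})\big|_{\alpha=0}=Z_{A^*}^{(n)}(1)+(\text{lower order})$, in order to rewrite the joint moment as $(-1)^{Nk}(-1)^{n_1(k-M)+n_2M}$ times the value at $\alpha=0$ of
\[
\prod_{j=1}^{k-M}\Big(\tfrac{d}{d\alpha_j}\tfrac{d}{d\alpha_{k+j}}\Big)^{n_1}\prod_{j=k-M+1}^{k}\Big(\tfrac{d}{d\alpha_j}\tfrac{d}{d\alpha_{k+j}}\Big)^{n_2}
\]
applied to $\int_{\U(N)}\prod_{j=1}^k Z_A(e^{-\alpha_j})Z_{A^*}(e^{\alpha_{j+k}})\,dA_N$, up to an error $O(N^{k^2+2n_1(k-M)+2n_2M-1})$ coming from the lower-order derivative terms (each $Z_A^{(i)}(1)$ has size $N^i$, as explained in the introduction and as in \cite{conrey2006moments}). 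Inserting the $Z$-version of Lemma~\ref{11/10lemma1}, whose prefactor also carries a $(-1)^{Nk}$, the two copies of $(-1)^{Nk}$ cancel.

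Since the derivatives are evaluated at $\alpha=0$, only the behaviour of the integrand near $\alpha=0$ matters, and I would replace $z(x)$ by its leading singular part $1/x$, the analytic part of $z$ contributing only to the stated lower-order error exactly as in \cite{conrey2006moments} and in Theorem~\ref{518mixmomentofLamda}; here $\prod_{i\neq j}z(w_i-w_j)^{-1}\mapsto\prod_{i\neq j}(w_i-w_j)=(-1)^{k(k-1)/2}\Delta^2(w)$ accounts for the sign $(-1)^{k(k-1)/2}$. Applying Lemma~\ref{lem1} to each factor $\left(\tfrac{d}{d\alpha_j}\right)^{\bullet}\frac{e^{-\frac N2\alpha_j}}{\prod_i(w_i-\alpha_j)}$ and setting $\alpha=0$ then expresses everything as $\frac{1}{k!(2\pi\i)^k}\int_{|w_i|=1}\frac{e^{N\sum_j w_j}\Delta^2(w)}{\prod_i w_i^{2k}}$ multiplied by a product of $2(k-M)$ copies of the partition sum $\sum_{m_1+2m_2+\cdots+n_1m_{n_1}=n_1}\frac{n_1!}{m_1!\cdots m_{n_1}!}\big(-\tfrac N2+\sum_l w_l^{-1}\big)^{m_1}\prod_{s\ge2}\big(\tfrac1s\sum_l w_l^{-s}\big)^{m_s}$ and $2M$ copies of the analogous sum with $n_1$ replaced by $n_2$. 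Expanding these powers by the multinomial theorem indexed by the compositions $\a_i$ ($i=1,\dots,P$) and $\b_i$ ($i=1,\dots,Q$) produces the coefficients $\binom{2k-2M}{l_1,\dots,l_P}$ and $\binom{2M}{u_1,\dots,u_Q}$ together with the rational factors displayed in the theorem (the $1/j$'s from Lemma~\ref{lem1} being absorbed into the denominators $\prod_j j^{\sum_i l_i a_{ij}}$ and $\prod_j j^{\sum_i u_i b_{ij}}$), and collects the total exponents of $\sum_l w_l^{-1}-\tfrac N2$ and of $\sum_l w_l^{-j}$ into $V_1$ and $V_j$ respectively.

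The surviving $w$-integral is then exactly the left-hand side of Proposition~\ref{prop1} with $(m_1,\dots,m_{n_1})=(V_1,\dots,V_{n_1})$, so the first formula follows on observing that $\sum_{s=1}^{n_1}sV_s=\sum_i l_i\big(\sum_s s a_{is}\big)+\sum_i u_i\big(\sum_s s b_{is}\big)=n_1(2k-2M)+n_2(2M)$, which reconciles both the power of $N$ and the exponent of $x$ appearing in Proposition~\ref{prop1}. The ``moreover'' assertion follows by instead invoking the second, fully explicit form of Proposition~\ref{prop1}, whose proof already incorporates Lemma~\ref{determinant}. The main obstacle is purely one of bookkeeping: one must verify that replacing $z$ by $1/x$ and discarding the sub-leading derivative terms leaves only an $O(N^{-1})$ relative error, which requires tracking how many powers of $N$ arise when $e^{N\sum_j w_j}$ is integrated against the negative powers of the $w_i$; but this is the same counting already carried out in \cite{conrey2006moments} and in the proof of Theorem~\ref{518mixmomentofLamda}, so no genuinely new idea is required.
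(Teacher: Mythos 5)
Your proposal is correct and follows essentially the same route as the paper: Lemmas \ref{15} and \ref{11/10lemma1}, replacement of $z$ by its polar part to produce $(-1)^{k(k-1)/2}\Delta^2(w)$, Lemma \ref{lem1} applied symmetrically to all $2k$ shifts (which is exactly why, as you note, no analogue of the $t_1,t_2$ re-expansion from Theorem \ref{518mixmomentofLamda} is needed), multinomial expansion over the partitions $\a_i,\b_i$, and finally Proposition \ref{prop1} with $m_j=V_j$, the "moreover" part coming from its explicit form. Your bookkeeping of the signs, of the cancellation of $(-1)^{Nk}$, and of the identity $\sum_s sV_s=(2k-2M)n_1+2Mn_2$ all match the paper's computation.
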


\begin{proof}
Let $\alpha=(\alpha_1,\ldots,\alpha_{2k})$.
By Lemmas \ref{15} and \ref{11/10lemma1} , we have
\beas
&& \int_{\U(N)} |Z_A^{(n_{1})} (1)|^{2k-2M} |Z_A ^{(n_{2})}(1)|^{2M} dA_N \nonumber \\
&=& (-1)^{n_{1}(k-M)+Mn_{2}} \frac{1}{k!(2\pi \i)^k} \prod_{j=1}^{k-M} \left(\frac{d}{d\alpha_j} \frac{d}{d\alpha_{k+j}} \right)^{n_{1}}\prod_{j=k-M+1}^{k} \left(\frac{d}{d\alpha_j} \frac{d}{d\alpha_{k+j}} \right)^{n_{2}} \nonumber \\
&&\times \, \int_{|w_i|=1} e^{N \sum_{j=1}^k w_j} \prod_{i\neq j} z(w_i-w_j)^{-1} \prod_{\substack{ 1\leq i \leq k\\ 1 \leq j \leq 2k }} z(w_i-\alpha_j)  \nonumber \\
&& \times \, e^{-\frac{N}{2} \sum_{j=1}^{2k} \alpha_j} \prod_{j=1}^{k} dw_j \Bigg|_{\alpha=0} + O(N^{k^2+(2k-2M)n_{1}+2Mn_{2}-1}).
\eeas
Suppose that $\alpha_j=\alpha_j(N)$ and $|\alpha_j| \leq 1/N$ for $j=1,2,\ldots,2k$ above, the right-hand side is 
\beas
&=& (-1)^{n_{1}(k-M)+n_{2}M+\frac{k(k-1)}{2}} \frac{1}{k!(2\pi \i)^k}\prod_{j=1}^{k-M}  \left( \frac{d}{d\alpha_j} \frac{d}{d\alpha_{k+j}} \right)^{n_{1}} \prod_{j=k-M+1}^{k} \left(\frac{d}{d\alpha_j} \frac{d}{d\alpha_{k+j}} \right)^{n_{2}} \nonumber \\
&&\times \, \int_{|w_i|=1} \frac{e^{N \sum_{j=1}^kw_j} \Delta^2(w)}{\prod_{j=1}^{2k} \prod_{i=1}^k(w_i-\alpha_j)} e^{-\frac{N}{2} \sum_{j=1}^{2k} \alpha_j} \prod_{j=1}^{k} dw_j \Bigg|_{\alpha=0}
+\, O(N^{k^2+(2k-2M)n_{1}+2Mn_{2}-1}) .
\eeas
Using Lemma \ref{lem1} to the derivatives with respect to $\alpha_j$ in the above formula, we have
\bea
&=& \frac{1}{k!(2\pi \i)^{k}}(-1)^{n_{1}(k-M)+n_{2}M+\frac{k(k-1)}{2}} \int_{|w_i|=1} \frac{e^{N \sum_{j=1}^kw_j} \Delta^2(w)}{ \prod_{i=1}^kw_i^{2k}} \Bigg( \sum_{m_1+2m_2+\cdots+n_{1}m_{n_{1}}=n_{1}} \frac{n_{1}!}{m_1!\cdots m_{n_{1}}!}\nonumber\\
&&\times \, \left( - \frac{N}{2} + \sum_{l=1}^k \frac{1}{w_l} \right)^{m_1} 
\prod_{j=2}^{n_{1}} \left(\frac{1}{j} \sum_{l=1}^k \frac{1}{w_l^j} \right)^{m_j}  \Bigg)^{2k-2M}\Bigg( \sum_{r_1+2r_2+\cdots+n_{2}r_{n_{2}}=n_{2}} \frac{n_{2}!}{r_1!\cdots r_{n_{2}}!}\nonumber\\
&&\times \, \left( - \frac{N}{2} + \sum_{l=1}^k \frac{1}{w_l} \right)^{r_1} 
\prod_{j=2}^{n_{2}} \left(\frac{1}{j} \sum_{l=1}^k \frac{1}{w_l^j} \right)^{r_j}  \Bigg)^{2M} 
 \prod_{j=1}^k dw_j + O(N^{k^2+(2k-2M)n_{1}+2Mn_{2}-1}) .
\label{generalization of mixed moment:eq}
\eea
Expanding the powers, we write
\beas
&& \Bigg( \sum_{m_1+2m_2+\cdots+n_{1}m_{n_{1}}=n_{1}} \frac{n_{1}!}{m_1!\cdots m_{n_{1}}!} \left( - \frac{N}{2} + \sum_{l=1}^k \frac{1}{w_l} \right)^{m_1} 
\prod_{j=2}^{n_{1}}\left(\frac{1}{j} \sum_{l=1}^k \frac{1}{w_l^j} \right)^{m_j}  \Bigg)^{2k-2M} \\
&=& \sum_{l_1+\cdots+l_P=2k-2M} \binom{2k-2M}{l_1,\ldots,l_P} \frac{(n_{1}!)^{2k-2M}}{\prod_{i=1}^P (\a_i!)^{l_i}} \left( - \frac{N}{2} + \sum_{l=1}^k \frac{1}{w_l} \right)^{\sum_{i=1}^P l_ia_{i1}} 
\prod_{j=2}^{n_{1}} \left(\frac{1}{j} \sum_{l=1}^k \frac{1}{w_l^j} \right)^{\sum_{i=1}^P l_ia_{ij}},
\eeas
and
\beas
&&\Bigg( \sum_{r_1+2r_2+\cdots+n_{2}r_{n_{2}}=n_{2}} \frac{n_{2}!}{r_1!\cdots r_{n_{2}}!}\left( - \frac{N}{2} + \sum_{l=1}^k \frac{1}{w_l} \right)^{r_1} 
\prod_{j=2}^{n_{2}} \left(\frac{1}{j} \sum_{l=1}^k \frac{1}{w_l^j} \right)^{r_j}  \Bigg)^{2M}\\
&=&\sum_{u_1+\cdots+u_Q=2M} \binom{2M}{u_1,\ldots,u_Q} \frac{(n_{2}!)^{2M}}{\prod_{i=1}^Q (\b_i!)^{u_i}} \left( - \frac{N}{2} + \sum_{l=1}^k \frac{1}{w_l} \right)^{\sum_{i=1}^Q u_ib_{i1}} 
\prod_{j=2}^{n_{2}} \left(\frac{1}{j} \sum_{l=1}^k \frac{1}{w_l^j} \right)^{\sum_{i=1}^Q u_ib_{ij}}.
\eeas
The claim in this theorem now follows from Proposition \ref{prop1} with $m_{j}=\sum_{i=1}^{P}l_{i}a_{i,j}+\sum_{i=1}^{Q}u_{i}b_{i,j}$ for $j=1,\ldots,n_{2}$ and $m_{j}=\sum_{i=1}^{P}l_{i}a_{i,j}$ for $j=n_{2}+1,\ldots,n_{1}$.
Here we can compute that
$\sum_{s=2}^{n_1} s m_s=\sum_{s=1}^{n_1} s m_s - m_1 = 2(k-M)n_{1}+ 2Mn_{2} - m_{1}$.
\end{proof}

The following is our third main result. It provides an alternative formula for Theorem \ref{518mixmomentofLamda}. Compared with Theorem \ref{518mixmomentofLamda}, this theorem provides an effective approach to compute the joint moments when $k$ is small and $n_1, n_2$ are large. We will explain this in detail after the proof of the theorem. Theorem \ref{intro:thm3} is obtained by setting $n_1=n,n_2=0$ and switching $k-M$ and $M$.

\begin{thm}
\label{general mixed moment on Lambda}
For $k\geq 1$, $0\leq M\leq k$ and $n_{1},n_{2}\geq 0$ be integers, then we have
\beas
&&\int_{\U(N)} |\Lambda_A^{(n_{1})} (1)|^{2k-2M} |\Lambda_A ^{(n_{2})}(1)|^{2M} dA_N \\
&=& (-1)^{\frac{k(k-1)}{2}}
N^{k^2+(2k-2M)n_{1}+2Mn_{2}} 
\sum_{ \substack{ \sum_{l=1}^k s_{l,q_{1}} \leq n_{1} \\ \sum_{l=1}^k h_{l,q_{1}} = n_{1} \\
q_{1}=1,\ldots,k-M} } \sum_{ \substack{\sum_{l=1}^k s'_{l,q_{2}} \leq n_{2} \\ \sum_{l=1}^k h'_{l,q_{2}} = n_{2} \\
q_{2}=1,\ldots,M} }  (-1)^{\sum_{q_{1}=1}^{k-M}\sum_{l=1}^k s_{l,q_{1}}+\sum_{q_{2}=1}^{M}\sum_{l=1}^k s'_{l,q_{2}}}
\\
&& \times \, (n_{1}!)^{2k-2M}(n_{2}!)^{2M} 
 \left(\prod_{q_{1}=1}^{k-M}\frac{1}{(n_{1}-\sum_{l=1}^{k}s_{l,q_{1}})!}\right) \left(\prod_{q_{2}=1}^{M}\frac{1}{(n_{2}-\sum_{l=1}^{k}s'_{l,q_{2}})!}\right)
\\
&& \times \, \left(\prod_{i=1}^k \frac{1}{(2k-i+W_{i})!} \right) \prod_{1\leq i<j\leq k} (W_{j}-W_{i}-j+i) +O(N^{k^2+(2k-2M)n_{1}+2Mn_{2}-1}),
\eeas
where for $i=1,\ldots,k$,
\bea\label{definition of W}
W_{i}=\sum_{q_{1}=1}^{k-M} (s_{i,q_{1}}+h_{i,q_{1}})+\sum_{q_{2}=1}^{M} (s'_{i,q_{2}}+h'_{i,q_{2}}).
\eea
\end{thm}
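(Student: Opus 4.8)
The plan is to follow the proof of Theorem~\ref{518mixmomentofLamda} through the reduction to a contour integral, and then, in place of expanding the four derivative factors into sums over partitions (which Theorem~\ref{518mixmomentofLamda} feeds into the Bessel-determinant machinery of Proposition~\ref{prop18}), to apply Proposition~\ref{prelimiary:prop1} directly. This substitution removes the auxiliary variables $t_1,\dots,t_n$ and the modified Bessel functions, at the price of some extra bookkeeping when re-assembling the answer from the explicit identities (\ref{eq1128formula1})--(\ref{eq1128formula2}).

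First I would run the opening of the proof of Theorem~\ref{518mixmomentofLamda} unchanged: using Lemmas~\ref{15} and~\ref{11/10lemma1}, writing $|\Lambda_A^{(n)}(1)|^2=\Lambda_A^{(n)}(1)\Lambda_{A^*}^{(n)}(1)$ and noting that $\Lambda_A^{(n)}(1)$ equals $(-1)^n$ times (and $\Lambda_{A^*}^{(n)}(1)$ equals) the $n$-th $\alpha$-derivative at $0$ of $\Lambda_A(e^{-\alpha})$ (resp.\ of $\Lambda_{A^*}(e^{\alpha})$) up to lower-order derivatives contributing lower powers of $N$, one gets, modulo an $O(N^{k^2+(2k-2M)n_1+2Mn_2-1})$ error, that the moment equals $(-1)^{(k-M)n_1+Mn_2+\frac{k(k-1)}{2}}/(k!(2\pi\i)^k)$ times
\[
\prod_{j=1}^{k-M}\Big(\tfrac{d}{d\alpha_j}\tfrac{d}{d\alpha_{j+k}}\Big)^{n_1}\prod_{j=k-M+1}^{k}\Big(\tfrac{d}{d\alpha_j}\tfrac{d}{d\alpha_{j+k}}\Big)^{n_2}\int_{|w_i|=1}\frac{e^{N\sum_{j=1}^k(w_j-\alpha_j)}\Delta^2(w)}{\prod_{j=1}^{2k}\prod_{i=1}^k(w_i-\alpha_j)}\prod_{j=1}^k dw_j\bigg|_{\alpha=0},
\]
using, as there, that $|\alpha_j|\le1/N$, $z(w_i-\alpha_j)=1/(w_i-\alpha_j)+O(1)$ and $\prod_{i\ne j}z(w_i-w_j)^{-1}=(-1)^{k(k-1)/2}\Delta^2(w)+\cdots$. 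Applying Lemma~\ref{lem1} to each of the $2k$ derivatives $(\tfrac{d}{d\alpha})^n$ — those with $j\le k$ produce the factor containing $-N$, those with $j>k$ the factor without it — brings the integrand to the right-hand side of (\ref{518mixmomentofLamda:eq}), i.e.\ $\frac{e^{N\sum w_j}\Delta^2(w)}{\prod_iw_i^{2k}}$ times the product of $k-M$ copies each of the ``$-N$'' and the ``pure'' $n_1$-sums and $M$ copies each of the corresponding $n_2$-sums.

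Now, instead of expanding by the multinomial theorem, I would apply Proposition~\ref{prelimiary:prop1}: identity (\ref{eq1128formula1}) with $(r,n)=(k-M,n_1)$ and then $(M,n_2)$ for the two ``$-N$'' powers, and (\ref{eq1128formula2}) with the same parameters for the two ``pure'' powers. The integrand becomes a sum over families $(s_{l,q_1})_l$ with $\sum_l s_{l,q_1}\le n_1$ ($q_1=1,\dots,k-M$), $(h_{l,q_1})_l$ with $\sum_l h_{l,q_1}=n_1$, $(s'_{l,q_2})_l$ with $\sum_l s'_{l,q_2}\le n_2$ ($q_2=1,\dots,M$), and $(h'_{l,q_2})_l$ with $\sum_l h'_{l,q_2}=n_2$, whose general term is $\mathcal C\cdot\frac{e^{N\sum w_j}\Delta^2(w)}{\prod_l w_l^{2k+W_l}}$ with $W_l$ as in (\ref{definition of W}); the $(-N)$-part of $\mathcal C$ is $(-1)^E N^E$ with $E:=\sum_{q_1}(n_1-\sum_l s_{l,q_1})+\sum_{q_2}(n_2-\sum_l s'_{l,q_2})$, and, since $\binom nm m!=n!/(n-m)!$, the remainder of $\mathcal C$ is $(n_1!)^{2k-2M}(n_2!)^{2M}\big(\prod_{q_1}\tfrac1{(n_1-\sum_l s_{l,q_1})!}\big)\big(\prod_{q_2}\tfrac1{(n_2-\sum_l s'_{l,q_2})!}\big)$. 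The remaining integral $\tfrac1{(2\pi\i)^k}\int\frac{e^{N\sum w_j}\Delta^2(w)}{\prod_l w_l^{2k+W_l}}\prod dw_j$ is exactly Proposition~\ref{lemmaonthe integral} with $Q_l=W_l$, equal to $\sum_{\sigma\in S_k}\det_{k\times k}\big(\tfrac{N^{2k+W_{\sigma(i)}-1-(i+j-2)}}{(2k+W_{\sigma(i)}-1-(i+j-2))!}\big)$.

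The step I expect to require the most care is collapsing the $\tfrac1{k!}\sum_{\sigma\in S_k}$. The coefficient $\mathcal C$ depends on the summation indices only through the partial sums $\sum_l s_{l,q_1}$ and $\sum_l s'_{l,q_2}$, and the summation range is invariant under a common permutation $\rho$ of the index $l$ applied simultaneously in all four families, under which $W_l\mapsto W_{\rho^{-1}(l)}$; relabelling the configuration by $\rho=\sigma$ therefore shows that, for every $\sigma$, the sum over configurations of $\mathcal C\cdot\det_{k\times k}(\cdots W_{\sigma(i)}\cdots)$ equals the one with $\sigma=\mathrm{id}$, so the $\tfrac1{k!}\sum_\sigma$ reduces to the $\sigma=\mathrm{id}$ term, which Lemma~\ref{determinant} evaluates as $N^{k^2+\sum_l W_l}\prod_{i=1}^k\tfrac1{(2k-i+W_i)!}\prod_{1\le i<j\le k}(W_j-W_i-j+i)$. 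It remains to verify the bookkeeping: since $\sum_l W_l=\sum_{q_1}(\sum_l s_{l,q_1}+n_1)+\sum_{q_2}(\sum_l s'_{l,q_2}+n_2)$, every term carries exactly $N^{k^2+E+\sum_l W_l}=N^{k^2+(2k-2M)n_1+2Mn_2}$, so the asserted order holds and no further error is incurred; the signs combine to $(-1)^{(k-M)n_1+Mn_2}(-1)^E(-1)^{k(k-1)/2}=(-1)^{\frac{k(k-1)}{2}+\sum_{q_1}\sum_l s_{l,q_1}+\sum_{q_2}\sum_l s'_{l,q_2}}$; and $\mathcal C$ assembles into the coefficient displayed in the theorem. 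Note that, in contrast with Theorem~\ref{518mixmomentofLamda}, no hypothesis $n_1\ge n_2$ enters, because the two derivative orders are treated by independent applications of Proposition~\ref{prelimiary:prop1}.
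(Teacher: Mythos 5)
Your proposal is correct and follows essentially the same route as the paper: reduce to the contour integral (\ref{518mixmomentofLamda:eq}) as in Theorem \ref{518mixmomentofLamda}, apply (\ref{eq1128formula1})--(\ref{eq1128formula2}) of Proposition \ref{prelimiary:prop1} to the four powers, evaluate the resulting integral with Proposition \ref{lemmaonthe integral}, collapse $\frac{1}{k!}\sum_{\sigma\in S_k}$ by the same relabelling-invariance argument, and finish with Lemma \ref{determinant}. Your bookkeeping of the powers of $N$ and the signs matches the paper's, as does your observation that no hypothesis $n_1\geq n_2$ is needed.
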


\begin{proof}
Following the proof of Theorem \ref{518mixmomentofLamda}, we have (\ref{518mixmomentofLamda:eq}). Instead of expanding the powers, here we apply (\ref{eq1128formula1}) and (\ref{eq1128formula2}) in Proposition \ref{prelimiary:prop1} to tackle the powers. We then obtain

\bea
(\ref{518mixmomentofLamda:eq}) 
&=& \int_{|w_i|=1} \frac{e^{N \sum_{j=1}^kw_j} \Delta^2(w)}{ \prod_{i=1}^kw_i^{2k}} \Bigg(\sum_{ \substack{\sum_{l=1}^k s_{l,q} \leq n_{1} \\ 
q=1,\ldots,k-M} } \prod_{q=1}^{k-M} \Big((-N)^{n_{1}-\sum_{l=1}^k s_{l,q}} \binom{n_{1}}{\sum_{l=1}^k s_{l,q}} (\sum_{l=1}^k s_{l,q})! \Big)\nonumber\\
&&\times \, \frac{1}{\prod_{l=1}^k w_l^{\sum_{q=1}^{k-M} s_{l,q}}} \Bigg)
\Bigg(\sum_{ \substack{\sum_{l=1}^k s'_{l,q} \leq n_{2} \\ q=1,\ldots,M} } \prod_{q=1}^{M} \Big((-N)^{n_{2}-\sum_{l=1}^k s'_{l,q}} \binom{n_{2}}{\sum_{l=1}^k s'_{l,q}} (\sum_{l=1}^k s'_{l,q})! \Big)\nonumber \\
&&\times \, \frac{1}{\prod_{l=1}^k w_l^{\sum_{q=1}^{M} s'_{l,q}}} \Bigg) \Bigg(\sum_{ \substack{ \sum_{l=1}^{k} h_{l,q} = n_1 \\ 
q=1,\ldots,k-M} }  \frac{(n_1!)^{k-M}}{\prod_{l=1}^k w_l^{\sum_{q=1}^{k-M} h_{l,q}}}\Bigg) \nonumber \\
&& \times \, \Bigg(\sum_{ \substack{\sum_{l=1}^k h'_{l,q}= n_2 \\ 
q=1,\ldots,M} } \frac{(n_2!)^{M}}{\prod_{l=1}^k w_l^{\sum_{q=1}^M h'_{l,q}}} \Bigg)
\prod_{j=1}^{k}dw_{j}.\label{516night}
\eea
Combining Proposition \ref{lemmaonthe integral},
\beas
(\ref{516night}) &=&(2\pi \i)^{k}
\sum_{ \substack{ \sum_{l=1}^k s_{l,q_{1}} \leq n_{1} \\ \sum_{l=1}^k h_{l,q_{1}} = n_{1} \\
q_{1}=1,\ldots,k-M} }
\sum_{ \substack{\sum_{l=1}^k s'_{l,q_{2}} \leq n_{2} \\ \sum_{l=1}^k h'_{l,q_{2}} = n_{2} \\
q_{2}=1,\ldots,M} }
(n_{1}!)^{k-M}(n_{2}!)^{M}\\
&&\times \, \prod_{q_{1}=1}^{k-M} \left( 
(-N)^{n_{1}-\sum_{l=1}^k s_{l,q_{1}}} \binom{n_{1}}{\sum_{l=1}^k s_{l,q_{1}}} (\sum_{l=1}^k s_{l,q_{1}})! 
\right) \\
&&\times \, \prod_{q_{2}=1}^{M} \left( 
(-N)^{n_{2}-\sum_{l=1}^k s'_{l,q_{2}}} \binom{n_{2}}{\sum_{l=1}^k s'_{l,q_{2}}} (\sum_{l=1}^k s'_{l,q_{2}})! 
\right)\\
&& \times \, \sum_{\sigma\in S_k} \det_{k\times k} \Bigg( \frac{N^{2k+W_{\sigma(i)}-1-(i+j-2)}}{(2k+W_{\sigma(i)}-1-(i+j-2))!} \Bigg).
\eeas
In the determinant, we firstly extract the factor  $N^{W_{\sigma(i)}+k+1-i}$ from the $i$-th row and secondly extract the factor $N^{k-j}$ from the $j$-th column, then
\beas
(\ref{516night}) &=& (2\pi \i)^{k}N^{k^2+2(k-M)n_{1}+2Mn_{2}}
\sum_{\sigma\in S_k} \sum_{ \substack{ \sum_{l=1}^k s_{l,q_{1}} \leq n_{1} \\ \sum_{l=1}^k h_{l,q_{1}} = n_{1} \\
q_{1}=1,\ldots,k-M} }
\sum_{ \substack{ \sum_{l=1}^k s'_{l,q_{2}} \leq n_{2} \\ \sum_{l=1}^k h'_{l,q_{2}} = n_{2} \\
q_{2}=1,\ldots,M} }
\hspace{-.4cm}
(n_{1}!)^{2k-2M}(n_{2}!)^{2M}(-1)^{(k-M)n_{1}+Mn_{2}}\\
&&\times \, (-1)^{\sum_{q_{1}=1}^{k-M}\sum_{l=1}^k s_{l,q_{1}}+\sum_{q_{2}=1}^{M}\sum_{l=1}^k s'_{l,q_{2}}}
\left(\prod_{q_{1}=1}^{k-M}\frac{1}{(n_{1}-\sum_{l=1}^{k}s_{l,q_{1}})!}\right)
\left(\prod_{q_{2}=1}^{M}\frac{1}{(n_{2}-\sum_{l=1}^{k}s'_{l,q_{2}})!} \right) \\
&& \times \,\det_{k\times k} \Bigg( \frac{1}{(2k+1+W_{\sigma(i)}-i-j)!} \Bigg).
\eeas
It is not hard to check that the summations over $s_{i,q_{1}}, h_{i,q_{1}}, s'_{i,q_{2}}, h'_{i,q_{2}}$ are invariant under any permutation $\sigma$ of $i$ on the set $\{1,\ldots,k \}$. Indeed, for a fixed $\sigma$, we change the variables $s_{\sigma(i),q_{1}}=\tilde{s}_{i,q_{1}}, h_{\sigma(i),q_{1}}=\tilde{h}_{i,q_{1}}$,
$s'_{\sigma(i),q_{2}}=\tilde{s'}_{i,q_{2}},
h'_{\sigma(i),q_{2}}=\tilde{h'}_{i,q_{2}}$, then $\sum_{l=1}^k s_{l,q_{1}} = \sum_{l=1}^k \tilde{s}_{l,q_{1}}$, $\sum_{l=1}^k s'_{l,q_{2}} = \sum_{l=1}^k \tilde{s}'_{l,q_{2}}$, $\sum_{l=1}^k h_{l,q_{1}} = \sum_{l=1}^k \tilde{h}_{l,q_{1}}$, $\sum_{l=1}^k h'_{l,q_{2}} = \sum_{l=1}^k \tilde{h}'_{l,q_{2}}$. We also have $W_{\sigma(i)}=\sum_{q_{1}=1}^{k-M} (s_{\sigma(i),q_{1}}+h_{\sigma(i),q_{1}})+\sum_{q_{2}=1}^{M} (s'_{\sigma(i),q_{2}}+h'_{\sigma(i),q_{2}})
=\sum_{q_{1}=1}^{k-M} (\tilde{s}_{i,q_{1}}+\tilde{h}_{i,q_{1}})+\sum_{q_{2}=1}^{M} (\tilde{s}'_{i,q_{2}}+\tilde{h}'_{i,q_{2}}).$ So the whole expression is independent of $\sigma$ and we can set $\sigma=\text{id}$. Hence, $\sum_{\sigma\in S_k}$ can be replaced $k!$.

By Lemma \ref{determinant},
\beas
(\ref{516night}) &=& k!(2\pi \i)^{k}N^{k^2+2(k-M)n_{1}+2Mn_{2}}
\sum_{ \substack{\sum_{l=1}^k s_{l,q_{1}} \leq n_{1} \\ \sum_{l=1}^k h_{l,q_{1}} = n_{1} \\
q_{1}=1,\ldots,k-M} } 
\sum_{ \substack{ \sum_{l=1}^k s'_{l,q_{2}} \leq n_{2} \\ \sum_{l=1}^k h'_{l,q_{2}} = n_{2} \\
q_{2}=1,\ldots,M} } \hspace{-.2cm}
(n_{1}!)^{2k-2M}(n_{2}!)^{2M}(-1)^{(k-M)n_{1}+Mn_{2}}\\
&& \times \, (-1)^{\sum_{q_{1}=1}^{k-M}\sum_{l=1}^k s_{l,q_{1}}+\sum_{q_{2}=1}^{M}\sum_{l=1}^k s'_{l,q_{2}}}\prod_{q_{1}=1}^{k-M}\frac{1}{(n_{1}-\sum_{l=1}^{k}s_{l,q_{1}})!}\prod_{q_{2}=1}^{M}\frac{1}{(n_{2}-\sum_{l=1}^{k}s'_{l,q_{2}})!}\\
&& \times \, \prod_{i=1}^k \frac{1}{(2k-i+W_{i})!}\prod_{1\leq i<j\leq k} (W_{j}-W_{i}-j+i),
\eeas
where $W_{i}$ is given in (\ref{definition of W}).
Hence, we obtain the claimed result in this theorem.
\end{proof}

\begin{rem}
\label{remark: the difference}
{\rm We now explain the difference between Theorem \ref{518mixmomentofLamda} and Theorem \ref{general mixed moment on Lambda}.
When using Theorem \ref{518mixmomentofLamda}, it is required to list all tuples of the sets for the two combinatorial objects $P, Q$. This may not be an easy task when $n_1, n_2$ are large. In comparison, the summation in Theorem \ref{general mixed moment on Lambda} is easier when $k$ is small. So Theorem \ref{general mixed moment on Lambda} is computationally effective. However, the formula in Theorem \ref{518mixmomentofLamda} has a nice structure depending on determinants of matrices whose entries involve the modified Bessel functions of the first kind, which can be used to build connections between moments and Painlev\'{e} equations. This is studied in detail in our second paper \cite{keating-fei}.
}\end{rem}

The following is our fourth main result. It provides an alternative formula for Theorem \ref{generalization of mixed moment}. Theorem \ref{intro:thm4} is obtained from it by choosing $n_1=n, n_2=0$ and switching the positions of $k-M$ and $M$.

\begin{thm}\label{general 1128mixed moment}
For $k\geq 1$, $0\leq M\leq k$ and $n_{1},n_{2}\geq 0$ be integers, we have
\beas
&&\int_{\U(N)} |Z_A^{(n_{1})} (1)|^{2k-2M} |Z_A ^{(n_{2})}(1)|^{2M} dA_N \\
&=& (-1)^{n_{1}(k-M)+n_{2}M+\frac{k(k-1)}{2}}
N^{k^2+(2k-2M)n_{1}+2Mn_{2}}
\sum_{ \substack{\sum_{l=1}^k s_{l,q_{1}} \leq n_{1} \\ \sum_{l=1}^k h_{l,q_{2}} \leq n_{2}\\
q_{1}=1,\ldots,2k-2M\\
q_{2}=1,\ldots,2M}}(n_{1}!)^{2k-2M}(n_{2}!)^{2M}  \\
&& \times \, (-1/2)^{2kn_{1}-2Mn_{1}-\sum_{q_{1}=1}^{2k-2M}\sum_{l=1}^k s_{l,q_{1}}} (-1/2)^{2Mn_{2}-\sum_{q_{2}=1}^{2M}\sum_{l=1}^k h_{l,q_{2}}}
\left(\prod_{q_{1}=1}^{2k-2M}\frac{1}{(n_{1}-\sum_{l=1}^{k}s_{l,q_{1}})!}\right) \\
&& \times \, \left(\prod_{q_{2}=1}^{2M}\frac{1}{(n_{2}-\sum_{l=1}^{k}h_{l,q_{2}})!} \right)
\left(\prod_{i=1}^k \frac{1}{(2k-i+\sum_{q_{1}=1}^{2k-2M} s_{i,q_{1}}+\sum_{q_{2}=1}^{2M} h_{i,q_{2}})!} \right) \\
&& \times \, \prod_{1\leq i<j\leq k} \left( \Big(\sum_{q_{1}=1}^{2k-2M} s_{j,q_{1}}+\sum_{q_{2}=1}^{2M}h_{j,q_{2}}\Big) - \Big(\sum_{q_{1}=1}^{2k-2M}s_{i,q_{1}}+\sum_{q_{2}=1}^{2M}h_{i,q_{2}}\Big) -j+i\right)\\
&&+ \, O(N^{k^2+(2k-2M)n_{1}+2Mn_{2}-1}).
\eeas
\end{thm}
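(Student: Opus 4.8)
The plan is to mirror the proof of Theorem \ref{general mixed moment on Lambda}, but starting on the $Z$-function side. First I would follow the proof of Theorem \ref{generalization of mixed moment} verbatim up to equation (\ref{generalization of mixed moment:eq}), which expresses the left-hand side, up to an error $O(N^{k^{2}+(2k-2M)n_{1}+2Mn_{2}-1})$, as
\[
\frac{(-1)^{n_{1}(k-M)+n_{2}M+\frac{k(k-1)}{2}}}{k!(2\pi \i)^{k}}\int_{|w_{i}|=1}\frac{e^{N\sum_{j=1}^{k}w_{j}}\,\Delta^{2}(w)}{\prod_{i=1}^{k}w_{i}^{2k}}\,\mathcal{F}_{1}^{\,2k-2M}\,\mathcal{F}_{2}^{\,2M}\prod_{j=1}^{k}dw_{j},
\]
where $\mathcal{F}_{t}=\sum_{m_{1}+2m_{2}+\cdots+n_{t}m_{n_{t}}=n_{t}}\frac{n_{t}!}{m_{1}!\cdots m_{n_{t}}!}\bigl(-\tfrac{N}{2}+\sum_{l=1}^{k}\tfrac{1}{w_{l}}\bigr)^{m_{1}}\prod_{j=2}^{n_{t}}\bigl(\tfrac{1}{j}\sum_{l=1}^{k}\tfrac{1}{w_{l}^{j}}\bigr)^{m_{j}}$ for $t=1,2$. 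The decisive move, exactly as in the passage from Theorem \ref{518mixmomentofLamda} to Theorem \ref{general mixed moment on Lambda}, is to \emph{not} expand these two powers through the partition data $P,Q$, but to linearise them directly.

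Next I would apply identity (\ref{eq1128formula1}) of Proposition \ref{prelimiary:prop1} — which is valid with any complex number in place of $N$, since its proof uses $N$ only as the coefficient in $e^{-Nt}$ — with $\tfrac{N}{2}$ in place of $N$: once to $\mathcal{F}_{1}^{\,2k-2M}$ (take $n=n_{1}$, $r=2k-2M$) and once to $\mathcal{F}_{2}^{\,2M}$ (take $n=n_{2}$, $r=2M$). In the $Z$-function case all factors are of this single type — both the $Z_{A}$ and the $Z_{A^{*}}$ copies carry the $e^{-N\alpha/2}$ factor — so, unlike the $\Lambda$-function case, (\ref{eq1128formula2}) is not needed. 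Using $\binom{n}{m}m!=n!/(n-m)!$, the integrand becomes $(n_{1}!)^{2k-2M}(n_{2}!)^{2M}$ times a sum over tuples $(s_{l,q_{1}})$ with $\sum_{l}s_{l,q_{1}}\le n_{1}$ ($q_{1}=1,\ldots,2k-2M$) and $(h_{l,q_{2}})$ with $\sum_{l}h_{l,q_{2}}\le n_{2}$ ($q_{2}=1,\ldots,2M$), weighted by $\prod_{q_{1}}\frac{(-N/2)^{\,n_{1}-\sum_{l}s_{l,q_{1}}}}{(n_{1}-\sum_{l}s_{l,q_{1}})!}\prod_{q_{2}}\frac{(-N/2)^{\,n_{2}-\sum_{l}h_{l,q_{2}}}}{(n_{2}-\sum_{l}h_{l,q_{2}})!}\cdot\frac{1}{\prod_{l}w_{l}^{Q_{l}}}$, where $Q_{l}:=\sum_{q_{1}}s_{l,q_{1}}+\sum_{q_{2}}h_{l,q_{2}}$. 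The surviving $w$-integral is exactly $\frac{1}{(2\pi\i)^{k}}\int_{|w_{i}|=1}\frac{e^{N\sum w_{j}}\Delta^{2}(w)}{\prod_{l}w_{l}^{2k+Q_{l}}}\prod dw_{j}$, which Proposition \ref{lemmaonthe integral} evaluates as $\sum_{\sigma\in S_{k}}\det_{k\times k}\!\bigl(\tfrac{N^{\,2k+Q_{\sigma(i)}-1-(i+j-2)}}{(2k+Q_{\sigma(i)}-1-(i+j-2))!}\bigr)$.

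Then I would extract $N^{\,Q_{\sigma(i)}+k+1-i}$ from row $i$ and $N^{\,k-j}$ from column $j$, so that each determinant equals $N^{\,\sum_{i}Q_{i}+k^{2}}\det_{k\times k}\!\bigl(\tfrac{1}{(2k+1+Q_{\sigma(i)}-i-j)!}\bigr)$, using $\sum_{i}(k+1-i)+\sum_{j}(k-j)=k^{2}$ and $\sum_{i}Q_{\sigma(i)}=\sum_{i}Q_{i}$. The $N$-powers of the weight amount to $N^{\,(2k-2M)n_{1}+2Mn_{2}-\sum_{i}Q_{i}}$, so the $Q_{i}$-dependence cancels and the global factor $N^{k^{2}+(2k-2M)n_{1}+2Mn_{2}}$ emerges; the leftover $(-1/2)$-powers bundle into $(-1/2)^{\,2kn_{1}-2Mn_{1}-\sum_{q_{1}}\sum_{l}s_{l,q_{1}}}(-1/2)^{\,2Mn_{2}-\sum_{q_{2}}\sum_{l}h_{l,q_{2}}}$, while the prefactor $(-1)^{n_{1}(k-M)+n_{2}M+\frac{k(k-1)}{2}}$ persists to the final answer (in the $\Lambda$-case a factor $(-1)^{(k-M)n_{1}+Mn_{2}}$ from the $(-N)^{\bullet}$-powers cancelled part of it; here the corresponding exponents $(2k-2M)n_{1},\,2Mn_{2}$ are even, so no such cancellation occurs). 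As in the proof of Theorem \ref{general mixed moment on Lambda}, the whole summand is invariant under the simultaneous relabelling $s_{\sigma(i),q_{1}}\mapsto s_{i,q_{1}}$, $h_{\sigma(i),q_{2}}\mapsto h_{i,q_{2}}$ — the constraints $\sum_{l}s_{l,q_{1}}\le n_{1}$, the weight (which depends only on $\sum_{l}s_{l,q_{1}}$ and $\sum_{l}h_{l,q_{2}}$), and $Q_{\sigma(i)}$ are all symmetric under it — so $\sum_{\sigma\in S_{k}}$ collapses to $k!$ copies of the $\sigma=\mathrm{id}$ term, cancelling the $1/k!$. Finally I would invoke Lemma \ref{determinant} with $m_{i}=Q_{i}=\sum_{q_{1}}s_{i,q_{1}}+\sum_{q_{2}}h_{i,q_{2}}$ to replace the determinant by $\prod_{i}\frac{1}{(2k-i+Q_{i})!}\prod_{i<j}(Q_{j}-Q_{i}-j+i)$; assembling all factors produces the claimed formula.

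This is in essence a bookkeeping computation: the only genuinely new ingredient over Theorem \ref{general mixed moment on Lambda} is that one applies (\ref{eq1128formula1}) (with $N/2$) to two factors rather than (\ref{eq1128formula1}) and (\ref{eq1128formula2}) to four, which is a simplification. The step that needs care is reconciling the powers of $N$ and of $-1/2$ arising from three distinct sources — the coefficient produced by Proposition \ref{prelimiary:prop1}, the row/column extraction inside Proposition \ref{lemmaonthe integral}, and the $-N/2$-versus-$-N$ discrepancy relative to the $\Lambda$-case — together with verifying that the $\sigma$-invariance argument goes through verbatim. There is no conceptual obstacle, and every dropped term stays within $O(N^{k^{2}+(2k-2M)n_{1}+2Mn_{2}-1})$.
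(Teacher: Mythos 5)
Your proposal is correct and follows essentially the same route as the paper: reduce to equation (\ref{generalization of mixed moment:eq}) via the proof of Theorem \ref{generalization of mixed moment}, linearise both powers with identity (\ref{eq1128formula1}) of Proposition \ref{prelimiary:prop1} (with $N/2$ in place of $N$), and then repeat the analysis of (\ref{516night}) — Proposition \ref{lemmaonthe integral}, the row/column extraction of powers of $N$, the $\sigma$-invariance collapse of $\sum_{\sigma\in S_k}$ to $k!$, and Lemma \ref{determinant}. Your bookkeeping of the $(-1/2)$-powers and of why the prefactor $(-1)^{n_{1}(k-M)+n_{2}M+\frac{k(k-1)}{2}}$ survives uncancelled (unlike the $\Lambda$-case) is accurate and in fact supplies details the paper leaves implicit.
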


\begin{proof}
Following the proof of Theorem \ref{generalization of mixed moment}, we have (\ref{generalization of mixed moment:eq}). We now apply (\ref{eq1128formula1}) in Proposition \ref{prelimiary:prop1} to tackle the powers and obtain
\beas
(\ref{generalization of mixed moment:eq}) &=& \frac{1}{k!(2\pi \i)^{k}}(-1)^{n_{1}(k-M)+n_{2}M+\frac{k(k-1)}{2}} \int_{|w_i|=1} \frac{e^{N \sum_{j=1}^kw_j} \Delta^2(w)}{ \prod_{i=1}^kw_i^{2k}} \nonumber \\
&& 
\times \, \Bigg(\sum_{ \substack{ \sum_{l=1}^k s_{l,q} \leq n_{1} \\ 
q=1,\ldots,2k-2M} } \prod_{q=1}^{2k-2M} \left((-N/2)^{n_{1}-\sum_{l=1}^k s_{l,q}} \binom{n_{1}}{\sum_{l=1}^k s_{l,q}} (\sum_{l=1}^k s_{l,q})! \right)\frac{1}{\prod_{l=1}^k w_l^{\sum_{q=1}^{2k-2M} s_{l,q}}} \Bigg)
\nonumber \\
&&
\times \, \Bigg(\sum_{ \substack{\sum_{l=1}^k h_{l,q} \leq n_{2} \\ q=1,\ldots,2M} }\prod_{q=1}^{2M} \left((-N/2)^{n_{2}-\sum_{l=1}^k h_{l,q}} \binom{n_{2}}{\sum_{l=1}^k h_{l,q}} (\sum_{l=1}^k h_{l,q})! \right)\frac{1}{\prod_{l=1}^k w_l^{\sum_{q=1}^{2M} h_{l,q}}} \Bigg) \nonumber \\
&&
\times \, \prod_{j=1}^{k}dw_{j}+ O(N^{k^2+(2k-2M)n_{1}+2Mn_{2}-1}).
\eeas
Using a similar argument to the analysis of (\ref{516night}), we obtain the claimed result in this theorem.
\end{proof}

To finish, we prove Proposition \ref{intro:prop3}.

\begin{lem}
[Equation (6) of \cite{tauber1963multinomial}]\label{multicombinatorial numbers}
For all $n, m, k_1,\ldots,k_m \in \mathbb{N}$, with
$k_1+\cdots+k_m=n$, $n\geq 1$ and $m\geq 2$, we have
\[
\binom{n}{k_1,\ldots,k_m}
=\sum_{i=1}^m
\binom{n-1}{k_1,\ldots,k_i-1,\ldots,k_m}.
\]
\end{lem}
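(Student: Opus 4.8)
The plan is to verify this Pascal-type recurrence by a direct algebraic manipulation of the right-hand side, working from the definition $\binom{n}{k_1,\ldots,k_m} = n!/(k_1!\cdots k_m!)$ recorded in the notation section. First I would fix the convention that any summand on the right with $k_i = 0$ is interpreted as zero; this matches the combinatorial reading given below and amounts to the usual convention $1/(-1)! = 0$. With this convention only the indices $i$ for which $k_i \geq 1$ contribute, so the negative-argument factorial never actually arises.

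For each such $i$, I would rewrite the summand by restoring the missing factor in the $i$-th slot, using $(n-1)! = n!/n$ and $(k_i-1)! = k_i!/k_i$:
\[
\binom{n-1}{k_1,\ldots,k_i-1,\ldots,k_m}
= \frac{(n-1)!}{k_1!\cdots (k_i-1)!\cdots k_m!}
= \frac{k_i}{n}\cdot\frac{n!}{k_1!\cdots k_m!}
= \frac{k_i}{n}\binom{n}{k_1,\ldots,k_m}.
\]
Summing over $i$ and pulling out the (index-independent) multinomial coefficient then gives
\[
\sum_{i=1}^m \binom{n-1}{k_1,\ldots,k_i-1,\ldots,k_m}
= \left(\frac{1}{n}\sum_{i=1}^m k_i\right)\binom{n}{k_1,\ldots,k_m}
= \binom{n}{k_1,\ldots,k_m},
\]
where the last equality uses the hypothesis $k_1+\cdots+k_m = n$. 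This is exactly the claimed identity.

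As a sanity check and alternative, I would note the combinatorial proof: $\binom{n}{k_1,\ldots,k_m}$ counts length-$n$ words over an $m$-letter alphabet in which letter $i$ occurs exactly $k_i$ times. Conditioning on the final letter, if that letter is $i$ (possible only when $k_i \geq 1$) the preceding $n-1$ letters form a word counted by $\binom{n-1}{k_1,\ldots,k_i-1,\ldots,k_m}$, and summing over the $m$ choices for the last letter partitions all such words. There is no genuine obstacle in this lemma; the only point requiring care is the $k_i = 0$ convention, and both the algebraic weight $k_i/n$ and the combinatorial conditioning dispose of it automatically by contributing nothing.
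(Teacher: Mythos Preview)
Your proof is correct. Note, however, that the paper does not supply its own proof of this lemma: it is simply quoted as Equation~(6) of \cite{tauber1963multinomial} and used as a black box in the subsequent proof of Proposition~\ref{intro:prop3}. So there is no ``paper's approach'' to compare against here; your algebraic derivation (and the combinatorial double-count you give as a sanity check) are both standard and complete arguments for this Pascal-type recurrence.
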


\begin{proof}[Proof of Proposition \ref{intro:prop3}]
We first show $a_{1,1}(n,0)$. Based on Theorem \ref{intro:thm3} with $k=1$,
it suffices to prove
\beas
(n!)^2\sum_{s=0}^{n}(-1)^s\frac{1}{(n-s)!}\frac{1}{(n+1+s)!}=\frac{1}{2n+1}.
\eeas 
Equivalently,
\beas
\frac{(n!)^2}{(2n+1)!}\sum_{s=0}^{n}(-1)^{n-s}\binom{2n+1}{s}=\frac{1}{2n+1}.
\eeas 
Using Lemma \ref{multicombinatorial numbers} with $m=2$, that is,  
$\binom{2n+1}{s}=\binom{2n}{s}+\binom{2n}{s-1}$, it is easy to see the above equality holds.

In the following, we show $b_{1,1}(n,0)$.
Let $k=M=1$ in Theorem \ref{intro:thm4}, we then have
\beas
&&\int_{\U(N)} |Z_A^{(n)} (1)|^{2}dA_N\nonumber\\
&=& (-1)^{n}
N^{2n+1} (n!)^2 \sum_{i,j=0}^{n}(-1/2)^{2n-i-j}\frac{1}{(n-i)! (n-j)! (i+j+1)!}N^{2n+1}+O(N^{2n})\nonumber \\
&=& (-1)^{n}
N^{2n+1} \frac{(n!)^2}{(2n+1)!} \sum_{i,j=0}^{n}(-1/2)^{i+j}\binom{2n+1}{i,j,2n+1-i-j}N^{2n+1}+O(N^{2n}).
\eeas
Applying Lemma \ref{multicombinatorial numbers} with $m=3$,
\beas
&&\sum_{i,j=0}^{n}(-1/2)^{i+j}\binom{2n+1}{i,j,2n+1-i-j}\\
&=&\sum_{i=1}^{n}\sum_{j=0}^{n}(-1/2)^{i+j}
\binom{2n}{i-1,j,2n+1-i-j} +\sum_{i=0}^{n}\sum_{j=1}^{n}(-1/2)^{i+j}\binom{2n}{i,j-1,2n+1-i-j}\\
&&
+\, \sum_{i=0}^{n}\sum_{j=0}^{n}(-1/2)^{i+j}\binom{2n}{i,j,2n-i-j}\\
&=& 2\sum_{i=0}^{n}\sum_{j=1}^{n}(-1/2)^{i+j}
\binom{2n}{i,j-1,2n+1-i-j} +\sum_{i=0}^{n}\sum_{j=0}^{n-1}(-1/2)^{i+j}\binom{2n}{i,j,2n-i-j} \\
&& + \, \sum_{i=0}^{n}(-1/2)^{i+n}\frac{(2n)!}{n!i!(n-i)!}\\
&=& \sum_{i=0}^{n}(-1/2)^{i+n}\frac{(2n)!}{n!i!(n-i)!}
=\frac{(-1)^{n}(2n)!}{(n!)^22^{n}}\sum_{i=0}^{n}(-1/2)^{i}\binom{n}{i}
=\frac{(-1)^{n}(2n)!}{(n!)^24^{n}}.
\eeas
Putting this all together, we obtain the claimed result in this proposition.
\end{proof}

\section{Numerical data}

The following are $b_{k,k}(2,0)$ for $k=1,\ldots,6$:
\beas
&\displaystyle \frac{1}{2^4\cdot 5}& \\
&\displaystyle \frac{17}{2^{10} \cdot3^{3} \cdot5 \cdot7 \cdot11}& \\
&\displaystyle \frac{11593}{2^{18} \cdot3^{7} \cdot5^{2} \cdot7^{3} \cdot11 \cdot13 \cdot17}& \\
&\displaystyle \frac{103 \cdot413129}{2^{28} \cdot3^{12} \cdot5^{5} \cdot7^{3} \cdot11^{2} \cdot13^{2} \cdot17 \cdot19 \cdot23}& \\
&\displaystyle \frac{2616269 \cdot322433}{2^{40} \cdot3^{17} \cdot5^{8} \cdot7^{5} \cdot11^{4} \cdot13^{3} \cdot17^{2} \cdot19 \cdot23 \cdot29}& \\
&\displaystyle \frac{53 \cdot5830411 \cdot94098709}{2^{54} \cdot3^{24} \cdot5^{13} \cdot7^{8} \cdot11^{4} \cdot13^{4} \cdot17^{3} \cdot19^{2} \cdot23 \cdot29 \cdot31}& \\
\eeas

$a_{2,1}(0,0)=1/12$.

$a_{2,1}(1,n_2)$ for $n_2=0,1$:
\[
{\frac{1}{45}}, {\frac{61}{10080}}.
\]

$a_{2,1}(2,n_2)$ for $n_2=0,1,2$:
\[
{\frac{1}{112}}, {\frac{1133}{453600}}, {\frac{449}{415800}}.
\]

$a_{2,1}(3,n_2)$ for $n_2=0,1,2,3$:
\[
{\frac{1}{225}}, {\frac{529}{415800}}, {\frac{3943}{6879600}}, {\frac{48953}{155232000}}.
\]

$b_{2,1}(0,0)=1/12$.

$b_{2,1}(1,n_2)$ for $n_2=0,1$:
\[
{\frac{1}{720}},{\frac{1}{6720}}.
\]

$b_{2,1}(2,n_2)$ for $n_2=0,1,2$:
\[
{\frac{1}{4032}}, {\frac{19}{3628800}}, {\frac{17}{10644480}}.
\]

$b_{2,1}(3,n_2)$ for $n_2=0,1,2,3$:
\[
{\frac{1}{57600}}, {\frac{19}{10644480}}, {\frac{127}{1761177600}}, {\frac{41}{1419264000}}.
\]

\bibliographystyle{plain}
\bibliography{main}

\end{document}